\title{Reachability in $3$-VASS is in \textsc{Tower}} 
\author{Qizhe Yang}{Shanghai Normal University}{qzyang@shnu.edu.cn}{}
{}
\author{Yuxi Fu}{BASICS, Shanghai Jiaotong University}{fu-yx@cs.sjtu.edu.cn}{}{}
\authorrunning{Qizhe Yang, Yuxi Fu} 
\keywords{Reachability, 3-VASS, KLMST algorithm, Linear Path Scheme} 
\newcommand{\defn}{\stackrel{\text{def}}{=}}
\begin{document}

\maketitle

\begin{abstract}
The reachability problem for vector addition systems with states (VASS) is \textsc{Ackermann}-complete.
For every $k\geq 3$, a completeness result for the $k$-dimensional VASS reachability problem is not yet available.
It is shown in this paper that the $3$-dimensional VASS reachability problem is in \textsc{Tower}, improving upon  the current best upper bound $\mathbf{F}_7$ established by Leroux and Schmidt in 2019.
\end{abstract}

\section{Introduction}\label{Sec-Introduction}

Petri net theory has been studied for over half a century.
As a model for concurrency and causality, Petri net models find a wide range of applications in system specification and verification~\cite{carvalho2020mobile,heiner2008petri}.
Two equivalent formulations of Petri nets, vector addition systems $(\text{VAS})$ and vector addition systems with states $(\text{VASS})$ have been studied for a long time.
In vector addition systems configurations are formulated as vectors on non-negative integers. Computation rules of systems are captured by vectors on integers. A computation is a sequence of legal transitions of configurations. The reachability problem asks whether a target configuration is reachable from an initial configuration in a given VAS. The problem holds a central position in the study of VAS, as numerous issues in the areas of language, logic, and concurrency can be effectively reduced to this particular problem~\cite{schmitz2016complexity}.

The decidability result of the VASS reachability problem is among the most significant theoretical breakthroughs in computer science.
In the 1970s, some decidability results~\cite{hopcroft1979reachability,leeuwen1974partial} for low dimensional VASSes have been established.
The initial work of Sacerdote and Tenney~\cite{sacerdote1977decidability} gives an incomplete proof of the decidability of the problem.
A complete decidability proof of the problem was given in the early 1980s by Mayr~\cite{mayr1981complexity}.
The involved proof was later refined by Kosaraju~\cite{kosaraju1982decidability} and Lambert~\cite{lambert1992structure}.
The algorithm is now referred to as KLMST decomposition.
In recent years, Leroux gives another proof of the decidability in a more logic setting using Presburger invariants~\cite{leroux2012vector}.

In 2015 Leroux and Schmitz obtained the first upper bound for the KLMST algorithm, pointing out that it is in the cubic-Ackermann complexity class $\mathbf{F}_{\omega^{3}}$~\cite{leroux2015demystifying}.
The upper bound was improved to $\mathbf{F}_{\omega}$~\cite{leroux2019reachability} by Leroux and Schmitz in 2019.
The \textbf{EXPSPACE}-hardness of the problem was shown by Lipton in 1976~\cite{lipton1976reachability}.
For many years this has been the only lower bound we knew.
Until in 2018 Czerwi{\'n}ski et al came up with the ``Amplifier'' technique and applied it to obtain a non-elementary lower bound~\cite{czerwinski2020reachability}.
This was improved to $\mathbf{F}_{\omega}$ in 2022, independently by several groups~\cite{czerwinski2021reachability,lasota2021improved,leroux2021reachability}. The Ackermann-completeness of the problem is thus established.

Reachability of VASSes with fixed dimension has also gathered widespread attention.
Completeness results have been established for low dimensional VASS.
Haase {\em et al} showed that the reachability of $1$-dimensional VASS is NP-complete~\cite{haase2009reachability}.
In the $2$-dimensional case, Blondin {\em et al} proved that the problem is PSPACE complete~\cite{blondin2015reachability} and Englert, Lazi{\'c} and Totzke pointed out that the problem is NL-complete~\cite{englert2016reachability} if unary encoding is used.
It has been known from the early stage that while reachability sets are semi-linear in the two dimensional case, they are not semi-linear for VASSes in three or more dimensions~\cite{hopcroft1979reachability}.
This stops us from generalizing the proof of the completeness result for $2$-VASS to higher dimensional VASS.
A completeness characterization of the $k$-VASS reachability problem, where $k\ge3$, remains open.
Currently it is known that the reachability problem for $d$-VASS with $d\geq 3$ is in $\mathbf{F}_{d+4}$~\cite{leroux2019reachability}.
For the lower bound Leroux proved in~\cite{leroux2021reachability} that reachability in $(2d{+}4)$-VASS is $\mathbf{F}_d$-hard for $d\ge3$, and Czerwi{\'n}ski proved that reachability in $8$-VASS is non-elementary~\cite{czerwinski2020fixreachability,czerwinski2022lower}. For $3$-VASS in particular, the best lower bound is \textbf{PSPACE}-hard~\cite{blondin2015reachability} and the best upper bound is $\mathbf{F}_7$~\cite{leroux2019reachability}.
There remains a substantial gap between the known lower bounds and the known upper bounds for the reachability problem of VASSes with fixed dimension.

The main contribution of the present paper is stated in the following theorem.
\begin{theorem}\label{MAIN-THEOREM}
Reachability in $3$-VASSes is in \textsc{Tower}.
\end{theorem}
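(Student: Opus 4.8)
The plan is to reduce positive reachability to the existence of a \emph{linear path scheme} (LPS) witness of tower-bounded size, and then to decide the existence of such a witness within \textsc{Tower}. Recall that an LPS is a path expression $\pi_0\theta_1^{*}\pi_1\theta_2^{*}\cdots\theta_m^{*}\pi_m$ made of base paths $\pi_i$ and simple cycles $\theta_j$; it is positive for the instance exactly when the associated system of linear Diophantine constraints (one multiplicity variable per cycle, plus the non-negativity requirements along the run) is feasible. If I can show that every reachable instance is witnessed by an LPS in which the number of starred factors $m$ and the lengths of the $\pi_i,\theta_j$ are bounded by a tower function of the input, then enumerating candidate schemes and testing feasibility of the attached integer programs keeps the whole procedure in \textsc{Tower}, since integer feasibility is decidable within resources elementary in the program size and \textsc{Tower} is closed under such costs. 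The entire problem therefore reduces to bounding the combinatorial complexity of a witnessing scheme.

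To construct such a scheme I would run a KLMST-style decomposition, but specialised to three counters so as to replace the generic decomposition rank---the quantity responsible for the $\mathbf{F}_7$ bound of Leroux and Schmitz---by a bound read off from the geometry of $\mathbb{N}^3$. First I would fix a threshold $B$ and cut a candidate run at the time points where a counter crosses $B$. On a maximal segment on which some counter stays above $B$ its non-negativity constraint is inert, so the segment behaves as a system of strictly smaller effective dimension: with all three counters high it is plain $\mathbb{Z}^3$-reachability, solved by a single integer program; with two counters high it is a $2$-VASS fragment carrying an extra free $\mathbb{Z}$-coordinate; and the fully constrained pieces are genuine $2$-VASS fragments, for which LPS witnesses of controlled size are classical. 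Recursing this dimension descent and concatenating the resulting sub-schemes produces a global LPS for the original run.

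The hard part will be to bound the \emph{number} of segments and the threshold $B$ so that this decomposition costs only a tower, rather than an Ackermannian blow-up. Concretely I must show that a counter cannot cross the threshold too many times before the run contains a pumpable cycle whose iteration can absorb an entire family of regime changes into a single starred factor, thereby capping the alternation count; quantifying ``too many'' is exactly where the low dimension has to be exploited. The subtlety---and the reason the whole argument cannot rest on any periodicity shortcut---is that the reachability set of a $3$-VASS is not semilinear (Hopcroft--Pansiot), so the low counter may perform long zig-zags while the two high counters sweep a tower-sized window; the cap on these excursions, and hence the admissible value of $B$, must come from a quantitative pumping and normalisation argument rather than from any semilinear structure. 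I expect the interplay between this excursion bound, the at most three-fold descent in effective dimension, and the resulting size of the integer programs to be where essentially all of the technical effort is concentrated.
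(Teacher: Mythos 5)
Your high-level strategy---interleave a KLMST-style decomposition with linear path scheme flattening in lower dimension so that the recursion depth, and hence the witness size, stays tower-bounded---is indeed the strategy of the paper. But the step you flag as ``the hard part'' is not a technical detail to be filled in later; it is the entire content of the proof, and your proposed route to it does not supply the needed idea. Your descent is driven by which counters are large (cutting runs at crossings of a threshold $B$), after which you must bound the number of regime changes. The paper's descent is driven by a different quantity: the dimension of the vector space $V_G$ spanned by the displacements of cycles. The key observation (Lemma~\ref{decomp-2023-05-11}) is that a single round of the standard KLMST decomposition applied to a bounded strongly connected $3$-dimensional component already yields components whose cycle space is a proper subspace, i.e.\ \emph{effectively $2$-dimensional} $3$-VASSes; the companion result (Theorem~\ref{theorem3vasslps}) shows that every such component admits LPS witnesses of size polynomial in the component, \emph{independently of its input and output locations}, so it can be replaced by an LPS once and never decomposed again. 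That is what caps the recursion depth at a linear function of the number of transitions and turns $\mathbf{F}_7$ into a tower. Nothing in your proposal plays this role: the number of threshold crossings is exactly the kind of quantity that the general KLMST analysis can only control with an Ackermannian ranking function, and you give no mechanism for absorbing families of crossings into finitely many starred factors.

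Two further concrete gaps. First, your intermediate regimes (``two counters high,'' leaving a low-dimensional VASS with free $\mathbb{Z}$-coordinates) are not covered by the classical $2$-VASS flattening theorem of Blondin et al., which your proposal invokes as ``classical''; the paper needs a genuinely new flattening theorem for effectively $2$-dimensional $3$-VASSes, proved by projecting the cycle hyperplane onto a carefully chosen axis plane (Lemma~\ref{lemma2dimvectorspace}) together with a separate region analysis near the coordinate planes (Lemmas~\ref{theorembounded3vass} and~\ref{theoremcombbounded3vass}). Second, the paper's final object is not a single global LPS: the output is a \emph{$3$-normal} KLM sequence in which the non-flattened components remain fully $3$-dimensional but are standard, unbounded and pumpable, and the witness bound for those comes from the Kosaraju--Lambert pumping argument (Theorem~\ref{theoremwitnessalmostnormal}), not from flattening. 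Insisting on one flat witness for the whole run is harmless only once the tower bound on run length is already established, so it cannot serve as the route to that bound.
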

The theorem is proved by modifying the well-known KLMST decomposition algorithm for the general VASS reachability problem.
Our algorithm incorporates the linear path scheme characterization for $2$-VASS to the general KLMST algorithm.
We show that a kind of special $3$-VASS, to be called effectively $2$-dimensional, has the linear path scheme property.
Based on this observation the algorithm replaces every newly generated effectively $2$-dimensional component immediately by a linear path scheme, preventing further decomposition of the effectively $2$-dimensional component.
In this way the depth of decomposition tree is bounded by a linear function, hence the \textsc{Tower} upper bound.
The techniques developed for proving Theorem~\ref{MAIN-THEOREM} can be applied to derive the following result.
\begin{theorem}\label{cor:reachabilityforeffectively2dim3vass}
Reachability in the effectively $2$-dimensional $3$-VASSes is in $\mathbf{EXPSPACE}$.
\end{theorem}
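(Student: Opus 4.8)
The plan is to reduce reachability in an effectively $2$-dimensional $3$-VASS to a feasibility question over a \emph{linear path scheme}, and then to settle that feasibility question by integer programming within exponential space. Write the input as a $3$-VASS $V$ with source configuration $s$ and target configuration $t$. By the linear path scheme property established for effectively $2$-dimensional $3$-VASS (the structural ingredient behind Theorem~\ref{MAIN-THEOREM}), the configuration $t$ is reachable from $s$ in $V$ if and only if some reachability witness follows one of a family of linear path schemes $\pi = w_0 u_1^{*} w_1 u_2^{*} \cdots u_k^{*} w_k$, where the number $k$ of distinct cycles, the lengths of the base words $w_i$ and of the cycle words $u_i$, and the number of schemes in the family are all at most exponential in the size of $V$. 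First I would make this quantitative bound explicit, so that each scheme in the family admits an exponential-size description.

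Next I would fix a single scheme $\pi$ and translate the assertion ``$\pi$ admits a run from $s$ to $t$'' into an existential arithmetic constraint. Introduce one nonnegative integer variable $n_i$ counting how many times the cycle $u_i$ is taken. The requirement that the final configuration equals $t$ becomes three linear equations, one per coordinate, in the variables $n_1,\dots,n_k$, with coefficients given by the coordinatewise displacements $\Delta(u_i)$ and $\Delta(w_i)$. The requirement that every intermediate configuration stay nonnegative in all three coordinates is expressed, exactly as in the $2$-VASS theory, by linear inequalities: along each iterated cycle the minimum coordinate value is attained at a position determined by the sign of $\Delta(u_i)$ in that coordinate, so nonnegativity across all iterations reduces to a bounded set of inequalities in the $n_i$ and in the running partial sums. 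Collecting these yields an existential Presburger formula $\varphi_\pi(n_1,\dots,n_k)$ whose models are precisely the valid iteration vectors, and whose size is polynomial in the description of $\pi$, hence exponential in the size of $V$.

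For the complexity I would then argue as follows. Satisfiability of an existential Presburger formula, equivalently feasibility of the corresponding integer linear program, is decidable nondeterministically in time polynomial in the binary size of the formula; in particular any satisfiable instance has a solution of bit-length polynomial in that size. Applied to $\varphi_\pi$, this shows that whenever a valid iteration vector exists, one exists in which each exponent $n_i$ has bit-length exponential in the size of $V$, so that the iteration counts themselves may be doubly exponential. Such a witness can be guessed and verified, and the resulting check runs in nondeterministic exponential time, that is in $\mathbf{NEXPTIME}\subseteq\mathbf{EXPSPACE}$. Since the family of candidate schemes has at most exponentially many members, each of exponential description size, I would enumerate them one at a time, reusing space, and accept as soon as some $\varphi_\pi$ is satisfiable. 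The entire procedure then runs in exponential space, establishing Theorem~\ref{cor:reachabilityforeffectively2dim3vass}.

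The main obstacle I anticipate is not the integer-programming step, which is routine once the formula is in hand, but controlling the third coordinate in the reduction to linear path schemes. Unlike a genuine $2$-VASS, an effectively $2$-dimensional $3$-VASS still carries a third counter that must remain nonnegative throughout, and it is precisely this coordinate that forces the path schemes to be exponentially long rather than polynomial. The delicate point is to certify that the linear path scheme family, together with the exponential bounds on $k$ and on the magnitudes of the iteration counts, genuinely captures all reachability witnesses while keeping every coordinate nonnegative simultaneously; once that structural bound is secured, the $\mathbf{EXPSPACE}$ upper bound follows from the standard analysis of existential Presburger satisfiability.
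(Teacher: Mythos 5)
Your proposal is correct and follows essentially the same route as the paper: invoke the linear path scheme characterization for effectively $2$-dimensional $3$-VASS (the paper's Theorem~\ref{theorem3vasslps}), encode the nonnegativity and displacement constraints of a fixed scheme as a system of linear Diophantine (in)equations (the paper's LPS system $\mathscr{E}_{\rho}$ of Definition~\ref{LPSS} and Proposition~\ref{LPSCh}), bound minimal solutions via integer programming (Pottier's lemma, Corollary~\ref{min-solu}), and guess-and-check in exponential space. Your added remarks on binary encoding, on the doubly exponential magnitude of the iteration counts, and on the delicacy of the third coordinate are consistent with how the paper handles these points in Section~\ref{sec-2023-03-12}.
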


The rest of the paper is organized as follows.
Section~\ref{Sec-Preliminary} states the preliminaries.
Section~\ref{sectionlinearpathscheme} reviews the linear path schemes for the $2$-VASSes and extends the technique to the effectively $2$-dimensional $3$-VASSes.
Section~\ref{sectionklmstalgo} recalls the main KLMST constructions.
Section~\ref{sectionalmostnormal} defines the $3$-normal KLM sequences and derives a tower space algorithm for the $3$-VASSes, which is the main contribution of the paper.
Section~\ref{sectionconclusion} makes a few comments.

\section{Preliminaries}\label{Sec-Preliminary}

Let $\mathbb{N}$ be the set of natural numbers (nonnegative integers) and $\mathbb{Z}$ the set of integers.
Let $\mathbb{V}$ denote the set of variables for nonnegative integers.
For $L\in\mathbb{N}\setminus\{0\}$ the notation $[L]$ stands for the set $\{1,\ldots,L\}$ and $[L]_{0}$ for $\{0\}\cup[L]$.
For a finite set $S$ let $|S|$ denote the number of elements of $S$.
We introduce a super number $\omega$ with $n<\omega$ for all $n\in\mathbb{N}$.
Intuitively $\omega$ stands for a number that can be as large as necessary.
Let $\mathbb{N}_{\omega}=\mathbb{N}\cup\{\omega\}$ be the extended set of natural numbers.
The partial order $\sqsubseteq$ over $\mathbb{N}_{\omega}$ is defined as follows: $x\sqsubseteq y$ whenever $y\in\{x,\omega\}$.

We write $\mathbf{m},\mathbf{n}$ for $d$-dimensional vectors in $\mathbb{N}^d$, $\mathbf{u},\mathbf{v}$ for vectors in $\mathbb{N}_{\omega}^d$, and $\mathbf{x},\mathbf{y}$ for vectors in $\mathbb{V}^d$.
For $i\in[d]$ we write for example $\mathbf{a}(i)$ for the {\em $i$-th entry} of $\mathbf{a}$.
Let $\mathbf{1}=(1,\ldots,1)^{\dag}$ and $\mathbf{0}=(0,\ldots,0)^{\dag}$, where $(\_)^{\dag}$ is the transposition operator.
We write $\sigma$ for a finite sequence of vectors and $|\sigma|$ for the length of the sequence.
For $i\in[|\sigma|]$ we write $\sigma[i]$ for the $i$-th element that appears in $\sigma$.
The notation  $\sigma[i,\ldots,j]$ is for the subsequence $\sigma[i]\sigma[i+1]\ldots\sigma[j]$ if $i\leq j$ and is for $\epsilon$ if $i>j$.

Recall that the {\em $1$-norm} $\|\mathbf{m}\|_1$ of $\mathbf{m}$ is $\sum_{i\in[d]}|\mathbf{m}(i)|$, and the {\em $1$-norm} $\|A\|_1$ of an integer matrix $A$ is $\sum_{i,j}|A(i,j)|$.
The $1$-norm $\|\mathbf{u}\|_1$ of $\mathbf{u}\in\mathbb{N}_{\omega}^d$ is defined by $\sum_{i\in[d],\ \mathbf{u}(i)\neq\omega}|\mathbf{u}(i)|$, {\em ignoring the $\omega$ entries}.

\subsection{Non-Elementary Complexity Classes}

Reachability in VASS is not elementary even in fixed dimensions~\cite{czerwinski2021reachability}.
To characterize the problem complexity, one needs complexity classes beyond the elementary classes.
Schmidt introduced an ordinal indexed class of complexity classes $\textbf{F}_3,\textbf{F}_4,\ldots,\textbf{F}_{\omega},\ldots,$ $\textbf{F}_{\omega^2},\ldots,\textbf{F}_{\omega^{\omega}},\ldots$ and showed that many problems arising in theoretical computer science are complete problems in this hierarchy~\cite{schmitz2016complexity}.
In the above sequence $\textbf{F}_{3}=$\textsc{Tower} and $\textbf{F}_{\omega}=\textsc{Ackermann}$.
The class \textsc{Tower} is closed under elementary reductions and \textsc{Ackermann} is closed under primitive recursive reductions. For the purpose of this paper it suffices to say that \textsc{Tower} contains all the problems whose space complexity is bounded by tower functions of the form
\[2^{\iddots ^{2^{n}}\Big\}\,f(n)},\]
where $f(n)$ is an elementary function.

The notation $\texttt{poly}(n)$ will stand for a polynomial bound, and $\texttt{exp}(n)$ an exponential bound.
Most of the time we shall not be explicit about constant factors when making statements about upper bounds.

\subsection{Integer Programming}

We shall need a result in integer linear programming~\cite{pottier1991minimal}. Let $A$ be an $m\,{\times}\,k$ integer matrix and $\mathbf{x}\in\mathbb{V}^k$. The {\em homogeneous equation system} of $A$ is given by the linear equation system $\mathcal{E}$ specified by
\begin{equation}\label{2019-05-15}
A\mathbf{x}=\mathbf{0}.
\end{equation}
A nontrivial solution to~(\ref{2019-05-15}) is some $\mathbf{m}\in\mathbb{N}^k\setminus\{\mathbf{0}\}$ such that $A\mathbf{m}=\mathbf{0}$.
The set of solutions form a monoid $(\mathcal{S},\mathbf{0},+)$.
Since the pointwise ordering $\le$ is a well quasi order on $\mathbb{N}^k$, the set $\mathcal{S}$ must be generated by a finite set of nontrivial minimal solutions.
This finite set is called the {\em Hilbert base} of $\mathcal{E}$, denoted by $\mathcal{H}(\mathcal{E})$.
The following important result is proved by Pottier~\cite{pottier1991minimal}, in which $r$ is the {\em rank} of $A$.
\begin{lemma}[Pottier]\label{pottier-lemma}
$\|\mathbf{m}\|_1 \le (1\,{+}\,k{\cdot}\|A\|_{1})^r$ for every $\mathbf{m}\in\mathcal{H}(\mathcal{E})$.
\end{lemma}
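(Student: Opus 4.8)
The plan is to reduce to a full-rank system and then bound a minimal solution by viewing it as a closed walk in $\mathbb{Z}^r$ whose vertices are forced to be distinct. First I would discard redundant rows: since $A$ has rank $r$, I pick $r$ linearly independent rows forming a submatrix $A'\in\mathbb{Z}^{r\times k}$. The row space is unchanged, so $\{\mathbf{x}\ge\mathbf{0}:A\mathbf{x}=\mathbf{0}\}=\{\mathbf{x}\ge\mathbf{0}:A'\mathbf{x}=\mathbf{0}\}$, whence $\mathcal{H}(\mathcal{E})$ is the same; moreover $\|A'\|_1\le\|A\|_1$, so it suffices to prove $\|\mathbf{m}\|_1\le(1+k\|A'\|_1)^r$. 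From now on $A=A'$ is $r\times k$ of full row rank, and crucially its columns $A_{\cdot 1},\dots,A_{\cdot k}$ live in $\mathbb{Z}^r$; this is where the exponent $r$ will come from.

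Fix $\mathbf{m}\in\mathcal{H}(\mathcal{E})$ and set $N=\|\mathbf{m}\|_1$. I would regard $\mathbf{m}$ as the multiset containing column $A_{\cdot j}$ with multiplicity $\mathbf{m}(j)$; this multiset has $N$ elements summing to $A\mathbf{m}=\mathbf{0}$. For any ordering $A_{\cdot j_1},\dots,A_{\cdot j_N}$ of it I form the partial sums $\mathbf{p}_0=\mathbf{0}$ and $\mathbf{p}_t=\sum_{s\le t}A_{\cdot j_s}$, so that $\mathbf{p}_N=\mathbf{0}$. The key observation is that minimality forces these points to be distinct: if $\mathbf{p}_s=\mathbf{p}_t$ for some $0\le s<t\le N$ with $(s,t)\ne(0,N)$, then the columns $A_{\cdot j_{s+1}},\dots,A_{\cdot j_t}$ form a proper nonempty sub-multiset summing to $\mathbf{0}$, i.e.\ a nonzero solution $\mathbf{m}'$ with $\mathbf{m}'\le\mathbf{m}$ and $\mathbf{m}'\ne\mathbf{m}$ — contradicting $\mathbf{m}\in\mathcal{H}(\mathcal{E})$. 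Since this holds for \emph{every} ordering, the points $\mathbf{p}_0,\dots,\mathbf{p}_{N-1}$ are $N$ pairwise distinct lattice points.

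It remains to confine these $N$ points to a small region, and here I would invoke a Steinitz-type rearrangement (the B\'ar\'any--Grinberg lemma): since the columns sum to $\mathbf{0}$ and each satisfies $\|A_{\cdot j}\|_\infty\le\|A\|_1$, they can be reordered so that every partial sum lies in a box of side $O(r\|A\|_1)$ about the origin. Combining the distinctness of $\mathbf{p}_0,\dots,\mathbf{p}_{N-1}$ with the pigeonhole principle, $N$ is at most the number of lattice points in that box, which is of the order $(1+k\|A\|_1)^r$ since $r\le k$. This yields the claimed bound.

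The main obstacle I anticipate is precisely this last step: squeezing the base down so that the count is exactly $(1+k\|A\|_1)^r$ rather than merely $(O(r\|A\|_1))^r$. One must either sharpen the Steinitz constant for the $\ell_\infty$-norm in dimension $r$, or replace the box argument by a per-coordinate confinement that exploits both the bound on each entry of a column and the fact that only $k$ distinct column types occur, in order to bring the base down to $1+k\|A\|_1$. The full-rank reduction and the distinctness argument are routine, so the quantitatively tight confinement of the closed walk is the delicate part — and is exactly where Pottier's estimate does the real work.
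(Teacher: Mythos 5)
The paper offers no proof of this lemma --- it is imported as a black box from Pottier's 1991 paper --- so there is no internal argument to compare against; your proposal has to be judged on its own. Its first two steps are sound: restricting to $r$ linearly independent rows preserves the solution set and can only decrease $\|A\|_1$, and the observation that for any ordering of the column multiset the partial sums $\mathbf{p}_0,\dots,\mathbf{p}_{N-1}$ are pairwise distinct (a coincidence $\mathbf{p}_s=\mathbf{p}_t$ would produce a proper nonempty sub-multiset summing to $\mathbf{0}$, i.e.\ a nontrivial solution strictly below $\mathbf{m}$, contradicting minimality) is correct and is a standard device. The genuine gap is exactly where you place it: the confinement step. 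The Steinitz/B\'ar\'any--Grinberg rearrangement keeps all partial sums in an $\ell_\infty$-ball of radius $c(r)\cdot\max_j\|A_{\cdot j}\|_\infty$ with $c(r)$ of order $r$, and counting lattice points there yields a bound of the shape $(1+2r\|A\|_1)^r$, not $(1+k\|A\|_1)^r$. These are incomparable: whenever $k<2r$ (for instance $k=r+1$, precisely the case of a one-dimensional kernel, where nontrivial minimal solutions certainly exist) your count exceeds the claimed one, so the stated inequality is not established. This cannot be repaired merely by sharpening the Steinitz constant for the $\ell_\infty$-norm; even an $O(1)$ constant would still leave a factor of order $\|A\|_1$ per coordinate but with no mechanism to trade $r$ for $k$ in the base.

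To actually obtain the constant $1+k\|A\|_1$ one needs an argument that exploits the algebraic structure of $A$ rather than a generic rearrangement: Pottier's proof bounds minimal solutions through the generators of the cone $\{\mathbf{x}\ge\mathbf{0}: A\mathbf{x}=\mathbf{0}\}$, whose entries are controlled by $r\times r$ subdeterminants of $A$ via Cramer's rule and Hadamard-type estimates, and a decomposition of any Hilbert-basis element inside the fundamental parallelepiped of a simplicial subcone; the base $1+k\|A\|_1$ (in fact a sharper row-norm quantity in Pottier's original statement, of which the paper's version is a loosening) falls out of that computation. It is worth noting that for everything this paper does with the lemma (Corollary~\ref{eq-sol} and the ensuing ``polynomial size / exponential count'' estimates), a bound of the form $(1+2r\|A\|_1)^r$ would serve just as well, so your argument could be salvaged as a proof of a weaker but sufficient statement --- but as a proof of the lemma as stated it is incomplete.
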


Let $\mathbf{r}\in\mathbb{Z}^k$.
Nonnegative integer solutions to {\em equation system}
\begin{equation}\label{hes}
A\mathbf{x}=\mathbf{r}
\end{equation}
can be derived from the Hilbert base of the homogeneous equation system $A\mathbf{x}-x'\mathbf{r}=\mathbf{0}$.
Let $\mathbb{S}^{=\mathbf{r}}$ be the finite set of the minimal solutions to $A\mathbf{x}-x'\mathbf{r}=\mathbf{0}$ with $x'=1$, and $\mathbb{S}^{=\mathbf{0}}$ be the finite set of the minimal solutions to $A\mathbf{x}-x'\mathbf{r}=\mathbf{0}$ with $x'=0$.
A solution to~(\ref{hes}) is of the form
\[
\mathbf{m} + \sum_{i\in[|\mathbb{S}^{=\mathbf{0}}|]}k_i\mathbf{m}_i,
\]
where $\mathbf{m}\in\mathbb{S}^{=\mathbf{r}}$, $\mathbf{m}_i\in\mathbb{S}^{=\mathbf{0}}$ and $k_i$ is a natural number for each $i\in[|\mathbb{S}^{=\mathbf{0}}|]$.
The following is an immediate consequence of Lemma~\ref{pottier-lemma}.
\begin{corollary}\label{eq-sol}
$\|\mathbf{m}\|_1 \le (1+k{\cdot}\|A\|_{1}+\|\mathbf{r}\|_{1})^{r+1}$ for all $\mathbf{m}\in\mathbb{S}^{=\mathbf{r}}\cup\mathbb{S}^{=\mathbf{0}}$.
\end{corollary}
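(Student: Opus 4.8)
The plan is to reduce the inhomogeneous system~\eqref{hes} to a single homogeneous system and then invoke \cref{pottier-lemma}. First I would form the augmented matrix $A'=[A\mid-\mathbf{r}]$, an $m\times(k{+}1)$ integer matrix, so that $A\mathbf{x}-x'\mathbf{r}=\mathbf{0}$ becomes the homogeneous system $A'\mathbf{y}=\mathbf{0}$ in the $k{+}1$ variables $\mathbf{y}=(\mathbf{x},x')$. Two elementary observations drive the estimate: the $1$-norm of the augmented matrix is $\|A'\|_1=\|A\|_1+\|\mathbf{r}\|_1$, as we have only appended the column $-\mathbf{r}$; and appending one column raises the rank by at most one, so $\operatorname{rank}(A')\le r{+}1$.

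The key step is to identify the sets $\mathbb{S}^{=\mathbf{r}}$ and $\mathbb{S}^{=\mathbf{0}}$ with elements of the Hilbert base $\mathcal{H}(\mathcal{E}')$ of $A'\mathbf{y}=\mathbf{0}$; this is the one point that needs genuine care. For $\mathbb{S}^{=\mathbf{0}}$ it is immediate, since a solution with $x'=0$ is minimal in $\mathcal{E}'$ exactly when its $\mathbf{x}$-part is minimal for $A\mathbf{x}=\mathbf{0}$. For $\mathbb{S}^{=\mathbf{r}}$ I would take a minimal solution $(\mathbf{m},1)$ with $x'=1$ and decompose it over the Hilbert base as $(\mathbf{m},1)=\sum_i c_i\mathbf{h}_i$ with $c_i\in\mathbb{N}$. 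Reading off the last coordinate, $\sum_i c_i\,\mathbf{h}_i(k{+}1)=1$ forces exactly one base vector $\mathbf{h}_{i_0}$ to carry last coordinate $1$ with coefficient $1$, all other contributing base vectors having last coordinate $0$. Were any such vector to occur with positive coefficient, subtracting it from $(\mathbf{m},1)$ would give a strictly smaller nonnegative solution still satisfying $x'=1$, contradicting minimality; hence $(\mathbf{m},1)=\mathbf{h}_{i_0}\in\mathcal{H}(\mathcal{E}')$. Thus both $\mathbb{S}^{=\mathbf{r}}$ and $\mathbb{S}^{=\mathbf{0}}$ sit inside $\mathcal{H}(\mathcal{E}')$.

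Finally I would apply \cref{pottier-lemma} to $A'$: for every $\mathbf{h}\in\mathcal{H}(\mathcal{E}')$ it gives $\|\mathbf{h}\|_1\le(1+(k{+}1)\|A'\|_1)^{\operatorname{rank}(A')}$, and since $\|\mathbf{m}\|_1\le\|\mathbf{h}\|_1$ whenever $\mathbf{h}=(\mathbf{m},x')$, substituting $\|A'\|_1=\|A\|_1+\|\mathbf{r}\|_1$ together with $\operatorname{rank}(A')\le r{+}1$ yields the desired bound for all $\mathbf{m}\in\mathbb{S}^{=\mathbf{r}}\cup\mathbb{S}^{=\mathbf{0}}$. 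The one discrepancy with the stated inequality is the linear factor $k{+}1$ in place of $k$ in front of $\|A\|_1$, which is harmless and is absorbed into the clean form $(1+k\|A\|_1+\|\mathbf{r}\|_1)^{r+1}$ under the paper's convention of suppressing constant factors. I do not anticipate any serious difficulty beyond the Hilbert-base identification of the previous paragraph, which is really the heart of why the inhomogeneous problem inherits Pottier's polynomial bound.
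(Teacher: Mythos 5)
Your proof is correct and follows exactly the route the paper intends: the text preceding the corollary already sets up the augmented homogeneous system $A\mathbf{x}-x'\mathbf{r}=\mathbf{0}$, and the paper then declares the bound "an immediate consequence" of Lemma~\ref{pottier-lemma}, so your Hilbert-base identification (including the minimality argument showing $(\mathbf{m},1)\in\mathcal{H}(\mathcal{E}')$) is precisely the omitted justification. The $(k{+}1)\|A'\|_1$ versus $k\|A\|_1+\|\mathbf{r}\|_1$ discrepancy you flag is genuine but immaterial under the paper's stated convention of not tracking constant factors in upper bounds.
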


The size of~(\ref{2019-05-15}) can be defined by $mk\log(\|A\|_{1})$, and the size of~(\ref{hes}) by $mk\log(\|A\|_{1})+\|\mathbf{r}\|_{1}$.
The size of $(1+k{\cdot}\|A\|_{1}+\|\mathbf{r}\|_{1})^{r+1}$ is polynomial.
Thus $|\mathbb{S}^{=\mathbf{r}}|$ and $|\mathbb{S}^{=\mathbf{0}}|$ are bounded by some exponential functions.
In polynomial space a nondeterministic algorithm can guess a solution and check if it is minimal, hence the following.
\begin{corollary}\label{min-solu}
Both $\mathbb{S}^{=\mathbf{r}}$ and $\mathbb{S}^{=\mathbf{0}}$ can be produced in $\texttt{poly}(n)$ space.
\end{corollary}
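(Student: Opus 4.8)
The plan is to prove Corollary~\ref{min-solu} by exhibiting a nondeterministic procedure that runs in polynomial space and, by the standard fact $\mathbf{NPSPACE}=\mathbf{PSPACE}$ (Savitch), deduce that each minimal solution set is computable in $\texttt{poly}(n)$ space. First I would fix the homogeneous system $A\mathbf{x}-x'\mathbf{r}=\mathbf{0}$ and recall from Corollary~\ref{eq-sol} that every minimal solution $\mathbf{m}$ in $\mathbb{S}^{=\mathbf{r}}\cup\mathbb{S}^{=\mathbf{0}}$ satisfies $\|\mathbf{m}\|_1\le(1+k{\cdot}\|A\|_1+\|\mathbf{r}\|_1)^{r+1}$. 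Since this bound is a fixed polynomial in the input size $n=mk\log(\|A\|_1)+\|\mathbf{r}\|_1$, each entry of a candidate minimal solution has magnitude at most $2^{\texttt{poly}(n)}$ and is therefore representable in $\texttt{poly}(n)$ bits. Thus a single candidate vector occupies only polynomial space.

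Next I would describe the verification of a guessed candidate $\mathbf{m}$. Checking that $\mathbf{m}$ is a solution, i.e.\ that $A\mathbf{m}-\mathbf{r}\cdot(\text{its }x'\text{-coordinate})=\mathbf{0}$ with the prescribed value of $x'$ (either $1$ or $0$), amounts to evaluating $m$ linear forms whose intermediate values are again bounded by $2^{\texttt{poly}(n)}$; this is a straightforward polynomial-space arithmetic check. The more delicate part is verifying \emph{minimality}: $\mathbf{m}$ is minimal iff no strictly smaller nonzero solution $\mathbf{m}'\le\mathbf{m}$ (in the pointwise order, with the same $x'$ value) exists. Here I would have the algorithm loop over all candidate witnesses $\mathbf{m}'$ with $\mathbf{m}'\le\mathbf{m}$ and $\mathbf{m}'\ne\mathbf{m}$, testing whether any such $\mathbf{m}'$ is itself a nonzero solution; because each coordinate of $\mathbf{m}'$ is bounded by the corresponding coordinate of $\mathbf{m}$, the witness also fits in $\texttt{poly}(n)$ space, and the search can be carried out by reusing the same polynomial-space workspace across iterations. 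The candidate $\mathbf{m}$ is accepted precisely when this inner search finds no smaller solution.

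To \emph{produce} the entire set rather than merely decide membership, I would run the enumeration deterministically: iterate over all vectors $\mathbf{m}$ within the Corollary~\ref{eq-sol} bound in lexicographic order, and for each one invoke the solution-and-minimality test above, outputting $\mathbf{m}$ exactly when it passes. Only the current candidate and one inner witness need be held in memory simultaneously, so the total working space stays polynomial even though the output stream may be of exponential length. The main obstacle is ensuring that the minimality test does not secretly require superpolynomial space: the key observation making it work is that any counterexample to minimality is \emph{coordinatewise dominated} by $\mathbf{m}$ itself, so the a~priori bound of Corollary~\ref{eq-sol} that applies to $\mathbf{m}$ also controls the size of every candidate witness, and no separate large search range is needed. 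Once this is in place the polynomial-space bound is immediate.
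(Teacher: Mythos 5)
Your proposal is correct and takes essentially the same route as the paper, which likewise notes that the Corollary~\ref{eq-sol} bound makes every minimal solution representable in polynomially many bits and then simply asserts that a nondeterministic algorithm can guess a solution and check minimality in polynomial space; your version just fills in the enumeration and the domination-bounded minimality test that the paper leaves implicit. Two small wording points that do not affect the space bound: the numeric bound $(1+k\|A\|_1+\|\mathbf{r}\|_1)^{r+1}$ is itself exponential in $n$ (it is its \emph{bit-length} that is polynomial), and since $\mathbb{S}^{=\mathbf{r}}$ consists of Hilbert-basis elements of the extended system, the minimality check should in principle also admit a dominating witness whose $x'$-coordinate is $0$ rather than insisting on the same $x'$ value.
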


\subsection{Vector Addition Systems with States}

By a {\em digraph} we mean a finite directed graph in which multi-edges and self loops are admitted.
A {\em $d$-dimensional vector addition system with states}, or {\em $d$-VASS}, is a labeled digraph $G=(Q,T,q_{in},q_{out})$ where $Q$ is the set of vertices and $T$ is the set of edges.
The edges are labeled by elements of $\mathbb{Z}^d$, and the labels are called {\em displacements}.
A {\em state} is identified to a vertex and a {\em transition} is identified to a labeled edge. We write $o,p,q$ for states, $t$ and its decorated versions for edges.
A transition from $p$ to $q$ labeled $\mathbf{t}$ is denoted by $(p,\mathbf{t},q)$ and $p\stackrel{\mathbf{t}}{\longrightarrow}q$.
Two special states are identified, an input state $q_{in}$ and an output state $q_{out}$.
We sometimes abbreviate $(Q,T,q_{in},q_{out})$ to $(Q,T)$ when the input and the output states are not to be referred.

A path $\pi$ from $p_0$ to $q_n$ of $G=(Q,T)$ is a sequence $(p_0,\mathbf{a}_0,q_0),\ldots,(p_n,\mathbf{a}_n,q_n)$ of transitions such that $p_i=q_{i-1}$ for all $i\in [n]$.
The displacement $\Delta(\pi)$ of $\pi$ is $\sum_{i\in[n]_0}\mathbf{a}_i$.
If $p=p_0=q_n$, we call $\pi$ a {\em cycle} of $G$ on $p$.
In the rest of the paper we refer to $G=(Q,T,p_{in},q_{out})$ either as a graph or as a VASS.
Define
\begin{eqnarray}
\|T\| &=& \sum_{p\stackrel{\mathbf{t}}{\longrightarrow}q\in T}\|\mathbf{t}\|_1, \label{2023-07-30-a}\\
\max\|T\| &=& \max\left\{\|\mathbf{t}\|_1\mid p\stackrel{\mathbf{t}}{\longrightarrow}q\in T\right\}. \label{2023-07-30-b}
\end{eqnarray}
The input size $|G|$ of a VASS $G=(Q,T,p,q)$ is defined as follows:
$|G|\defn|Q|+\|T\|$, where $|Q|$ is the size of $Q$.
A Parikh image for $G=(Q,T)$ is a vector in $\mathbb{N}^{T}$. We will write $\phi,\varphi,\psi$ for Parikh images. The displacement $\Delta(\psi)$ is defined by $\sum_{t=(p,\mathbf{t},q)\in T}\psi(t){\cdot}\mathbf{t}$.
For a path $\pi$ in $G$, we define $\wp(\pi)$ as the Parikh image of $\pi$.

Given a space $\mathbb{M}$, a {\em configuration} in $\mathbb{M}$ for the $d$-VASS $G=(Q,T)$ is a pair $(p,\mathbf{m})\in Q\times \mathbb{M}$, often abbreviated to $p(\mathbf{m})$, where $p$ is the state of the configuration and $\mathbf{m}$ is the {\em location} of the configuration.
For $t=(p,\mathbf{a},q)$, we write $p(\mathbf{m})\xrightarrow{t}_{\mathbb{M}}q(\mathbf{n})$
whenever $\mathbf{n}=\mathbf{m}+\mathbf{a}$ and $\mathbf{m},\mathbf{n}\in\mathbb{M}$.
Given a path $\pi=t_1\ldots t_n$ from $p$ to $q$, we say that $\pi$ is a {\em run} in $\mathbb{M}$, written $p(\mathbf{m})\xrightarrow{\pi}_{\mathbb{M}}q(\mathbf{n})$, if there exist configurations $p_1(\mathbf{m}_1),\ldots,p_{n-1}(\mathbf{m}_{n-1})$ such that
$p(\mathbf{m})\xrightarrow{t_1}_{\mathbb{M}}p_1(\mathbf{m}_1)\xrightarrow{t_2}_{\mathbb{M}}\cdots \xrightarrow{t_{n-1}}_{\mathbb{M}}p_{n-1}(\mathbf{m}_{n-1})\xrightarrow{t_n}_{\mathbb{M}}q(\mathbf{n})$.
We write \[p(\mathbf{m})\xrightarrow{G}_{\mathbb{M}}q(\mathbf{n})\]
for the existence of a run $p(\mathbf{m})\xrightarrow{\pi}_{\mathbb{M}}q(\mathbf{n})$ in $G$.
We say that $\pi$ is a {\em walk} if it is a run in $\mathbb{N}^d$.
When talking about walks, we often omit the subscript $\mathbb{N}^d$ and will assume that $\mathbf{m},\mathbf{n}\in\mathbb{N}^d$.
The {\em reachability problem} can be formally stated as follows:
\begin{quotation}
Given a $d$-VASS $G=(Q,T,p,q)$ and two locations $\mathbf{m},\mathbf{n}\in\mathbb{N}^d$,
is $p(\mathbf{m})\xrightarrow{G}q(\mathbf{n})$?
\end{quotation}

For an edge $t\in T$, let $V_G(t)$ be the vector space $V_G(t)\subseteq\mathbb{Q}^d$ spanned by the displacements of the cycles that contain $t$.
Let $V_G$ be the vector space spanned by the displacements of all cycles in $G$.
Let $n_G$ be $dim\,V_G$, which is the dimension of $V_G$.
We say that $G$ is \emph{effectively $n_G$-dimensional}.
\begin{figure}
    \centering
    \includegraphics[scale = 0.9]{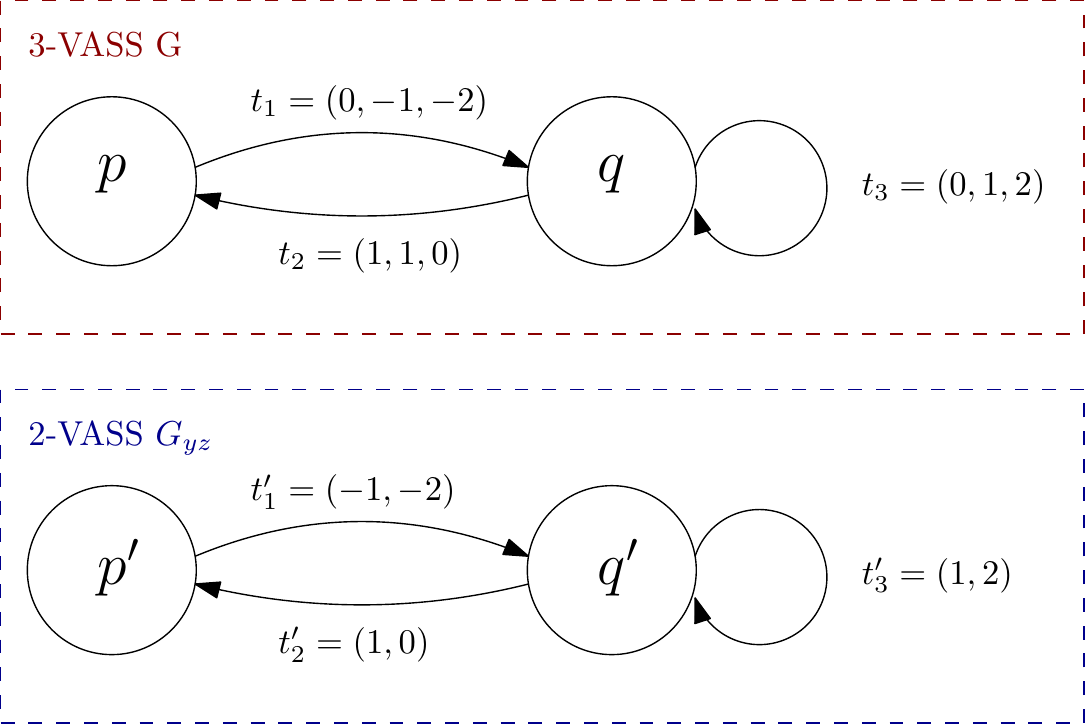}
    \caption{An Example of an effectively $2$-dimensional $3$-VASS}
    \label{fig-eff3vass}
\end{figure}
\begin{example}\label{example:2d3vass}
Consider the $3$-VASS $G$ presented in Figure~\ref{fig-eff3vass}.
It contains two states $p,q$ and three transitions $\mathbf{t}_1,\mathbf{t}_2,\mathbf{t}_3$.
The space $V_G$ is spanned by the displacements of the cycles $(1,0,-2)$, $(0,1,2)$.
So $G$ is an effectively $2$-dimensional $3$-VASS $G$.
It is the hyperplane defined by the equation $2x-2y+z=0$.
\end{example}
In this paper we pay special attention to the effectively $2$-dimensional $3$-VASSes.
In such a VASS all configurations of the form $p(\mathbf{v})$ reachable from $p(\mathbf{u})$ lie in a hyperplane.
This suggests to investigate the relationship between the effectively $2$-dimensional $3$-VASSes and the $2$-VASSes.
We will show in Section~\ref{sec-2023-03-12} that walks in an effectively $2$-dimensional $3$-VASS has the same linear path scheme property as the walks in a $2$-VASS.

We will find it necessary to talk about which part of the $d$-dimensional space a particular vector $\mathbf{m}$ is in.
Let $\#_1,\#_2,\ldots,\#_d\in\{\ge,\le\}$.
The {\em zone} $Z_{(\#_1,\#_2,\ldots,\#_d)}$ is defined the equivalence:
$\mathbf{m}\in Z_{(\#_1,\#_2,\ldots,\#_d)}$ if and only if $\mathbf{m}(1)\#_1 0$, $\mathbf{m}(2)\#_2 0$, \ldots, and $\mathbf{m}(d)\#_d0$.
For instance the quadrant $\mathbb{N}\times(-\mathbb{N})$ is denoted by $Z_{(\ge,\le)}$, and the octant $\mathbb{N}^3$ is denoted as $Z_{(\geq,\geq,\geq)}$.
For a $d$-dimensional vector $\mathbf{o}$, write $Z_{\mathbf{o}}$ for the zone in which $\mathbf{o}$ lies.

\section{Effectively $2$-Dimensional $3$-VASS}
\label{sectionlinearpathscheme}

It has been known for some time that walks in $2$-VASS can be described by linear path schemes~\cite{blondin2015reachability}.
We will point out in this section a non-surprising generalization of this result, that is all walks in an effectively $2$-dimensional $3$-VASS are describable by linear path schemes.
We review in Section~\ref{2023-03-11} the characterization of walks in terms of linear path schemes due to Blondin {\em et~al}~\cite{blondin2015reachability}, and carry out the generalization in Section~\ref{sec-2023-03-12}.
We fix a VASS $G=(Q,T,p,q)$ throughout the section.

\subsection{Linear Path Scheme}\label{2023-03-11}

Suppose $G=(Q,T)$ is a $d$-dimensional VASS.
A linear path scheme (from $p$ to $q$) is a regular expression of the form $\rho=\alpha_0\beta_1^*\alpha_1\cdots\beta_n^*\alpha_n$, where $\alpha_1,\ldots,\alpha_n$ are paths in $G$ and $\beta_1,\ldots,\beta_n$ are cycles in $G$.
The length $|\rho|$ of $\rho$ is defined by $|\alpha_0\beta_1\alpha_1\cdots\beta_n\alpha_n|$.
For two locations $\mathbf{m},\mathbf{n}$, we write $p(\mathbf{m})\xrightarrow{\rho}q(\mathbf
n)$ if there exist $e_1,\ldots e_n\in\mathbb{N}$ such that $p(\mathbf{m})\xrightarrow{\alpha_0\beta_1^{e_1}\ldots\beta_n^{e_n}\alpha_n}q(\mathbf{n})$ is a {\em walk}.
Intuitively $\rho$ defines a walk pattern dominated by the circular walks $\beta_1,\ldots,\beta_n$.
For every $c\in \mathbb{N}$, let $\mathscr{L}_{c}(G)$ be the set of the linear path schemes whose length is bounded by $(|Q|+\|T\|)^c$.
We are particularly interested in the linear path schemes in which the displacements of all cycles are in $Z_{(\#_1,\#_2,\ldots,\#_d)}$.
\begin{definition}
Suppose $G=(Q,T)$ is a VASS.
A linear path scheme $\rho=\alpha_0\beta_1^*\alpha_1\cdots\beta_n^*\alpha_n$ is {\em zigzag free in $Z_{(\#_1,\#_2,\ldots,\#_d)}$} if $\Delta(\beta_1),\ldots,\Delta(\beta_n)\in Z_{(\#_1,\#_2,\ldots,\#_d)}$.
\end{definition}
A linear path scheme $\rho$ is zigzag free if it is zigzag free in some $Z_{(\#_1,\#_2,\ldots,\#_d)}$.
To shed more light on the zigzag free property, define the following in terms of the constants in~(\ref{2023-07-30-a}) and~(\ref{2023-07-30-b}).
\begin{eqnarray}
\mathcal{D} &\stackrel{\rm def}{=}& \max\|T\|{\cdot}(|Q|+\|T\|)^c, \label{D-2023-07-27-a} \\
\mathbb{D} &\stackrel{\rm def}{=}& [2\mathcal{D},\infty). \label{D-2023-07-27-b}
\end{eqnarray}
The number $\mathcal{D}$ depends only on $G$.
If a path starts from a configuration $p(\mathbf{m})$ in  $\mathbb{D}^3$ and is of length bounded by $(|Q|+\|T\|)^c$, then it must be completely in the first octant.
Now suppose $\mathbf{m},\mathbf{n}\in\mathbb{D}^d$, and that $\alpha_0\beta_1^*\alpha_1\cdots\beta_n^*\alpha_n\in\mathscr{L}_{c}(G)$ is a linear path scheme that is zigzag free in $Z_{\mathbf{n}-\mathbf{m}}$.
One has the property stated in the next lemma.
\begin{lemma}\label{LZY2023-07-28}
If $\pi=\alpha_0\beta_1^{e_1}\alpha_1\cdots\beta_n^{e_n}\alpha_n$ is a path from $p(\mathbf{m})$ to $q(\mathbf{n})$, then $p(\mathbf{m})\xrightarrow{\pi}q(\mathbf{n})$.
\end{lemma}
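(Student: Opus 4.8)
The plan is to show that the run along $\pi$ started at $\mathbf{m}$, which is already a run in $\mathbb{Z}^d$ by hypothesis, never drops below $\mathbf{0}$ in any coordinate, so that it is in fact a walk in $\mathbb{N}^d$ and hence $p(\mathbf{m})\xrightarrow{\pi}q(\mathbf{n})$. First I would mark the waypoints at the block boundaries: write $\mathbf{p}_0$ for the location after $\alpha_0$, and for $i\in[n]$ write $\mathbf{q}_i$ for the location after $\beta_i^{e_i}$ and $\mathbf{p}_i$ for the location after $\alpha_i$, so that $\mathbf{p}_n=\mathbf{n}$ and $\mathbf{q}_i=\mathbf{p}_{i-1}+e_i\Delta(\beta_i)$. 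Since $\rho\in\mathscr{L}_c(G)$, the total length $|\rho|=|\alpha_0\beta_1\alpha_1\cdots\beta_n\alpha_n|$ is at most $(|Q|+\|T\|)^c$; consequently (i) the displacement contributed by any sub-collection of the $\alpha_i$'s has $1$-norm at most $\max\|T\|\cdot(|Q|+\|T\|)^c=\mathcal{D}$, and (ii) along any contiguous sub-path of length at most $(|Q|+\|T\|)^c$ — in particular along a single $\alpha_i$ or a single traversal of some $\beta_i$ — the location deviates from its starting value by at most $\mathcal{D}$ in every coordinate.

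Next I would fix a coordinate $j$ and exploit the zigzag-free hypothesis. Because $\rho$ is zigzag free in $Z_{\mathbf{n}-\mathbf{m}}$, every $\Delta(\beta_i)$ lies in the zone of $\mathbf{n}-\mathbf{m}$, so the sign of $\Delta(\beta_i)(j)$ is governed by the sign of $(\mathbf{n}-\mathbf{m})(j)$ and is consistent across all $i$. Consider the increasing case $(\mathbf{n}-\mathbf{m})(j)\ge 0$, where $\Delta(\beta_i)(j)\ge 0$ for every $i$. Writing each waypoint as $\mathbf{m}(j)$ plus a partial sum of $\alpha$-displacements plus a partial sum of the nonnegative contributions $e_l\Delta(\beta_l)(j)$, the $\beta$-part only helps while the $\alpha$-part costs at most $\mathcal{D}$ by (i); hence $\mathbf{p}_i(j),\mathbf{q}_i(j)\ge\mathbf{m}(j)-\mathcal{D}\ge 2\mathcal{D}-\mathcal{D}=\mathcal{D}$, using $\mathbf{m}\in\mathbb{D}^d$. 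The decreasing case $(\mathbf{n}-\mathbf{m})(j)\le 0$ is symmetric: I would expand each waypoint backwards from $\mathbf{n}$, where the $\beta$-contributions are now nonpositive and again only help, so $\mathbf{p}_i(j),\mathbf{q}_i(j)\ge\mathbf{n}(j)-\mathcal{D}\ge\mathcal{D}$, using $\mathbf{n}\in\mathbb{D}^d$. In both cases every waypoint in coordinate $j$ is at least $\mathcal{D}$.

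Finally I would bound the location inside each block. Along an $\alpha_i$ the coordinate stays within $\mathcal{D}$ of its starting waypoint by (ii), hence is $\ge\mathcal{D}-\mathcal{D}=0$. For a block $\beta_i^{e_i}$ in an increasing coordinate, the value at the start of the $k$-th traversal is $\mathbf{p}_{i-1}(j)+(k-1)\Delta(\beta_i)(j)\ge\mathbf{p}_{i-1}(j)$, and a single traversal dips by at most $\mathcal{D}$, so the value never falls below $\mathbf{p}_{i-1}(j)-\mathcal{D}\ge 0$; in a decreasing coordinate the start values decrease, so the minimum occurs in the last traversal and is bounded below by $\mathbf{q}_i(j)-\mathcal{D}\ge 0$. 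As every coordinate stays nonnegative throughout, $\pi$ is a walk. The main obstacle — and essentially the only place care is required — is the bookkeeping that both the cumulative $\alpha$-cost and the per-traversal dip are simultaneously controlled by the single quantity $\mathcal{D}$, combined with choosing the correct endpoint ($\mathbf{m}$ or $\mathbf{n}$) against which to lower-bound the waypoints according to the monotonicity direction that zigzag-freeness supplies for each coordinate.
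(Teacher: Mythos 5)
Your proof is correct and follows essentially the same route as the paper's: bound all block-boundary waypoints below by $\mathcal{D}$ using the total length bound on the $\alpha_i$'s together with the sign-consistency of the $\Delta(\beta_i)$'s supplied by zigzag-freeness, then use the per-block length bound to show no intermediate location dips below $\mathbf{0}$. The one place you are more explicit than the paper is the decreasing-coordinate case, where you correctly anchor the waypoint bound at $\mathbf{n}$ rather than $\mathbf{m}$ — a step the paper's terse argument leaves implicit but which is exactly why the hypothesis requires both $\mathbf{m},\mathbf{n}\in\mathbb{D}^d$.
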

\begin{proof}
Recall that $\mathbf{1}$ is the vector whose entries are all $1$.
It follows from the inequality $\mathbf{m}\ge2\mathcal{D}\mathbf{1}$ and the definition of $\mathcal{D}$ that $\mathcal{D}\mathbf{1}\le\mathbf{m}+\Delta(\alpha_1)+\ldots+\Delta(\alpha_i)$ for every $i\in[n]$.
By the zigzag free property of $\beta_1,\ldots,\beta_n$, one derives that for every $i\in[n]$, the inequality $\mathbf{m}+\Delta(\alpha_0\beta_1^{e_1}\alpha_1\cdots \beta_i^{e_i})\ge\mathcal{D}\mathbf{1}$ holds.
By the length bound on $\beta_1,\ldots,\beta_n$, the path $\alpha_0\beta_1^{e_1}\alpha_1\cdots \beta_i^{e_i}$, starting from $\mathbf{m}$, lies completely in $Z_{(\ge,\ge,\ldots,\ge)}$.
\end{proof}

The following proposition and its corollary, due to
Blondin {\em et~al}~\cite{blondin2015reachability}, reveal the importance of the zigzag free linear path schemes to the $2$-VASS reachability problem.
\begin{proposition}\label{theorem2linearpathscheme}
Suppose $G=(Q,T)$ is a $2$-dimensional VASS.
There exists a constant $c$ depending on $G$ such that, for all $\mathbf{m},\mathbf{n}\in\mathbb{D}^2$, the following statements are valid.
 \begin{enumerate}
 \item If $q(\mathbf{m})\rightarrow_{\mathbb{N}^2}q(\mathbf{n})$, then there exists a linear path scheme $\rho\in\mathscr{L}_c(G)$ zigzag free in $Z_{\mathbf{n}-\mathbf{m}}$ with at most two cycles such that $q(\mathbf{m})\xrightarrow{\rho}q(\mathbf{n})$.
 \item If $p(\mathbf{m})\rightarrow_{\mathbb{D}^2}q(\mathbf{n})$, then there exists a linear path scheme $\rho\in\mathscr{L}_c(G)$ with at most $2|Q|$ cycles such that $p(\mathbf{m})\xrightarrow{\rho}q(\mathbf{n})$.
     \end{enumerate}
\end{proposition}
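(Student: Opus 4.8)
The plan is to prove the two statements by combining a pumping argument on walks with the zigzag-free property established in Lemma~\ref{LZY2023-07-28}. For statement (1), I would start with a walk $q(\mathbf{m})\xrightarrow{\pi}_{\mathbb{N}^2}q(\mathbf{n})$ and analyze its displacement profile. Since $\mathbf{m},\mathbf{n}\in\mathbb{D}^2$, the net displacement $\mathbf{n}-\mathbf{m}$ lies in a determined zone $Z_{\mathbf{n}-\mathbf{m}}$, say $Z_{(\#_1,\#_2)}$. The key observation is that in dimension $2$ the counter values trace a curve in the quadrant, and any sufficiently long walk must repeat a state; repeated cycles can be grouped by the sign of their displacement. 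I would argue that cycles whose displacement opposes the overall direction $\mathbf{n}-\mathbf{m}$ (the ``zigzag'' cycles) can be eliminated or cancelled against matching cycles, because in two dimensions an excursion that goes against the net drift in one coordinate must be compensated, and such compensating pairs can be folded into at most two surviving cycles whose displacements both lie in $Z_{\mathbf{n}-\mathbf{m}}$. This yields a zigzag-free scheme; the length bound $(|Q|+\|T\|)^c$ comes from bounding the lengths of the path segments $\alpha_i$ and the base cycles $\beta_i$ using the integer-programming bounds (Corollary~\ref{eq-sol}), since the existence of appropriate exponents $e_1,e_2$ is an instance of solving a linear Diophantine system with coefficients bounded by $\max\|T\|$ and target $\mathbf{n}-\mathbf{m}$.

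For statement (2), which concerns runs in $\mathbb{D}^2$ (rather than $\mathbb{N}^2$) with distinct endpoints $p$ and $q$, the plan is to decompose the run according to the states it visits. A run through the interior region $\mathbb{D}^2$ can be broken at the first and last visits to each state, producing at most $|Q|$ loop-segments at distinct states plus connecting paths. At each visited state one invokes the single-state result of the form given in statement (1) to replace the local looping behaviour by at most two cycles, giving the bound of $2|Q|$ cycles overall. Since we work inside $\mathbb{D}^2$ throughout, Lemma~\ref{LZY2023-07-28} guarantees that once we have extracted zigzag-free cycles with displacements in the correct zone, the resulting scheme is actually realizable as a walk, so no positivity constraints are violated when the exponents are increased.

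The main obstacle I expect is the cycle-elimination step in statement (1): showing that arbitrarily many distinct simple cycles appearing in a long walk can be consolidated into \emph{at most two} zigzag-free cycles. This is where the special geometry of dimension $2$ is essential. The argument should rely on the fact that in $\mathbb{Z}^2$ any set of vectors lying in a common quadrant $Z_{(\#_1,\#_2)}$ is dominated, up to nonnegative combination, by two extremal generators (the ones of extreme slope), so that the total contribution of all same-direction cycles can be rewritten using only two of them while preserving the Parikh-realizability as an actual closed walk. Care must be taken to ensure that rewriting the Parikh image back into a genuine cyclic path respects connectivity at the base state; I would handle this by keeping one representative occurrence of each cycle in the path skeleton and pushing all the repetition count into the two retained $\beta_i^*$ blocks. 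Establishing the uniform length bound with an explicit constant $c$ depending only on $G$ then reduces to the polynomial size bounds from Corollary~\ref{eq-sol}, which is routine once the structural reduction is in place.
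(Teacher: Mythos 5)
First, a point of comparison: the paper does not prove Proposition~\ref{theorem2linearpathscheme} at all. It is imported verbatim from Blondin \emph{et al.}~\cite{blondin2015reachability}, and the text explicitly directs the reader to that paper for ``the long proof.'' So there is no in-paper argument to measure your sketch against; the only question is whether your outline would itself constitute a proof. It would not: the step you flag as ``the main obstacle'' is in fact the entire content of the theorem, and your proposed way around it does not work as described.

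Concretely, the gap is in the cancellation and consolidation step for statement (1). You argue that cycles opposing the net drift ``can be eliminated or cancelled against matching cycles'' and that all same-direction cycles can then be rewritten, ``up to nonnegative combination,'' using two extremal generators. Both moves operate at the level of displacement vectors and Parikh images, but neither preserves the property of being a \emph{walk} in $\mathbb{N}^2$: deleting a backward cycle from one point of a run and a compensating forward cycle from another point can push intermediate configurations below zero (statement (1) only assumes the run lives in $\mathbb{N}^2$, not in $\mathbb{D}^2$, so there is no slack to absorb this), and replacing a multiset of cycles by a rational-cone decomposition into two extremal ones changes the Parikh image in a way that need not satisfy the Euler condition at the base state, need not have integer coefficients, and need not be realizable with nonnegative intermediate values. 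Handling exactly these issues --- surgery on runs that preserves nonnegativity, and turning a geometric decomposition of the cycle cone back into an actual flat witness of polynomial length --- is what occupies the bulk of~\cite{blondin2015reachability}. Your appeal to Corollary~\ref{eq-sol} for the length bound is likewise circular: the integer-programming bounds apply only after one knows that a linear path scheme with polynomially bounded support suffices, which is the statement being proved. The decomposition of statement (2) into at most $|Q|$ loop-segments with statement (1) applied at each is consistent with the standard route, but it rests entirely on the unproved core of statement (1).
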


The interested reader is advised to consult~\cite{blondin2015reachability} for the long proof of Proposition~\ref{theorem2linearpathscheme}.
It is worth emphasizing that in the first statement of Proposition~\ref{theorem2linearpathscheme} the vector $\mathbf{n}-\mathbf{m}$ can be in any one of the four quadrants.
The locations $\mathbf{m},\mathbf{n}$ can be far away.
If $p(\mathbf{m})\xrightarrow{G}q(\mathbf{n})$ is captured by $p(\mathbf{m})\xrightarrow{\alpha_0\beta_1^*\alpha_1\beta_1^*\alpha_2}q(\mathbf{n})$, the role of the transitions $\alpha_0,\alpha_1,\alpha_2$ is to adjust positions, it takes a linear combination of the circular walks $\beta_1,\beta_2$ to take us from $p(\mathbf{m})$ to $q(\mathbf{n})$.

With the help of this proposition a close relationship between the walks in $2$-dimensional VASSes and the zigzag free linear path schemes can be established as in the following corollary.
\begin{corollary}\label{linearpathscheme-2023-07-18}
Suppose $G=(Q,T)$ is a $2$-dimensional VASS.
There exists a constant $c$ depending on $G$ such that, for all $\mathbf{m},\mathbf{n}\in\mathbb{N}^2$, there is a walk $p(\mathbf{m})\xrightarrow{G}q(\mathbf{n})$ if and only if $p(\mathbf{m})\xrightarrow{\rho}q(\mathbf{n})$ for some $\rho\in\mathscr{L}_c(G)$.
\end{corollary}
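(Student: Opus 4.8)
The plan is to separate the two implications, with essentially all of the work in the ``only if'' direction. The ``if'' direction is immediate: by the definition of $p(\mathbf{m})\xrightarrow{\rho}q(\mathbf{n})$ there are exponents $e_1,\ldots,e_n$ for which $p(\mathbf{m})\xrightarrow{\alpha_0\beta_1^{e_1}\alpha_1\cdots\beta_n^{e_n}\alpha_n}q(\mathbf{n})$ is already a walk, so $p(\mathbf{m})\xrightarrow{G}q(\mathbf{n})$.

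For the ``only if'' direction the difficulty is that Proposition~\ref{theorem2linearpathscheme} governs only locations in $\mathbb{D}^2$, and in its second part only runs confined to $\mathbb{D}^2$, whereas here $\mathbf{m},\mathbf{n}$ range over all of $\mathbb{N}^2$ and a walk may leave $\mathbb{D}^2$ repeatedly through the boundary region $\mathbb{N}^2\setminus\mathbb{D}^2$. I would first partition $\mathbb{N}^2$ into the interior $\mathbb{D}^2$ and the two strips $S_1=\{\mathbf{v}:\mathbf{v}(1)<2\mathcal{D}\}$ and $S_2=\{\mathbf{v}:\mathbf{v}(2)<2\mathcal{D}\}$. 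Inside $S_i$ one coordinate is bounded by $2\mathcal{D}$, so folding that coordinate into the control state turns the restriction of $G$ to $S_i$ into a $1$-VASS $H_i$ with state set $Q\times\{0,1,\ldots,2\mathcal{D}\}$, of size polynomial in $|G|$ for a fixed $c$ since $\mathcal{D}$ is.

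Given a walk $w$ from $p(\mathbf{m})$ to $q(\mathbf{n})$, I would cut it at every crossing of the interface lines $\mathbf{v}(1)=2\mathcal{D}$ and $\mathbf{v}(2)=2\mathcal{D}$, producing an alternating list of maximal interior factors and maximal strip factors. An interior factor is a run confined to $\mathbb{D}^2$ with both endpoints in $\mathbb{D}^2$, so Proposition~\ref{theorem2linearpathscheme}(2) supplies a linear path scheme in $\mathscr{L}_c(G)$ with at most $2|Q|$ cycles and the same endpoints; a strip factor is a run of some $H_i$, for which a direct one-dimensional pumping argument (the $1$-VASS case being simpler than Proposition~\ref{theorem2linearpathscheme}) yields a short scheme, and the degenerate case where $w$ never meets $\mathbb{D}^2$ is covered by the strip analysis alone. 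Concatenating the per-factor schemes produces a single $\rho$ realizing $p(\mathbf{m})\xrightarrow{\rho}q(\mathbf{n})$.

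The main obstacle is to keep $|\rho|$ within $(|Q|+\|T\|)^c$, that is, to bound the number of interface crossings, which a priori is unbounded. I would label each crossing by its state together with the interface line it uses, giving only $2|Q|$ labels; among crossings sharing a label the only varying datum is the single unbounded coordinate, so their locations form a sequence in a well quasi order. Whenever two crossings carry the same label and pointwise comparable locations, the intervening portion of $w$ can be absorbed into cycles and excised, so after finitely many such reductions $w$ is rerouted to a walk with only boundedly many factors. Making this pumping precise---so that the excisions respect nonnegativity and so that the size bounds of the factors together with those of $H_1$ and $H_2$ are all absorbed into a single constant $c$ depending only on $G$---is the delicate point.
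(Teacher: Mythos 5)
First, a framing point: the paper does not actually prove this corollary --- it is imported from Blondin \emph{et al.}~\cite{blondin2015reachability} --- and the closest in-paper argument is the $3$-dimensional analogue in Section~\ref{sec-2023-03-12} (Proposition~\ref{Theroembig3vass}, Lemmas~\ref{theorembounded3vass} and~\ref{theoremcombbounded3vass}, and the assembly before Theorem~\ref{theorem3vasslps}). Your overall strategy --- deep region $\mathbb{D}^2$ handled by Proposition~\ref{theorem2linearpathscheme}, strips folded into $1$-VASSes, then a bound on the number of alternations --- is the same as that argument. The gap is in how you bound the alternations. Your excision step does not work as stated: deleting the portion of $w$ between two crossings with the same label and comparable but unequal locations changes the total displacement, so the result is no longer a walk from $p(\mathbf{m})$ to $q(\mathbf{n})$; and you cannot instead ``absorb'' that portion as a $\beta_i^*$ factor of $\rho$, because $|\rho|$ is defined as $|\alpha_0\beta_1\alpha_1\cdots\beta_n\alpha_n|$, so the cycles of a scheme in $\mathscr{L}_c(G)$ must be \emph{short} paths repeated many times, whereas an excised portion is an arbitrary long sub-run, not a power of a short cycle. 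Moreover ``after finitely many such reductions'' gives no quantitative bound: a well-quasi-order argument yields finiteness, not membership in $\mathscr{L}_c(G)$ for a constant $c$ depending only on $G$. (A smaller slip: crossings sharing a state and an interface line still differ in the bounded coordinate within a window of width about $2\max\|T\|$, so your label set must include that value too; this is harmless but symptomatic of the imprecision at the critical step.)

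The missing idea is statement~1 of Proposition~\ref{theorem2linearpathscheme}, which you never invoke: a run $q(\mathbf{m}')\rightarrow_{\mathbb{N}^2}q(\mathbf{n}')$ with the \emph{same} state at both ends and both locations in $\mathbb{D}^2$ is captured by a scheme with at most two cycles \emph{even though the run may wander arbitrarily through the strips}. This is precisely the device that absorbs an unbounded number of interface crossings without any excision. The assembly then goes as in the paragraph preceding Theorem~\ref{theorem3vasslps}: for each state $q$, take the first and the last configuration at state $q$ whose location lies in the overlap of $\mathbb{D}^2$ with a slightly widened strip region; the portion between them is a cycle on $q$ with both endpoints high, handled by statement~1 regardless of how often it crosses the interface; what remains are short connecting pieces and pieces confined to the strips, and for the latter the number of alternations between the two strips is bounded by a pigeonhole on the finitely many configurations in the corner $Q\times[2\mathcal{D}]_0\times[2\mathcal{D}]_0$ after discarding sub-walks from a configuration to itself. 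Without statement~1 your decomposition cannot be completed, so as written the proof has a genuine gap at exactly the point you flag as delicate.
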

In Section~\ref{sec-2023-03-12} we will prove a generalization of Corollary~\ref{linearpathscheme-2023-07-18}.
We will say that the walk $p(\mathbf{m})\xrightarrow{G}q(\mathbf{n})$ is {\em captured} by a linear path scheme $\rho\in\mathscr{L}_c(G)$ if $p(\mathbf{m})\xrightarrow{\rho}q(\mathbf{n})$.
Notice that there are at most $|T|^{(|Q|+\|T\|)^c}$ linear path schemes in $\mathscr{L}_c(G)$, and that there are infinitely many pairs of configuration $p(\mathbf{m}),q(\mathbf{n})$ such that $p(\mathbf{m})\xrightarrow{G}q(\mathbf{n})$.
No matter how apart $\mathbf{m},\mathbf{n}$ are, a walk $p(\mathbf{m})\xrightarrow{G}q(\mathbf{n})$ is captured by some $\rho$ whose length is independent of $\mathbf{m},\mathbf{n}$.

A distinguished feature of the linear path schemes is that they do not contain any nested cycles.
In literature this is called the flatness property.
The flatness allows one to enforce the nonnegativity condition on the linear path schemes by linear (in)equations.
In the following definition $\mathbf{x},\phi,\mathbf{y},\left\{\mathbf{z}^1_{k,l}\right\}_{l\in[|\alpha_k|]},\left\{\mathbf{z}^2_{k,l}\right\}_{l\in[|\beta_k|]},\left\{\mathbf{z}^3_{k,l}\right\}_{l\in[|\beta_k|]}$ are vectors of variables.
\begin{definition}\label{LPSS}
    Let $\rho=\alpha_0\beta_1^*\alpha_1\beta_2^*\cdots \beta_n^*\alpha_n\in\mathscr{L}_c(G)$ be in a $d$-dimensional VASS $G$.
    The {\em linear path scheme system} (LPS system) $\mathscr{E}_{\rho}$ for
    $\mathbf{x},\mathbf{y}\in\mathbb{N}^d$ consists of the following equations.
\begin{eqnarray}
    \label{lps-displacement}
    \mathbf{x}+\sum_{i=0}^n\Delta(\alpha_i)+
    \sum_{i=1}^n\phi(\beta_i){\cdot}\Delta(\beta_i) &=& \mathbf{y}, \\
    \label{lps-alpha}
    \mathbf{x}+\sum_{i=1}^k(\Delta(\alpha_{i-1})+\phi(\beta_i)\Delta(\beta_i))+
    \Delta(\alpha_{k}[1,\ldots,l]) &=& \mathbf{z}^1_{k,l}, \\
    \label{lps-beta1}
    \mathbf{x}+\sum_{i=1}^k(\Delta(\alpha_{i-1})+\phi(\beta_i)\Delta(\beta_i))+
    \Delta(\alpha_{k})+\Delta(\beta_{k+1}[1,\ldots,l]) &=& \mathbf{z}^2_{k+1,l}, \\
    \label{lps-betan}
    \mathbf{x}+\sum_{i=1}^{k+1}(\Delta(\alpha_{i-1})+
    \phi(\beta_i)\Delta(\beta_i))-     \Delta(\beta_{k+1}[l,\ldots,|\beta_{k+1}|]) &=& \mathbf{z}^3_{k+1,l}.
\end{eqnarray}
\end{definition}
In the above definition $0\le k\le n$ and $0\le l\le|\alpha_k|$ in~(\ref{lps-alpha}), $0\le k\le n-1$ and $1\le l\le|\beta_{k+1}|$ in~(\ref{lps-beta1}) and~(\ref{lps-betan}).
To check if a path of the form $\alpha_0\beta_1^{e^1}\alpha_1\beta_2^{e_2}\ldots\beta_{n}^{e_n}\alpha_n$ is a walk from $p(\mathbf{m})$ to $q(\mathbf{n})$, one only has to verify the following three conditions:
\begin{itemize}
    \item The displacement of the path is $\mathbf{n}-\mathbf{m}$. This is equation~(\ref{lps-displacement}).
    \item For every prefix $\alpha_k'$ of $\alpha_k$, $\alpha_0\beta_1^{e^1}\alpha_1\beta_2^{e_2}\ldots\beta_{k-1}^{e_{k-1}}\alpha_k'$ is a walk.
        This is equation~(\ref{lps-alpha}).
    \item For every cycle $\beta_k$, both the paths throughout the first lap and the last lap are in the first octant.
        This is the equations~(\ref{lps-beta1}) and~(\ref{lps-betan}).
\end{itemize}
We will write more briefly $\mathbf{y}=\mathscr{E}_{\rho}(\mathbf{x})$ for $\mathscr{E}_{\rho}$.
Suppose $\mathbf{f}$ is a solution to $\mathbf{y}=\mathscr{E}_{\rho}(\mathbf{x})$.
We write for example $\mathbf{f}(\mathbf{x})$ for the vector assigned to $\mathbf{x}$ by the solution $\mathbf{f}$.

The homogeneous LPS system $\mathscr{E}^0_{\rho}$ is defined by the following equations.
\begin{eqnarray}
        \mathbf{x}^0+\sum_{i=1}^n\phi^0(\beta_i)\Delta(\beta_i)&=&\mathbf{y}^0, \\
        \mathbf{x}^0+\sum_{i=1}^k\phi^0(\beta_i)\Delta(\beta_i)&=&\mathbf{z}^{0,1}_{k,l}, \\
        \mathbf{x}^0+\sum_{i=1}^k\phi^0(\beta_i)\Delta(\beta_i)&=&\mathbf{z}^{0,2}_{k+1,l}, \\
        \mathbf{x}^0+\sum_{i=1}^{k+1}\phi^0(\beta_i)\Delta(\beta_i)&=&\mathbf{z}^{0,3}_{k+1,l}.
\end{eqnarray}
Again we write more briefly $\mathbf{y}^0=\mathscr{E}^0_{\rho}(\mathbf{x}^0)$ for $\mathscr{E}^0_{\rho}$.
Let $\mathbf{f}^0$ be a solution to $\mathbf{y}^0=\mathscr{E}^0_{\rho}(\mathbf{x}^0)$.
The summation $\mathbf{f}+h\mathbf{f}^0$ is a solution to $\mathscr{E}_{\rho}$ for all $h\in\mathbb{N}$.

\begin{proposition}\label{LPSCh}
Every linear path scheme $\rho$ is characterized by a system $\mathscr{E}_{\rho}$ of linear Diophantine equations in the sense that $\mathscr{E}_{\rho}$ has a solution if and only if $p(\mathbf{\mathbf{m}})\xrightarrow{\rho}q(\mathbf{n})$.
\end{proposition}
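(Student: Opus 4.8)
The plan is to prove the two directions of the equivalence by exhibiting an explicit correspondence between the solutions of $\mathscr{E}_{\rho}$, sought in the nonnegative integers, and the walks conforming to the pattern $\rho$. The correspondence identifies the variable $\phi(\beta_i)$ with the repetition count $e_i$ of the cycle $\beta_i$, and each auxiliary variable $\mathbf{z}^1_{k,l},\mathbf{z}^2_{k+1,l},\mathbf{z}^3_{k+1,l}$ with the intermediate \emph{location} reached at the corresponding point along the walk. The content of equations~(\ref{lps-alpha}), (\ref{lps-beta1}) and~(\ref{lps-betan}) is precisely that these $\mathbf{z}$-vectors record, by telescoping the displacements, the locations visited while traversing the $\alpha_k$ blocks, the first lap of each $\beta_{k+1}$, and the last lap of each $\beta_{k+1}$, respectively. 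The requirement that a solution lie in $\mathbb{N}^d$ is then exactly the requirement that these locations be legal, i.e.\ that the path be a walk (a run in $\mathbb{N}^d$).

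For the direction $p(\mathbf{m})\xrightarrow{\rho}q(\mathbf{n})\Rightarrow\mathscr{E}_{\rho}$ has a solution, I would take the witnessing exponents $e_1,\ldots,e_n$ with $p(\mathbf{m})\xrightarrow{\alpha_0\beta_1^{e_1}\cdots\beta_n^{e_n}\alpha_n}q(\mathbf{n})$, set $\mathbf{x}=\mathbf{m}$, $\mathbf{y}=\mathbf{n}$, $\phi(\beta_i)=e_i$, and read off the $\mathbf{z}$-variables as the actual locations visited. Equation~(\ref{lps-displacement}) holds because the total displacement is $\mathbf{n}-\mathbf{m}$, and the remaining equations hold by construction; nonnegativity of every $\mathbf{z}$ is immediate since each records a configuration of a genuine walk. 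The only point needing care is a cycle used zero times, where the corresponding $\mathbf{z}^2,\mathbf{z}^3$ describe fictitious first and last laps; there one either restricts to $e_i\ge 1$ or observes that such a cycle can be removed from $\rho$ without changing the walk.

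The substantial direction is the converse. Given a solution, I set $e_i=\phi(\beta_i)$ and claim $\alpha_0\beta_1^{e_1}\cdots\beta_n^{e_n}\alpha_n$ is a walk from $p(\mathbf{m})$ to $q(\mathbf{n})$. The endpoints and the overall displacement are governed by~(\ref{lps-displacement}). To certify legality I must check that \emph{every} visited configuration lies in $\mathbb{N}^d$: the configurations inside the $\alpha_k$ blocks equal the $\mathbf{z}^1_{k,l}\ge\mathbf{0}$, and those on the first and last lap of each $\beta_{k+1}$ equal $\mathbf{z}^2_{k+1,l}\ge\mathbf{0}$ and $\mathbf{z}^3_{k+1,l}\ge\mathbf{0}$, all supplied directly by the system. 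The main obstacle, and the reason the system mentions only the first and the last lap of each cycle, is to handle the \emph{middle} laps, which are not constrained at all. I would establish the key monotonicity observation: inside a single block $\beta^{e}$ the displacement per lap is the constant $\Delta(\beta)$, so the location reached at a fixed phase $l$ of lap $t$ is $C+(t-1)\Delta(\beta)+\Delta(\beta[1,\ldots,l])$, an affine function of $t$ whose slope in coordinate $j$ is $\Delta(\beta)(j)$. Hence in each coordinate this location is monotone in $t$ and attains its minimum either at $t=1$ or at $t=e$. Consequently every middle-lap configuration is, coordinatewise, bounded below by the minimum of the corresponding first-lap and last-lap configurations, both of which are nonnegative by the $\mathbf{z}^2,\mathbf{z}^3$ (and the $\mathbf{z}^1$ endpoint) constraints. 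This shows the whole path stays in $\mathbb{N}^d$, completing the converse and hence the equivalence. I would note that this argument is purely local to each cycle block and does not use the zigzag-free hypothesis, which is only needed later, e.g.\ in Lemma~\ref{LZY2023-07-28}.
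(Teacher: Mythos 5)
Your proof is correct and follows essentially the same route as the paper's: identify $\phi(\beta_i)$ with the repetition counts and the $\mathbf{z}$-variables with visited locations, certify the $\alpha_k$ blocks and the first/last laps directly from the system, and dispose of the middle laps by coordinatewise monotonicity of $C+(t-1)\Delta(\beta)+\Delta(\beta[1,\ldots,l])$ in the lap index $t$. You actually spell out two points the paper leaves implicit --- the exact form of the monotonicity argument, and the $e_i=0$ edge case in the converse direction --- both handled correctly.
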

\begin{proof}
Suppose $\rho=\alpha_0\beta_1^*\alpha_1\cdots\beta_n^*\alpha_n$ is a linear path scheme of $G=(Q,T,p,q)$ that starts with $p$ and ends with $q$.
Suppose $\mathbf{f}$ is a solution to $\mathbf{y}=\mathscr{E}_{\rho}(\mathbf{x})$ and $\mathbf{m}=\mathbf{f}(\mathbf{x})$ and $\mathbf{n}=\mathbf{f}(\mathbf{y})$.
We claim that the path defined by a solution $\mathbf{f}$ is a walk from $p(\mathbf{m})$ to $q(\mathbf{n})$.
Let $\pi=\alpha_0\beta_1^{\mathbf{f}(\phi(\beta_1))}\alpha_1\cdots\beta_n^{\mathbf{f}(\phi(\beta_n))}\alpha_n$.
We only need to prove by induction that no configurations in the path fall outside $\mathbb{N}^d$.
This is done by induction.
Suppose $p(\mathbf{m})\xrightarrow{\alpha_0\beta_1^{\mathbf{f}(\phi(\beta_1))}\alpha_1\cdots\beta_k^{\mathbf{f}(\phi(\beta_k))}}p_k(\mathbf{m}_k)$ holds for $k$.
There are two cases to consider.
  \begin{itemize}
    \item Since the equation (\ref{lps-alpha}) holds, one has, for all $l\in[|\alpha_{k}|]$, the following
    \[\mathbf{m}_k+\Delta(\alpha_{k}[1,\ldots,l])=\mathbf{m}+\sum_{i=1}^{k+1}(\Delta(\alpha_{i-1})+\mathbf{f}(\phi(\beta_i))\Delta(\beta_i))+
    \Delta(\alpha_{k}[1,\ldots,l])=\mathbf{f}(\mathbf{z}^1_{k,l})\ge\mathbf{0}.\]
So the run of $\alpha_{k}$ that starts with $p_k(\mathbf{m}_k)$ never drops below $0$.
    \item Let the configuration after $\alpha_{k}$ be $p'_k(\mathbf{m}'_k)$, the equations~(\ref{lps-beta1}) and~(\ref{lps-betan}) show that, for all $l\in[|\beta_{k+1}|]$, one has the following inequations.
    \[
        \mathbf{m}'_{k}+\Delta(\beta_{k+1}[1,\ldots,l])=\mathbf{f}(\mathbf{z}^2_{k+1,l})\ge\mathbf{0},
    \]
    \[
       \mathbf{m}'_k+(\mathbf{f}(\phi(\beta_{k+1}))-1)\Delta(\beta_{k+1})+\Delta(\beta_{k+1}[1,\ldots,l])=\mathbf{f}(\mathbf{z}^3_{k+1,l})\ge\mathbf{0}.
    \]
    The first shows that $p'_k(\mathbf{m}'_k)$ can perform the first cycle $\beta_{k+1}$ and the second guarantees that $p'_k(\mathbf{m}'_k+(\mathbf{f}(\phi(\beta_{k+1}))-1)\Delta(\beta_{k+1}))$ can do the last cycle $\beta_{k+1}$.
    By monotonicity $p_k(\mathbf{m}'_k)$ can perform the cycle $\beta_{k+1}$ for $\mathbf{f}(\phi(\beta_{k+1}))$ times.
    \end{itemize}
The implication in the other direction is clear.
\end{proof}

\subsection{Walks in Effectively $2$-Dimensional $3$-VASSes}\label{sec-2023-03-12}

Let $G$ be an effectively $2$-dimensional $3$-VASS.
We will show that walks between two configurations in $G$ can be captured by linear path schemes.
The proof follows that of Corollary~\ref{linearpathscheme-2023-07-18} of Blondin~\cite{blondin2015reachability}.
To start with we prove that regional walks between two configurations can be captured by linear path schemes.
We will establish two facts.
\begin{enumerate}
    \item For large polynomial $C(n)$ every walk with both the start location and the final location in $[C(|G|),+\infty)^3$ can be captured by a linear path scheme.
    This is Proposition~\ref{Theroembig3vass}.
    \item For every polynomial $C(n)>0$ every walk with both the start location and the final location bounded by $C(|G|)$ in at least one dimension can be captured by a linear path scheme.
        This is Lemma~\ref{theorembounded3vass} and lemma~\ref{theoremcombbounded3vass}.
\end{enumerate}
Now fix some large polynomial $C(n)$.
A walk in the first octant generally passes through both regions.
We will show that the walk can be decomposed into a polynomial number of local walks, each of them falls in either $[C(|G|),+\infty)^3$ or $[C(|G|)]_0{\times}\mathbb{N}^2\cup \mathbb{N}{\times}[C(|G|)]_0{\times}\mathbb{N} \cup \mathbb{N}^2{\times}[C(|G|)]_0$.
So the whole walk can be captured by a linear path scheme with a polynomial number of cycles.
We will actually let the two regions to overlap so that the start locations and the final locations of the local walks fall in the overlapping space.

The following is a generation of Proposition~\ref{theorem2linearpathscheme}.
\begin{proposition}\label{Theroembig3vass}
Suppose $G=(Q,T)$ is an effectively $2$-dimensional $3$-VASS.
There exists a constant $c$ depending on $G$ such that, for all $\mathbf{m},\mathbf{n}\in\mathbb{D}^3$, the following statements are valid.
 \begin{enumerate}
 \item If $q(\mathbf{m})\rightarrow_{\mathbb{N}^3}q(\mathbf{n})$, then there exists a linear path scheme $\rho\in\mathscr{L}_c(G)$ zigzag free in $Z_{\mathbf{n}-\mathbf{m}}$ with at most two cycles such that $q(\mathbf{m})\xrightarrow{\rho}q(\mathbf{n})$.
 \item If $p(\mathbf{m})\rightarrow_{\mathbb{D}^3}q(\mathbf{n})$, then there exists a linear path scheme $\rho\in\mathscr{L}_c(G)$ with at most $2|Q|$ cycles such that $p(\mathbf{m})\xrightarrow{\rho}q(\mathbf{n})$.
     \end{enumerate}
\end{proposition}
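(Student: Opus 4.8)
The plan is to reduce both statements to their two-dimensional analogues in Proposition~\ref{theorem2linearpathscheme}, exploiting the defining feature of an effectively $2$-dimensional $G$: the space $V_G$ spanned by all cycle displacements is a rational plane, so there is a nonzero $\mathbf{c}\in\mathbb{Z}^3$ with $V_G=\{\mathbf{x}\in\mathbb{Q}^3:\mathbf{c}^{\dag}\mathbf{x}=0\}$, whence $\mathbf{c}^{\dag}\Delta(\beta)=0$ for every cycle $\beta$. For the first statement the walk starts and ends at the same state $q$, so its displacement $\mathbf{n}-\mathbf{m}$, being a sum of cycle displacements, also lies in $V_G$. First I would fix a coordinate $j\in[3]$ with $\mathbf{c}(j)\neq0$; then the projection $\pi_j:\mathbb{Z}^3\to\mathbb{Z}^2$ that forgets the $j$-th coordinate is injective on $V_G$ and turns $G$ into a genuine $2$-VASS $G'=\pi_j(G)$ on the same underlying graph, whose transition displacements are the $\pi_j$-images. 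Since the two retained coordinates of a walk of $G$ remain nonnegative, $q(\mathbf{m})\to_{\mathbb{N}^3}q(\mathbf{n})$ in $G$ yields $q(\pi_j\mathbf{m})\to_{\mathbb{N}^2}q(\pi_j\mathbf{n})$ in $G'$, and likewise for the $\mathbb{D}$-reachability needed in the second statement.

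Next I would apply Proposition~\ref{theorem2linearpathscheme} to $G'$, obtaining a scheme $\rho'\in\mathscr{L}_c(G')$ that is zigzag free in $Z_{\pi_j(\mathbf{n}-\mathbf{m})}$ with at most two cycles (first statement) or has at most $2|Q|$ cycles (second statement), and then lift $\rho'$ to the scheme $\rho$ of $G$ that repeats the same edge sequence. As $G$ and $G'$ share the underlying graph, $\rho\in\mathscr{L}_c(G)$ and $\rho$ already realizes the prescribed displacement in the two retained coordinates. It then remains to promote $\rho$ to a genuine walk of $G$ in $\mathbb{N}^3$ and, for the first statement, to full zigzag freeness in $Z_{\mathbf{n}-\mathbf{m}}$. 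Here the deep-interior hypothesis $\mathbf{m},\mathbf{n}\in\mathbb{D}^3$ does the work: once $\rho$ is known to be zigzag free in all three coordinates, Lemma~\ref{LZY2023-07-28} guarantees that the lifted path never leaves the first octant, so nonnegativity in the projected-out coordinate is automatic.

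The hard part will be this third coordinate, namely showing that the cycles $\Delta(\beta_1),\Delta(\beta_2)$ returned by the $2$-VASS result also fall into $Z_{\mathbf{n}-\mathbf{m}}$ in the discarded coordinate $j$. Because $\Delta(\beta_i)\in V_G$, that coordinate is the fixed linear function $\Delta(\beta_i)(j)=-\mathbf{c}(j)^{-1}\sum_{k\neq j}\mathbf{c}(k)\Delta(\beta_i)(k)$ of the two retained ones, whose signs are controlled by $\rho'$; the trouble is that, when the entries of $\mathbf{c}$ have mixed signs, this combination need not keep a constant sign across the two cycles, so zigzag freeness does not transfer for free. I would resolve this by choosing $j$ \emph{adapted to the target octant} $Z_{\mathbf{n}-\mathbf{m}}$: when $Z_{\mathbf{n}-\mathbf{m}}\cap V_G$ is two-dimensional the entries of $\mathbf{c}$ must have mixed signs, and discarding the coordinate carrying the minority sign forces the induced sign of the $j$-th coordinate on $Z_{\pi_j(\mathbf{n}-\mathbf{m})}\cap V_G$ to agree with that of $(\mathbf{n}-\mathbf{m})(j)$, so zigzag freeness extends to all three coordinates; the degenerate cases, where $Z_{\mathbf{n}-\mathbf{m}}\cap V_G$ is a ray or a single point, are easier and handled separately. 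Equivalently, one can argue directly inside the plane $V_G$ that the cone $Z_{\mathbf{n}-\mathbf{m}}\cap V_G$ containing $\mathbf{n}-\mathbf{m}$ is generated by cycle displacements and invoke the two-dimensional Carath\'eodory bound there.

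For the second statement no zigzag freeness is required, and I would obtain the bound $2|Q|$ by a backbone decomposition rather than by a direct lift. A $\mathbb{D}^3$-walk from $p$ to $q$ stays deep in the octant at every step, so I would extract a simple backbone path from $p$ to $q$ through at most $|Q|$ states and treat the cyclic behaviour at each backbone vertex as a same-state $\mathbb{N}^3$-walk, to which the first statement applies, yielding at most two cycles per vertex and hence at most $2|Q|$ cycles in total. Each such excursion is zigzag free and, with endpoints in $\mathbb{D}^3$, stays in the octant by Lemma~\ref{LZY2023-07-28}; reassembling the short backbone edges and the excursion schemes gives a single linear path scheme in $\mathscr{L}_c(G)$ realizing $p(\mathbf{m})\xrightarrow{\rho}q(\mathbf{n})$.
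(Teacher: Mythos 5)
Your proposal is correct and follows essentially the same route as the paper: project out the coordinate whose sign on $V_G$ is forced by the other two (your ``minority sign of $\mathbf{c}$'' analysis is exactly what the paper's Lemma~\ref{lemma2dimvectorspace} establishes), apply Proposition~\ref{theorem2linearpathscheme} to the projected $2$-VASS, lift the scheme back using Lemma~\ref{LZY2023-07-28} for nonnegativity in the discarded coordinate, and prove the second statement by the same backbone decomposition into at most $|Q|$ same-state excursions each contributing two cycles. The only cosmetic difference is that you phrase the coordinate-selection step via the normal vector $\mathbf{c}$ rather than via the intersection lines of $V_G$ with the axis planes.
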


Proposition~\ref{Theroembig3vass} is proved by projecting an effectively $2$-dimensional $3$-VASS onto an axis plane.
By Proposition~\ref{theorem2linearpathscheme} and Corollary~\ref{linearpathscheme-2023-07-18} a walk in a $2$-VASS can be captured by a linear path scheme.
We hope to be able to restore the $3$-dimensional linear path scheme from such a $2$-dimensional linear path scheme in the projected $2$-VASS.
The question is onto which axis plane should the hyperplane be projected.
To shed more light on the question, let's take a look at an example.
Consider a $3$-VASS with single state $p$ and two transitions $\mathbf{t}_1=(1,0,-1)$ and $\mathbf{t}_2=(0,1,1)$.
We ask if $p(150,200,100)$ is reachable from $p(50,50,50)$.
If we project the $3$-VASS onto the first two dimensions, we can simply get a linear path scheme $(t_1')^{100}(t_2')^{150}$, where $t_1'=(1,0)$ and $t_2'=(0,1)$, that captures a walk from $p(50,50)$ to $p(150,200)$.
However $(t_1)^{100}(t_2)^{150}$ is not a walk in the $3$-VASS because the values in the third dimension goes into the negative somewhere in the path.
Instead of projecting onto the first and the second entries, let's project a $3$-tuple onto its first and third entries.
It is not difficult to see that all walks from $p(50,50)$ to $p(150,100)$ are captured by linear path schemes.
For example $(\mathbf{t}_1''\mathbf{t}_2'')^{100}\left(\mathbf{t}_2''\right)^{50}$ is one such linear path scheme, where $t_1''=(1,-1)$ and $t_2''=(0,1)$.
In the latter case the sign of the values in the first and the third dimensions determine the sign of the second entry.
As long as the first and the third entries are non-negative, so is the second entry.
The example should help understand the next lemma and its proof.
\begin{lemma}\label{lemma2dimvectorspace}
    Let $V\subseteq \mathbb{Q}^3$ be a $2$-dimensional subspace.
    If the vector space spanned by $V\cap Z_{(\#_1,\#_2,\#_3)}$ is $V$, then there exist distinct $i,j\in[3]$ such that  $\mathbf{m}\in Z_{(\#_1,\#_2,\#_3)}$ whenever $\mathbf{m}\in V$, $\mathbf{m}(i)\#_i 0$ and $\mathbf{m}(j)\#_j 0$.
\end{lemma}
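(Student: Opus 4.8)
The plan is to reduce the statement to the case where the zone is the nonnegative octant and then read off the two determining coordinates from a defining equation of the plane $V$. First I would normalise the zone: replacing the coordinate $\mathbf{m}(k)$ by $-\mathbf{m}(k)$ for every $k$ with $\#_k={\le}$ is a linear isomorphism of $\mathbb{Q}^3$ that carries $Z_{(\#_1,\#_2,\#_3)}$ to $Z_{(\ge,\ge,\ge)}$, sends $V$ to another $2$-dimensional subspace $V'$, and transforms both the hypothesis that $V\cap Z_{(\#_1,\#_2,\#_3)}$ spans $V$ and the desired conclusion into their counterparts for $V'$ and the octant. Hence I may assume $\#_1=\#_2=\#_3={\ge}$ throughout.

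Since $V'$ is a $2$-dimensional subspace of $\mathbb{Q}^3$, it is the kernel of a single nonzero linear functional, i.e. $V'=\{\mathbf{m}:a_1\mathbf{m}(1)+a_2\mathbf{m}(2)+a_3\mathbf{m}(3)=0\}$ for some $(a_1,a_2,a_3)\in\mathbb{Q}^3\setminus\{\mathbf{0}\}$. I would call a coordinate $k$ \emph{dominating} if $a_k\neq0$ and $a_\ell a_k\le0$ for every $\ell\neq k$, that is, the remaining two coefficients each have the sign opposite to $a_k$ or vanish. The core of the argument is the claim that the spanning hypothesis forces a dominating coordinate to exist.

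To prove the claim I would argue by contraposition. Writing $P$ and $N$ for the sets of coordinates with strictly positive and strictly negative coefficient, a coordinate is dominating exactly when it is the unique element of $P$ or the unique element of $N$; hence the absence of a dominating coordinate means $|P|\neq1$ and $|N|\neq1$. Since $|P|+|N|\le3$ and $(a_1,a_2,a_3)\neq\mathbf{0}$, this leaves only the cases in which all nonzero coefficients share one sign and at least two of them are nonzero. In any such case, for $\mathbf{m}\in V'\cap Z_{(\ge,\ge,\ge)}$ the equation $a_1\mathbf{m}(1)+a_2\mathbf{m}(2)+a_3\mathbf{m}(3)=0$ is a sum of terms of a single sign, forcing each term to vanish; thus $\mathbf{m}(\ell)=0$ whenever $a_\ell\neq0$, so $V'\cap Z_{(\ge,\ge,\ge)}$ lies in a coordinate subspace of dimension at most one and cannot span the $2$-dimensional space $V'$, a contradiction.

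Finally, given a dominating coordinate $k$, I would take $\{i,j\}=[3]\setminus\{k\}$. For $\mathbf{m}\in V'$ with $\mathbf{m}(i)\ge0$ and $\mathbf{m}(j)\ge0$, the defining equation gives $a_k\mathbf{m}(k)=-\sum_{\ell\neq k}a_\ell\mathbf{m}(\ell)$; since $a_\ell a_k\le0$ and $\mathbf{m}(\ell)\ge0$ for each $\ell\neq k$, a direct sign check yields $\mathbf{m}(k)\ge0$, so $\mathbf{m}\in Z_{(\ge,\ge,\ge)}$. Undoing the normalisation produces the indices $i,j$ demanded by the lemma. I expect the main obstacle to be the bookkeeping in the contrapositive step, namely pinning down exactly which sign patterns of $(a_1,a_2,a_3)$ survive the spanning hypothesis and treating the vanishing coefficients uniformly alongside the strictly signed ones, rather than the closing sign computation, which is routine.
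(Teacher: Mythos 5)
Your proof is correct, and it takes a noticeably different route from the paper's. The paper works directly in the given zone with the defining equation $ax+by+cz=0$, argues geometrically that the $2$-dimensional cone $V\cap Z_{(\#_1,\#_2,\#_3)}$ must meet two of the coordinate planes, extracts the two boundary direction vectors $I_{yz},I_{xz}$ of that cone, and writes an arbitrary $\mathbf{m}$ satisfying the two sign conditions as a nonnegative combination of them; the degenerate case where some coefficient vanishes is dismissed as ``easier'' without detail. You instead normalise to the nonnegative octant, isolate a purely combinatorial condition on the sign pattern of $(a_1,a_2,a_3)$ (your ``dominating coordinate''), show by contraposition that the spanning hypothesis forces it (otherwise $V'\cap Z_{(\ge,\ge,\ge)}$ collapses into a line), and then close with a one-line sign computation. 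The underlying insight is the same --- the plane's equation determines one coordinate's sign from the other two, and the spanning hypothesis rules out the bad coefficient patterns --- but your version buys a uniform treatment of vanishing coefficients (which the paper hand-waves), avoids the slightly delicate claim that every such $\mathbf{m}$ decomposes as a \emph{nonnegative} combination of the two boundary rays, and is easier to check; the paper's version, in exchange, makes the geometry of the cone and its extremal rays explicit, which is closer in spirit to how the lemma is subsequently used to transfer zigzag-freeness between a cycle and its projection.
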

\begin{proof}
Suppose $V\cap Z_{(\#_1,\#_2,\#_3)}$ is proper in the sense that $V\cap Z_{(\#_1,\#_2,\#_3)}$ spans $V$.
Suppose there are $a\ne0$, $b\ne0$ and $c\ne0$ such that $V$ is the hyperplane defined by the equation
    \begin{equation}\label{2023-07-18}
        ax+by+cz = 0.
    \end{equation}
The intersection of this hyperplane with the hyperplane $x=0$, respectively the hyperplane $y=0$, and the hyperplane $z=0$ are respectively the following lines:
\begin{eqnarray}
by+cz = 0 & & \text{on the YOZ axis plane}, \label{2023-08-08-a} \\
ax+cz = 0 & & \text{on the XOZ axis plane}, \label{2023-08-08-b} \\
ax+by = 0 & & \text{on the XOY axis plane} \nonumber.
\end{eqnarray}
Since the intersection $V\cap Z_{(\#_1,\#_2,\#_3)}$ is proper, the hyperplane~(\ref{2023-07-18}) must intersect with two axis planes in the zone $Z_{(\#_1,\#_2,\#_3)}$.
Without loss of generality assume that $V$ intersects with~(\ref{2023-08-08-a}) and~(\ref{2023-08-08-b}) in the zone $Z_{(\#_1,\#_2,\#_3)}$.
Let
\begin{equation}
I_{yz} = \left(0,1_{\#_2},-1_{\#_2}{\cdot}\frac{b}{c}\right),\ I_{xz}=\left(1_{\#_1},0,-1_{\#_1}{\cdot}\frac{a}{c}\right)
\end{equation}
be the direction vectors with one unit length entry, where $1_{\#_1},1_{\#_2}$ are defined as follows:
\[
1_{\#_1} = \left\{\begin{array}{ll}
1, & \text{if }\#_1\text{ is }\ge, \\
-1, & \text{if }\#_1\text{ is }\le,
\end{array}\right.
\text{ respectively, }
1_{\#_2} = \left\{\begin{array}{ll}
1, & \text{if }\#_2\text{ is }\ge, \\
-1, & \text{if }\#_2\text{ is }\le,
\end{array}\right.
\]
Notice that by definition $\left(-1_{\#_2}{\cdot}\frac{c}{b}\right)\#_3 0$ and $\left(-1_{\#_1}{\cdot}\frac{c}{a}\right)\#_3 0$.
Then for all $\mathbf{m}\in V$, the inequalities $\mathbf{m}(1)\#_1 0$ and $\mathbf{m}(3)\#_3 0$ imply $\mathbf{m}(2)\#_2 0$.
This is because $\mathbf{m}$ can be represented as $dI_{xy}+d'I_{yz}$ for some nonnegative $d,d'\in\mathbb{Q}$.
But then
    \begin{equation}\label{yd2023-07-28}
    \mathbf{m}=dI_{xy}+d'I_{yz}=\left(1_{\#_1}{\cdot}d',1_{\#_2}{\cdot}d,-d{\cdot}1_{\#_2}{\cdot}\frac{c}{b}-d'{\cdot}1_{\#_1}{\cdot}\frac{c}{a}\right)\in Z_{(\#_1,\#_2,\#_3)}.
    \end{equation}
If in~(\ref{2023-07-18}) the conditions $a\ne0$, $b\ne0$ and $c\ne0$ are not satisfied, the proof is easier.
\end{proof}

Lemma~\ref{lemma2dimvectorspace} is really about the promotion of the zigzag free property.
Suppose $V$ is a $2$-dimensional subspace, and $i=1,j=2$.
Suppose $\beta$ is a $3$-dimensional cycle such that $\Delta(\beta)\in V$, $\Delta(\beta)(1)\#_1 0$ and $\Delta(\beta)(2)\#_2 0$, meaning that the projection of the cycle $\beta$ onto the $XOY$ plane is zigzag free in the quadrant $(\#_1,\#_2)$.
Then $\beta(3)\#_3 0$ can be interpreted as saying that $\beta$ is zigzag free in the octant $Z_{(\#_1,\#_2,\#_3)}$.
With this observation we are able to prove Proposition~\ref{Theroembig3vass}.

\begin{proof}[Proof of Proposition~\ref{Theroembig3vass}]
Suppose the dimension of the space spanned by $V_G\cap Z_{\mathbf{n}-\mathbf{m}}$ is $2$.
The situation is trivial if the dimension is strictly less than $2$.
To prove the first proposition, let $Z_{\mathbf{n}-\mathbf{m}}=(\#_1,\#_2,\#_3)$.
By Lemma~\ref{lemma2dimvectorspace} we may assume without loss of generality that for all $\mathbf{w}\in V_G$, $\mathbf{w}(3)\#_3 0$ whenever $\mathbf{w}(1)\#_1 0$ and $\mathbf{w}(2)\#_2 0$.
Let $V_G$ be spanned by $(1,0,a), (0,1,b)$ for some $a,b\in\mathbb{Q}$.
Let $f:\mathbb{Z}^3\rightarrow \mathbb{Z}^2$ be the projection function defined by $f((x_1,x_2,x_3))= (x_1,x_2)$.
Using the linear equality that defines the subspace $V_G$, it is easy to define the inverse function $f^{-1}$.
The $2$-VASS $G_{xy}=(Q_{xy},T_{xy})$ is defined by
\begin{itemize}
\item $Q_{xy}=Q$, and
\item $T_{xy}=\{(p,f(\mathbf{a}),q)\mid(p,\mathbf{a},q)\in T\}$.
\end{itemize}
Let $\mathcal{D}_{xy}$ and $\mathbb{D}_{xy}$ be defined like in~(\ref{D-2023-07-27-a}) and~(\ref{D-2023-07-27-b}) respectively for the $2$-VASS $G_{xy}$.
Because $\mathcal{D}_{xy}\le\mathcal{D}$, we may use $\mathcal{D}$ and $\mathbb{D}$ defined for $G$ instead of $\mathcal{D}_{xy}$ and $\mathbb{D}_{xy}$ in the following argument.
Since $q(\mathbf{m})\xrightarrow{\pi}_{\mathbb{N}^3}q(\mathbf{n})$ in $G$ for some path $\pi$, we have $q(f(\mathbf{\mathbf{m}}))\xrightarrow{\pi'}_{\mathbb{N}^2}q(f(\mathbf{n}))$ in $G_{xy}$ obtained by projection.
We may write $\pi'= f(\pi)$ by extending $f$ to the projection function that maps $T^*$ onto $T_{xy}^*$ in the obvious manner.
So we may write $q(f(\mathbf{\mathbf{m}}))\xrightarrow{f(\pi)}_{\mathbb{N}^2}q(f(\mathbf{n}))$.
Since $f(\mathbf{m}),f(\mathbf{n})\in \mathbb{D}^2$, according to Proposition~\ref{theorem2linearpathscheme}, there exists a zigzag free linear path scheme $\rho_{xy}=\alpha_0\beta_1^*\alpha_1\beta_2^*\alpha_2\in\mathscr{L}_{c}(G)$ such that $q(f(\mathbf{m}))\xrightarrow{\rho_{xy}}q(f(\mathbf{n}))$.
The zigzag-free property ensures that both $\Delta(\beta_1)$ and $\Delta(\beta_2)$ belong to the quadrant $(\#_1,\#_2)$.
By Lemma~\ref{lemma2dimvectorspace} the displacements of the cycles $f^{-1}(\beta_1)$, $f^{-1}(\beta_2)$ are zigzag free in the octant $Z_{\mathbf{n}-\mathbf{m}}$.

Let $\pi_{xy}=\alpha_0\beta_1^{e_1}\alpha_1\beta_2^{e_2}\alpha_2$ be such that $q(f(\mathbf{m}))\xrightarrow{\pi_{xy}}q(f(\mathbf{n}))$.
Evidently $\Delta(\pi_{xy})=\Delta(\pi')=\Delta(f(\pi))=f(\Delta(\pi))$.
    We also need to prove $\Delta(f^{-1}(\pi_{xy}))=\Delta(\pi)$, which boils down to checking the equality in the third dimension.
    This is essentially due to the fact that the values in the third entry are linear combinations of the values in the first and the second entries.
    If $\mathbf{m}(3)=f^{-1}((\mathbf{m}(1),\mathbf{m}(2)))$, then
    \[
    \Delta(f^{-1}(\pi_{xy}))(3) =a\Delta(\pi_{xy})(1) + b\Delta(\pi_{xy})(2) =a\Delta(\pi)(1) + b\Delta(\pi)(2) =\Delta(\pi)(3).
    \]
    If $\mathbf{m}(3)\not=f^{-1}((\mathbf{m}(1),\mathbf{m}(2)))$, the shift $\mathbf{m}(3)-f^{-1}((\mathbf{m}(1),\mathbf{m}(2)))$ is maintained throughout $f^{-1}(\pi_{xy})$.
    Because $\mathbf{m}\in\mathbb{D}$, the shift does not cause the path to go out of the first octant.
    We still need to prove that $q(\mathbf{m})\xrightarrow{f^{-1}(\pi_{xy})}q(\mathbf{n})$, that is the path is in the first octant.
    But this is just Lemma~\ref{LZY2023-07-28}.
The proof of the first proposition is completed.

The second proposition is proved by making good use of the first proposition.
Suppose $p(\mathbf{m})\rightarrow_{\mathbb{D}^3}q(\mathbf{n})$.
There exists a walk
    \[
        p(\mathbf{m})\xrightarrow{\alpha_0}q_1(\mathbf{m}_1)\xrightarrow{\beta_1}q_1(\mathbf{m}'_1)\xrightarrow{\alpha_1}\cdots q_k(\mathbf{m}_k)\xrightarrow{\beta_k}q_k(\mathbf{m}'_k)\xrightarrow{\alpha_{k+1}}q(\mathbf{n})
        \]
    such that
    \begin{itemize}
        \item $\mathbf{m}_i,\mathbf{m}'_i\in\mathbb{D}^3$, and for all $i\in[k]$,
        \item $|\alpha_i|\leq |Q|$, $q_i(\mathbf{m}_i)$ is the first configuration in which $q_i$ appears and $q_i(\mathbf{m}'_i)$ is the last configuration in which $q_i$ appears.
    \end{itemize}
    By construction $k\leq |Q|$.
    Applying the first proposition, each $\beta_i$ can be captured by a linear path scheme $\alpha_{i,0}\beta^*_{i,1}\alpha_{i,1}\beta^*_{i,2}\alpha_{i,2}$. Thus the walk can be captured by a linear path scheme $\rho$ where $|\rho|\leq (|Q|+\|T\|)^{O(1)}$, and $\rho$ has no more than $2|Q|$ cycles.
\end{proof}

Let's demonstrate the combined power of Lemma~\ref{lemma2dimvectorspace} and Proposition~\ref{Theroembig3vass} by an example.
Consider the $3$-VASS defined in Figure~\ref{fig-eff3vass}.
We look for a linear path scheme that captures a walk from $p(\mathbf{u})$ to $p(\mathbf{v})$, where $\mathbf{u}=(22,22,22)$ and $\mathbf{v}=(42,42,22)$.
The intersection $V_G\cap Z_{\mathbf{v}-\mathbf{u}}$ is above the $YOZ$ plane.
Therefore $\mathbf{x}(2)>0$ and $\mathbf{x}(3)<0$ imply $\mathbf{x}(1)>0$ for every $\mathbf{x}\in V_G$.
We obtain the $2$-VASS $G_{yz}$ by projecting $G$ onto the $YOZ$ plane.
The projection function is $f(\mathbf{x})=(\mathbf{x}(2),\mathbf{x}(3))$.
Let $\mathbf{t}_1',\mathbf{t}_2',\mathbf{t}_3'$ be the projections of $\mathbf{t}_1,\mathbf{t}_2,\mathbf{t}_3$ respectively.
The following path $\pi$ is from $p(f(\mathbf{u}))$ to $q(f(\mathbf{v}))$:
\[
p(22,22)\xrightarrow{\mathbf{t}'_1}q(21,20)\xrightarrow{(\mathbf{t}'_3)^{20}}q(41,60)\xrightarrow{\mathbf{t}'_2}p(42,60)\xrightarrow{(\mathbf{t}'_1\mathbf{t}'_2)^{19}}p(42,22).
\]
The displacement of $\mathbf{t}'_3$ and $\mathbf{t}'_1\mathbf{t}'_2$ are zigzag-free.
Consequently the displacement of $\mathbf{t}_3$ and $\mathbf{t}_1\mathbf{t}_2$ are zigzag-free.
It is clear that $p(\mathbf{m})$ can engage in the walk $\mathbf{t}_1(\mathbf{t}_3)\mathbf{t}_2(\mathbf{t}_1\mathbf{t_2})$. Therefore the following is a walk.
\[p(22,22,22)\xrightarrow{\mathbf{t}_1}q(22,21,20)\xrightarrow{(\mathbf{t}_3)^{20}}q(22,41,60)\xrightarrow{\mathbf{t}_2}p(23,42,60)\xrightarrow{(\mathbf{t}_1\mathbf{t}_2)^{19}}p(42,42,22).\]

Next we consider the walks that are close to an axis plane.
We shall repeat an argument of~\cite{blondin2015reachability} to the effectively $2$-dimensional $3$-VASS.
Set $\mathbb{I}_1=[2\mathscr{D}]_0\times\mathbb{N}^2$, $\mathbb{I}_2=\mathbb{N}\times[2\mathscr{D}]_0\times\mathbb{N}$, and $\mathbb{I}_3=\mathbb{N}^2\times[2\mathscr{D}]_0$.
Let $\,\mathbb{I}_{\mathscr{D}}$ stand for one of $\mathbb{I}_1$, $\mathbb{I}_2$, $\mathbb{I}_3$.
Consider a walk in $\,\mathbb{I}_{\mathscr{D}}$.
Values in the bounded dimension are encoded into the states as it were.
A state becomes a pair $(q,t)$ with $q\in Q$ and $t\in[2\mathscr{D}]_0$.
This operation transforms the effectively $2$-dimensional $3$-VASS to a $2$-VASS, the reachability of the latter can be characterized by linear path schemes according to Proposition~\ref{theorem2linearpathscheme}, hence the following lemma.
\begin{lemma}\label{theorembounded3vass}
    Suppose $\mathbf{m},\mathbf{n}\in \mathbb{I}_{\mathscr{D}}$.
    Then
    $p(\mathbf{m})\rightarrow_{\mathbb{I}_{\mathscr{D}}}q(\mathbf{n})$ if and only if there exists a linear path scheme $\rho$ with size $|\rho|\leq (|Q|+\|T\|)^{O(1)}$ such that $p(\mathbf{m})\xrightarrow{\rho}q(\mathbf{n})$.
\end{lemma}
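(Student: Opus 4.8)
The plan is to reduce the statement to the $2$-VASS characterization of Corollary~\ref{linearpathscheme-2023-07-18} by absorbing the one bounded coordinate into the finite control. Write $\mathbb{I}_{\mathscr{D}}=\mathbb{I}_1=[2\mathscr{D}]_0\times\mathbb{N}^2$ (the cases $\mathbb{I}_2,\mathbb{I}_3$ being symmetric), so that the first coordinate is the bounded one. I would construct a genuine $2$-VASS $G'=(Q',T')$ with $Q'=Q\times[2\mathscr{D}]_0$, where a state $(q,t)$ remembers the control state $q$ together with the current value $t$ of the bounded coordinate. For every transition $(p,\mathbf{a},q)\in T$ with $\mathbf{a}=(a_1,a_2,a_3)$ and every $t\in[2\mathscr{D}]_0$ with $t+a_1\in[2\mathscr{D}]_0$, I add the transition $\big((p,t),(a_2,a_3),(q,t+a_1)\big)$ to $T'$; the digraph model admits the multi-edges this may produce.

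The heart of the argument is the exact correspondence between $\mathbb{I}_1$-walks of $G$ and $\mathbb{N}^2$-walks of $G'$: the $G$-configuration $p(m_1,m_2,m_3)$ lies in $\mathbb{I}_1$ precisely when $m_1\in[2\mathscr{D}]_0$ and $(m_2,m_3)\in\mathbb{N}^2$, and these are exactly the conditions enforced, respectively, by the state component and by the location of the corresponding $G'$-configuration with state $(p,m_1)$ and location $(m_2,m_3)$. The admissibility guard $t+a_1\in[2\mathscr{D}]_0$ on $T'$ encodes that a step keeps the bounded coordinate in range, while nonnegativity of the remaining two coordinates is the $\mathbb{N}^2$ condition of $G'$. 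Hence $p(m_1,m_2,m_3)\rightarrow_{\mathbb{I}_1}q(n_1,n_2,n_3)$ in $G$ holds if and only if the encoding configurations are connected by a walk in $G'$, and this correspondence is length-preserving.

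Assuming $\mathbf{m},\mathbf{n}\in\mathbb{I}_1$, I would apply Corollary~\ref{linearpathscheme-2023-07-18} to $G'$ to obtain a linear path scheme $\rho'\in\mathscr{L}_c(G')$ capturing the $G'$-walk, with $|\rho'|\le(|Q'|+\|T'\|)^{c}$ for the absolute constant $c$ of the $2$-VASS characterization. Since $\mathscr{D}$ is polynomial in $|G|=|Q|+\|T\|$, we have $|Q'|\le(2\mathscr{D}{+}1)|Q|$ and $\|T'\|\le(2\mathscr{D}{+}1)\|T\|$, so $|Q'|+\|T'\|$ is polynomial in $|G|$ and therefore $|\rho'|\le(|Q|+\|T\|)^{O(1)}$. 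Finally I would transport $\rho'$ back to a scheme $\rho$ of $G$ of the same length by replacing each $G'$-transition $\big((p,t),(b_2,b_3),(q,t')\big)$ occurring in $\rho'$ by its unique preimage $(p,(t'{-}t,b_2,b_3),q)\in T$; the inverse is single-valued because a $G'$-transition records both endpoints $t,t'$ of the bounded coordinate, and the encoded $\mathbb{N}^2$-walk then lifts to the required $\mathbb{I}_1$-walk $p(\mathbf{m})\xrightarrow{\rho}q(\mathbf{n})$. The converse direction is the easy one: every cycle of $\rho$ is the lift of a cycle of $G'$, which returns to the same state $(q,t)$ and hence has zero displacement in the bounded coordinate, so the captured walk never leaves $\mathbb{I}_1$ and witnesses $p(\mathbf{m})\rightarrow_{\mathbb{I}_{\mathscr{D}}}q(\mathbf{n})$.

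The step I expect to require the most care is the two-way faithfulness of the encoding: checking that the guard $t+a_1\in[2\mathscr{D}]_0$ matches $\mathbb{I}_1$-membership at \emph{every} intermediate configuration, and that the transition-level inverse is well defined so that $\rho'$ genuinely lifts to a scheme of $G$ rather than merely a path. Everything else---the polynomial size accounting, which only uses that $\mathscr{D}$ is polynomial in $|G|$, and the appeal to the $2$-VASS characterization---is routine.
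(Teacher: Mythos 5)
Your proposal is correct and follows essentially the same route as the paper: the paper likewise encodes the bounded coordinate into the finite control to build the $2$-VASS $\widehat{G}$ with states $q_i$ ($i\in[2\mathscr{D}]_0$) and transitions $\bigl(p_i,(\mathbf{a}(2),\mathbf{a}(3)),q_{i+\mathbf{a}(1)}\bigr)$ guarded by $i,i+\mathbf{a}(1)\in[2\mathscr{D}]_0$, invokes the $2$-VASS linear-path-scheme characterization, and transfers the resulting scheme back to $G$. The only cosmetic difference is that you cite Corollary~\ref{linearpathscheme-2023-07-18} where the paper cites Proposition~\ref{theorem2linearpathscheme}, and you spell out the faithfulness of the encoding that the paper dispatches as ``a simple induction.''
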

\begin{proof}
Consider the case that $\mathbb{I}_{\mathscr{D}}=[2\mathscr{D}]_0\times\mathbb{N}^2$. Define a $2$-VASS $\widehat{G}=(\widehat{Q},\widehat{T})$ as follows:
    \begin{eqnarray*}
        \widehat{Q} &\defn& \left\{q_i\mid q\in Q,\ i\in [2\mathscr{D}]_0\right\}, \\
        \widehat{T} &\defn& \left\{\left(p_i,(\mathbf{a}(2),\mathbf{a}(3)),q_{i+\mathbf{a}(1)}\right)\mid (p,\mathbf{a},q)\in T\text{ and } i,i+\mathbf{a}(1)\in[2\mathscr{D}]_0\right\}.
    \end{eqnarray*}
    A simple induction shows that for every pair $\mathbf{m},\mathbf{n}\in \mathbb{I}_{\mathscr{D}}$, the following equivalence is valid.
    \begin{eqnarray}
        \text{``}p(\mathbf{m})\rightarrow_{\mathbb{I}_{\mathscr{D}}}q(\mathbf{n})\ in \ G\text{''}\text{ iff } \text{``}p_{\mathbf{m}(1)}((\mathbf{m}(2),\mathbf{m}(3)))\rightarrow_{\mathbb{N}^2} q_{\mathbf{n}(1)}((\mathbf{n}(2),\mathbf{n}(3)))\ in\ \widehat{G}\text{''}.\label{eqn3vassreduction}
    \end{eqnarray}
    By Proposition~\ref{theorem2linearpathscheme} there exists a linear path scheme $\widehat{\rho}$ with size $|\widehat{\rho}|\leq (|\widehat{Q}|+\|\widehat{T}\|)^{O(1)}$ such that $p_{\mathbf{m}(1)}((\mathbf{m}(2),\mathbf{m}(3)))\xrightarrow{\widehat{\rho}}q_{\mathbf{n}(1)}((\mathbf{n}(2),\mathbf{n}(3)))$ in $\widehat{G}$.
    Evidently we can transfer $\widehat{\rho}$ back to a linear path scheme in $G$ whose length is bounded by $(|Q|+\|T\|)^{O(1)}$.
\end{proof}

Having proved Lemma~\ref{theorembounded3vass}, we consider next the situation where a walk may go from one of the regions $\mathbb{I}_1,\mathbb{I}_2,\mathbb{I}_3$ to another.
Let $\mathbb{L}_{\mathscr{D}}$ be the union $\mathbb{I}_1\cup\mathbb{I}_2\cup\mathbb{I}_3$.
The following lemma shows that the walks in $\mathbb{L}_{\mathscr{D}}$ can also be captured by linear path schemes.
\begin{lemma}\label{theoremcombbounded3vass}
Suppose $G$ is an effectively $2$-dimensional $3$-VASS and $\mathbf{m},\mathbf{n}\in \mathbb{L}_{\mathscr{D}}$.
Then
    $p(\mathbf{m})\rightarrow_{\mathbb{L}_{\mathscr{D}}}q(\mathbf{n})$ if and only if there exists a linear path scheme $\rho$ with size $|\rho|\leq (|Q|+\|T\|)^{O(1)}$ such that $p(\mathbf{m})\xrightarrow{\rho}q(\mathbf{n})$.
\end{lemma}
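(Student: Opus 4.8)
The plan is to reduce walks in $\mathbb{L}_{\mathscr{D}}=\mathbb{I}_1\cup\mathbb{I}_2\cup\mathbb{I}_3$ to walks in a single $2$-VASS of polynomial size, and then to invoke Corollary~\ref{linearpathscheme-2023-07-18}. The backward implication is the easy one: any $p(\mathbf{m})\xrightarrow{\rho}q(\mathbf{n})$ is by definition a walk, and the scheme we construct below is assembled from segments each lying in a single slab $\mathbb{I}_k$, so its walk stays in $\mathbb{L}_{\mathscr{D}}$. For the forward implication I would first decompose a walk $p(\mathbf{m})\rightarrow_{\mathbb{L}_{\mathscr{D}}}q(\mathbf{n})$ into maximal factors, each contained in one of the three slabs. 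Consecutive factors meet at a configuration lying in a pairwise intersection $\mathbb{I}_i\cap\mathbb{I}_j$, where two of the three coordinates are bounded by $2\mathscr{D}$. Lemma~\ref{theorembounded3vass} already captures each single-slab factor by a polynomial-size linear path scheme, so the genuine difficulty is to control how the factors are chained together, and in particular to bound the number of slab crossings.

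To organise the chaining I would build one $2$-VASS $\widehat{G}$ that glues the three slab encodings of Lemma~\ref{theorembounded3vass} along their shared interfaces. In the chart for $\mathbb{I}_k$ the bounded $k$-th coordinate, a value in $[2\mathscr{D}]_0$, is folded into the state, leaving the other two coordinates as the two counters. Accordingly $\widehat{G}$ has state set $Q\times\{1,2,3\}\times[2\mathscr{D}]_0$, recording the original state, the currently active chart $k$, and the folded value; its size is polynomial because $\mathscr{D}$ is polynomial in $|G|$. Transitions of $G$ are copied into each chart exactly as in Lemma~\ref{theorembounded3vass}, and to pass from chart $i$ to chart $j$ I add \emph{switch} transitions, available only at configurations of $\mathbb{I}_i\cap\mathbb{I}_j$. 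There both the $i$-th and the $j$-th coordinates are bounded, so moving one of them out of the counters into the state and the other out of the state into the counters is a displacement of bounded magnitude, and there are only polynomially many such switches. Crucially, since $G$ is effectively $2$-dimensional with $V_G$ spanned by $(1,0,a),(0,1,b)$ as in the proof of Proposition~\ref{Theroembig3vass}, the three coordinate projections are compatible isomorphisms on $V_G$, so the counters carry consistent information across a switch and no displacement is lost.

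With $\widehat{G}$ in hand, a walk $p(\mathbf{m})\rightarrow_{\mathbb{L}_{\mathscr{D}}}q(\mathbf{n})$ corresponds, by a straightforward induction in the spirit of~(\ref{eqn3vassreduction}), to a walk between the matching configurations of $\widehat{G}$, and conversely. Since $\widehat{G}$ is a $2$-VASS of polynomial size, Proposition~\ref{theorem2linearpathscheme} and Corollary~\ref{linearpathscheme-2023-07-18} yield a linear path scheme $\widehat{\rho}$ of polynomial length capturing this walk; in particular the number of slab crossings is controlled automatically, because the unbounded repetitions are absorbed into the at most $2|\widehat{Q}|$ cycles of $\widehat{\rho}$. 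Finally I would transfer $\widehat{\rho}$ back to a linear path scheme $\rho$ of $G$ of length $(|Q|+\|T\|)^{O(1)}$, reversing the folding exactly as at the end of the proof of Lemma~\ref{theorembounded3vass}.

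The step I expect to be the main obstacle is the correctness and consistency of the gluing, that is, establishing the walk correspondence between $G$ restricted to $\mathbb{L}_{\mathscr{D}}$ and $\widehat{G}$. The delicate points are (i) that a single configuration may sit in two or three slabs at once on an intersection, so the choice of active chart is not canonical and one must check that the switch transitions let every genuine $\mathbb{L}_{\mathscr{D}}$-walk be simulated irrespective of this choice, and (ii) that the switch transitions never manufacture a walk of $\widehat{G}$ whose image leaves $\mathbb{L}_{\mathscr{D}}$ or violates nonnegativity in the folded coordinate. Both are settled by the boundedness of the interface coordinates together with the linear dependence coming from effective $2$-dimensionality, but carefully verifying them is where the real work lies.
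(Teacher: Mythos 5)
Your reduction to a single glued $2$-VASS $\widehat{G}$ has a genuine soundness gap at exactly the point you flag as ``where the real work lies,'' and I do not believe the boundedness of the interface coordinates settles it. A switch from chart $i$ to chart $j$ must move the $j$-th coordinate from a counter into the finite control, i.e.\ it must \emph{read} the exact value of a counter. A VASS transition cannot do this: a switch edge from state $(q,i,g)$ to $(q,j,g')$ with counter displacement $(g-g',0)$ is enabled whenever the $x_j$-counter is at least $g'-g$ (and the result nonnegative), not only when it equals $g'$. Hence $\widehat{G}$ admits spurious walks in which the folded value recorded in the state disagrees with the actual coordinate, and these can still reach the target encoding of $q(\mathbf{n})$ if the errors of several wrong switches cancel. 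This breaks both directions of your correspondence: the backward implication outright, and also the forward one, because Corollary~\ref{linearpathscheme-2023-07-18} applied to $\widehat{G}$ only returns a scheme capturing \emph{some} walk of $\widehat{G}$ between the two encodings, which may be a spurious one that does not transfer back to a walk of $G$. The linear relation $ax+by+cz=\mathrm{const}$ coming from effective $2$-dimensionality does constrain genuine runs of $G$, but $\widehat{G}$ has no mechanism to enforce it, so it cannot rule these runs out. (You also silently assume $V_G$ is not parallel to a coordinate plane when you say the three projections are compatible isomorphisms; that degenerate case needs separate treatment.)

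The paper avoids building a glued $2$-VASS altogether. Its proof first disposes of the case where $V_G$ is parallel to a coordinate plane, and otherwise bounds the number of slab crossings \emph{combinatorially}: if a walk crosses more than $3|Q|(2\mathscr{D}+1)^2$ times, the pigeonhole principle yields a cycle of displacement $(0,0,a)\in V_G$, and too many further crossings of another interface would yield $(0,b,0)\in V_G$ with $b\neq 0$, forcing $V_G$ parallel to $YOZ$ --- a contradiction. A second argument of the same flavour shows each factor lying in $\mathbb{I}_1\cup\mathbb{I}_2$ stays in a strip of polynomial width, so the walk segments into polynomially many pieces, each handled by Lemma~\ref{theorembounded3vass} or by an essentially $1$-dimensional region, and the schemes are simply concatenated. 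This crossing-count argument is the heart of the lemma and is absent from your proposal; to repair your approach you would need either a sound switch gadget (which the absence of zero tests forbids) or an independent bound on the number of chart changes, which brings you back to the paper's argument.
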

\begin{proof}
If the vector space $V_G$ is parallel to one of the coordinate planes, say the $XOY$-plane, then the effectively $2$-dimensional $3$-VASS is essentially a $2$-VASS.
We are done by applying Proposition~\ref{Theroembig3vass}.
  In the rest of the proof we assume that $V_G$ is not parallel to any coordinate plane.
  Let
  \[
  \mathbb{J}_{ij}\stackrel{\rm def}{=}\mathbb{I}_i\cap\mathbb{I}_j,\text{ where } i<j.
  \]
These are the regions of crossing points.
  A walk $\pi$ in $\mathbb{L}_{\mathscr{D}}$ from $p(\mathbf{m})$ to $q(\mathbf{n})$ may pass many crossing points.
  It is generally of the following form
      \begin{equation}\label{2023-06-08}
          p(\mathbf{m})\xrightarrow{\pi_1}_{\mathbb{B}_1}p_1(\mathbf{m}_1)\xrightarrow{\pi_2}_{\mathbb{B}_2}p_2(\mathbf{m}_2)\xrightarrow{\pi_3}_{\mathbb{B}_3}\cdots \xrightarrow{\pi_k}_{\mathbb{B}_{k}}p_k(\mathbf{m}_k)\xrightarrow{\pi_{k+1}}_{\mathbb{B}_{k+1}}q(\mathbf{n}),
      \end{equation}
  where
      \begin{itemize}
          \item $\mathbb{B}_i\in\{\mathbb{I}_1,\mathbb{I}_2,\mathbb{I}_3\}$ all $i\in[k+1]$,
          \item $\mathbb{B}_{i+1}\neq\mathbb{B}_i$ for all $i\in[k]$, and
          \item $\mathbf{m}_i\in \{\mathbb{J}_{12},\mathbb{J}_{13},\mathbb{J}_{23}\}$ for all $i\in[k]$.
      \end{itemize}
  If $k\leq 3{\cdot}|Q|{\cdot} (2\mathscr{D}\,{+}\,1)^2$, then since by Lemma~\ref{theorembounded3vass} each $\pi_j$ can be captured by a linear path scheme of length bounded by $(|Q|+\|T\|)^{O(1)}$, the path $\pi$ itself is then captured by a linear path scheme of length bounded by $(|Q|+\|T\|)^{O(1)}$.
  Next suppose $k>3{\cdot}|Q|{\cdot} (2\mathscr{D}\,{+}\,1)^2$.
  By the pigeon hole principle there must exist $i<j$ such that $p_i=p_j$ and at least two of the equalities $\mathbf{m}_i(1)=\mathbf{m}_j(1)$, $\mathbf{m}_i(2)=\mathbf{m}_j(2)$, $\mathbf{m}_i(3)=\mathbf{m}_j(3)$ are valid.
  Without loss of generality assume that $\mathbf{m}_i(1)=\mathbf{m}_j(1)$ and $\mathbf{m}_i(2)=\mathbf{m}_j(2)$.
  The existence of the cycle implies that for some $a$,
  \begin{equation}\label{2023-07-01}
  \mathbf{a}\stackrel{\rm def}{=}\mathbf{m}_j-\mathbf{m}_i=(0,0,a)\in V_G.
  \end{equation}
  If also $|\{i\mid \mathbf{m}_i\in\mathbb{J}_{13}\}|> |Q|{\cdot} (2\mathscr{D}\,{+}\,1)^2$, then by the same argument there would be some circular walk whose displacement is of the form $(0,b,0)\in V_G$ with $b\ne0$.
  This is a contradiction because the existence of $\mathbf{a}$ and $(0,b,0)$ would imply that $V_G$ is parallel to the coordinate plane $Y{O}Z$.
  The same line of reasoning applies to the set $\{i\mid \mathbf{m}_i\in\mathbb{J}_{23}\}$ as well.
  Therefore
  \begin{eqnarray}
  |\{i\mid \mathbf{m}_i\in\mathbb{J}_{13}\}| &\le& |Q|{\cdot} (2\mathscr{D}\,{+}\,1)^2, \label{20230-6-09-a}\\
  |\{i\mid \mathbf{m}_i\in\mathbb{J}_{23}\}| &\le& |Q|{\cdot} (2\mathscr{D}\,{+}\,1)^2. \label{20230-6-09-b}
  \end{eqnarray}
 It follows from the inequalities~(\ref{20230-6-09-a}) and~(\ref{20230-6-09-b}) that $|\{i\mid \mathbb{B}_i=\mathbb{I}_3\}|\le|Q|{\cdot}(2\mathscr{D}\,{+}\,1)^2$. Let the set $\{i\mid \mathbb{B}_i=\mathbb{I}_3\}$ be $\{i_1,\ldots,i_l\}$ such that $i_1<\ldots<i_l$.
 Assume that $\pi$ does not contain any sub-walk from a configuration to itself.
By setting $p_1(\mathbf{o}_1)=p(\mathbf{m})$ and $p_{i_l+1}'(\mathbf{o}_{i_l+1}')=q(\mathbf{n})$, the walk $\pi$ can be rearranged into the following form:
  \begin{equation}\label{rearrangedwalk}
  p_1(\mathbf{o}_1)\xrightarrow{\pi'_1}_{\mathbb{I}_1\cup\mathbb{I}_2}p_1'(\mathbf{o}_1')\xrightarrow{\pi_{i_1}}_{\mathbb{I}_3}p_2(\mathbf{o}_2)\xrightarrow{\pi'_2}_{\mathbb{I}_1\cup\mathbb{I}_2}\cdots \xrightarrow{\pi_{i_{l}}}_{\mathbb{I}_3}p_{i_{l}}(\mathbf{o}_{i_{l}})
  \xrightarrow{\pi_{l+1}'}_{\mathbb{I}_1\cup\mathbb{I}_2}p'_{i_{l+1}}(\mathbf{o}_{i_{l+1}}').
  \end{equation}
  We need to bound $|\pi_1'|,\ldots,|\pi_{l+1}'|$.
  Suppose $s\in[l+1]$ and that $\pi'_s$ lies completely in $\mathbb{I}_1$.
  Let \[\mathscr{D}' \stackrel{\rm def}{=} |Q|{\cdot} (2\mathscr{D}\,{+}\,1){\cdot}\|T\|.\]
  We prove that $\pi'_s$ stays completely in the region
  \begin{eqnarray*}
  \mathbb{K} &\stackrel{\rm def}{=}& [2\mathscr{D}]_0\times\left[\max\{0,\mathbf{o}_s(2)-\mathscr{D}'\},\mathbf{o}_s(2)+\mathscr{D}'\right]\times\mathbb{N}.
  \end{eqnarray*}
  Suppose
  \begin{equation}\label{2023-07-20}
  p_{s}(\mathbf{o}_{s})\xrightarrow{\varpi}_{\mathbb{I}_{1}}p'(\mathbf{o}') \xrightarrow{\varpi'}_{\mathbb{I}_{1}}p''(\mathbf{o}'')
  \end{equation}
  is an initial sequence of $\pi'_s$ such that $p_{s}(\mathbf{o}_{s})\xrightarrow{\varpi}_{\mathbb{I}_{1}}p'(\mathbf{o}')$ only passes the locations in $\mathbb{K}$ and $\mathbf{o}''(2)>\mathbf{o}_s(2)+\mathscr{D}'$.
Remove from~(\ref{2023-07-20}) sub-paths of the form $\overline{p}(\overline{\mathbf{o}})\xrightarrow{\overline{\pi}}_{\mathbb{I}_{1}}\overline{p}(\overline{\mathbf{o}}')$
      such that $\overline{\mathbf{o}}(2)=\overline{\mathbf{o}}'(2)$.
In the path so obtained there must be a sub-path $\underline{p}(\underline{\mathbf{o}})\xrightarrow{\underline{\pi}}_{\mathbb{I}_{1}}\underline{p}(\underline{\mathbf{o}}')$
      such that $\underline{\mathbf{o}}(1)=\underline{\mathbf{o}}'(1)$.
Then $\underline{\mathbf{o}}-\underline{\mathbf{o}}'$ is some $(0,b,c)$ such that $b\ne0$.
Together with the vector $\mathbf{a}$ in~(\ref{2023-07-01}) it implies that $V_G$ is parallel to the coordinate plane $Y{O}Z$, contradicting to the assumption.
For the same reason $\mathbf{o}''(2)<\mathbf{o}_s(2)-\mathscr{D}'$ is impossible.
We conclude that $\pi'_{s}$ never goes outside $\mathbb{K}$.
By the same argument one shows that if $\pi'_s$ lies completely in $\mathbb{I}_2$, it must stay within the region $\left[\max\{0,\mathbf{o}_s(1)-\mathscr{D}'\},\mathbf{o}_s(1)+\mathscr{D}'\right]\times[2\mathscr{D}]_0\times\mathbb{N}$.
It is now easy to see that if $\pi_s'$ goes from $\mathbb{I}_1$ to $\mathbb{I}_2$, or vice versa, then it must lie in the region
\begin{eqnarray*}
\mathbb{L}_{12} &\stackrel{\rm def}{=}& \left([2\mathscr{D}]_0\times\left[\mathscr{D}'\right]_0\times\mathbb{N}\right) \cup \left([\mathscr{D}']_0\times[2\mathscr{D}]_0\times\mathbb{N}\right).
\end{eqnarray*}

  Based on the above discussion, the walk $\pi$ can be segmented in the following form:
  \begin{equation}\label{rearranged_walk2}
      p(\mathbf{m})\xrightarrow{\pi''_1}_{\mathbb{C}_1}p_1(\mathbf{m}'_1)\xrightarrow{\pi''_2}_{\mathbb{C}_2}p_2(\mathbf{m}'_2)\xrightarrow{\pi_3''}_{\mathbb{C}_3}\cdots \xrightarrow{\pi''_{k'-1}}_{\mathbb{C}_{k'-1}}p_{k'-1}(\mathbf{m}_{k'-1})\xrightarrow{\pi''_{k'}}_{\mathbb{C}_{k'}}q(\mathbf{n}),
  \end{equation}
  where
  \begin{itemize}
      \item $\mathbb{C}_i\in\{\mathbb{I}_1,\mathbb{I}_2,\mathbb{I}_3,\mathbb{L}_{12}\}$ all $i\in[k'+1]$,
      \item $\mathbb{C}_{i+1}\neq\mathbb{C}_i$ for all $i\in[k']$, and
      \item $\mathbf{m}_i\in \{\mathbb{J}_{13},\mathbb{J}_{23},\mathbb{J}_{12}\}$ for all $i\in[k']$.
  \end{itemize}
  Notice that by~(\ref{20230-6-09-a}) and~(\ref{20230-6-09-b}), the number of the occurrences of $\mathbb{I}_3$ is at most $|Q|{\cdot} (2\mathscr{D}\,{+}\,1)^2$.
 Consequently $k'\leq 2{\cdot}|Q|{\cdot}(2\mathscr{D}\,{+}\,1)^2$.
  By Lemma~\ref{theorembounded3vass}, each sub-walk $\pi_j$ in $\mathbb{I}_1,\mathbb{I}_2,\mathbb{I}_3$ can be captured by a linear path scheme of length bounded by $(|Q|+\|T\|)^{O(1)}$, noticing that $\mathscr{D}'$ is also $(|Q|+\|T\|)^{O(1)}$.
  The region $\mathbb{L}_{12}$ is essentially $1$-dimensional, so it can be captured by a linear path scheme of length bounded by $(|Q|+\|T\|)^{O(1)}$.
  Consequently the whole walk $\pi$ can be captured by a concatenation of at most $2{\cdot}|Q|{\cdot}(2\mathscr{D}\,{+}\,1)^2$ linear path schemes, each bounded by $(|Q|+\|T\|)^{O(1)}$ in length.
  The proof is complete.
\end{proof}

We have proved that a walk in an effectively $2$-dimensional $3$-VASS can be converted to a linear path scheme if either (i) the first location and the last location are high up in the first octant or (ii) one dimension is restricted.
For a general walk we divide the first octant into two regions:
\begin{itemize}
\item $\mathbb{D}$, and
\item $\mathbb{L}\defn[2\mathscr{D}+2\mathcal{T}]_0\,{\times}\, \mathbb{N}^2\cup\mathbb{N}\,{\times}\,[2\mathscr{D}+2\mathcal{T}]_0\,{\times}\,\mathbb{N}\cup \mathbb{N}^2\,{\times}\,[2\mathscr{D}+2\mathcal{T}]_0$, where $\mathcal{T}=\max\|T\|{\cdot}|Q|$.
\end{itemize}
Suppose $p(\mathbf{m})\rightarrow q(\mathbf{n})$ and consider all the configurations in the walk.
Let $q(\mathbf{m}_{q,in})$ and $q(\mathbf{m}_{q,out})$ in the walk be the first and the last configurations whose states are the same $q$ and whose locations are both in $\mathbb{D}\cap\mathbb{L}$.
Using this strategy, the walk can be segmented to consecutive sub-paths of two categories:
\begin{enumerate}
    \item circular walks from $q(\mathbf{m}_{q,in})$ to $q(\mathbf{m}_{q,out})$ for $q\in Q$ and $\mathbf{m}_{q,in},\mathbf{m}_{q,out}\in\mathbb{D}\cap\mathbb{L}$, and
    \item walks between the cycles.
\end{enumerate}
The walks in the first category can be converted into linear path schemes by Proposition~\ref{Theroembig3vass}.
The walks in the second category are bounded by $|Q|$ in length.
They are either completely in $\mathbb{D}$ or completely in $\mathbb{L}$, and can be transformed into linear path schemes by Proposition~\ref{Theroembig3vass} and respectively by a variant of Lemma~\ref{theoremcombbounded3vass} with $[2\mathscr{D}]_0$ replaced by $[2\mathscr{D}+2\mathcal{T}]_0$.
All the linear path schemes are bounded in size by $(|Q|+\|T\|)^{O(1)}$.
Hence the main result of the section.
\begin{theorem}
    \label{theorem3vasslps}
    Suppose $G$ is an effectively $2$-dimensional $3$-VASS $G$.
     If $p(\mathbf{m})\xrightarrow{G}q(\mathbf{n})$, then there exists a linear path scheme $\rho$ of size $(|Q|+\|T\|)^{O(1)}$ such that $p(\mathbf{m})\xrightarrow{\rho}q(\mathbf{n})$.
\end{theorem}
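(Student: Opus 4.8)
The plan is to stitch together the three local capture results already in hand---Proposition~\ref{Theroembig3vass} for the deep interior and Lemmas~\ref{theorembounded3vass} and~\ref{theoremcombbounded3vass} for the boundary strips---along a single walk that in general migrates freely between the deep region and the boundary. Following the partition set up just before the statement, I would cover the first octant by the deep region $\mathbb{D}$, where all three coordinates are at least $2\mathscr{D}$, and the boundary region $\mathbb{L}$, where at least one coordinate is at most $2\mathscr{D}+2\mathcal{T}$ with $\mathcal{T}=\max\|T\|{\cdot}|Q|$. Their union is the whole octant and their overlap $\mathbb{D}\cap\mathbb{L}$ is a slab of width $2\mathcal{T}$. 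The point of this width is purely geometric: a single transition moves each coordinate by at most $\max\|T\|$, so any subwalk of length at most $|Q|$ moves each coordinate by at most $\mathcal{T}$. Consequently a subwalk of length at most $|Q|$ that begins in $\mathbb{D}\cap\mathbb{L}$ cannot cross the slab: either no coordinate ever drops below $2\mathscr{D}$, so the subwalk stays in $\mathbb{D}$, or some coordinate dips below $2\mathscr{D}$ somewhere, in which case that coordinate differs from its sub-$2\mathscr{D}$ value by at most $\mathcal{T}$ throughout and hence stays below $2\mathscr{D}+2\mathcal{T}$, so the subwalk stays inside a single slab of $\mathbb{L}$.

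The second step is to decompose the given walk $p(\mathbf{m})\xrightarrow{G}q(\mathbf{n})$ along its \emph{checkpoint} configurations, namely those configurations whose location lies in $\mathbb{D}\cap\mathbb{L}$. For each state $q$ occurring at a checkpoint I take the first checkpoint $q(\mathbf{m}_{q,in})$ and the last checkpoint $q(\mathbf{m}_{q,out})$ carrying that state; the portion of the walk between them is a cyclic walk whose two endpoints lie in $\mathbb{D}\cap\mathbb{L}\subseteq\mathbb{D}^3$. Pulling out these cyclic walks leaves connecting walks in which no state recurs at a checkpoint, so each connecting walk has length at most $|Q|$ and, by the geometric observation above, lies entirely in $\mathbb{D}$ or entirely in one slab of $\mathbb{L}$. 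Since there are at most $|Q|$ states, this yields at most $|Q|$ cyclic pieces and $O(|Q|)$ connecting pieces overall.

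The third step applies the local results piecewise. Each cyclic piece has both endpoints in $\mathbb{D}^3$ and shares a common state, so it is captured by a zigzag-free linear path scheme with at most two cycles by the first part of Proposition~\ref{Theroembig3vass}, nonnegativity of the reconstructed scheme being guaranteed by Lemma~\ref{LZY2023-07-28}. Each connecting piece is captured either by Proposition~\ref{Theroembig3vass} when it lies in $\mathbb{D}$, or by the variant of Lemma~\ref{theoremcombbounded3vass} obtained by replacing the bound $2\mathscr{D}$ with $2\mathscr{D}+2\mathcal{T}$ when it lies in $\mathbb{L}$. Concatenating the resulting $O(|Q|)$ linear path schemes, each of size $(|Q|+\|T\|)^{O(1)}$, produces a single scheme $\rho$ of size $(|Q|+\|T\|)^{O(1)}$ with $p(\mathbf{m})\xrightarrow{\rho}q(\mathbf{n})$.

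I expect the main obstacle to lie in the combinatorial bookkeeping of the second step: verifying that the checkpoint decomposition genuinely reorganizes the walk into a bounded number of cyclic and connecting pieces while preserving the fact that it is a walk, and that each connecting piece is simultaneously short and confined to one region. The subtlety is that the intervals delimited by the first and last checkpoints of distinct states may interleave, so one must argue---using the monotonicity of runs together with the fact that every cyclic excursion starts and ends deep inside the octant---that these excursions can be extracted and reinserted without driving any configuration below $\mathbf{0}$, and that what remains between them truly has no repeated checkpoint state. Once this reorganization is justified, the calibration of the buffer width $2\mathcal{T}$ against the per-step displacement bound $\max\|T\|$ does the rest, and the size bound is immediate from the linear number of pieces.
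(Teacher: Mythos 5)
Your proposal is correct and follows the paper's own argument essentially step for step: the same cover of the octant by $\mathbb{D}$ and the widened boundary region $\mathbb{L}$ with buffer $2\mathcal{T}$, the same decomposition along first/last checkpoints in $\mathbb{D}\cap\mathbb{L}$ into cyclic pieces handled by Proposition~\ref{Theroembig3vass} and connecting pieces handled by Proposition~\ref{Theroembig3vass} or the variant of Lemma~\ref{theoremcombbounded3vass}. You even reproduce the paper's (inessential and, strictly speaking, unjustified) claim that the connecting pieces have length at most $|Q|$ --- the confinement of each connecting piece to a single region is better derived from the fact that it contains no interior checkpoint, since a crossing between $\mathbb{D}\setminus\mathbb{L}$ and $\mathbb{L}\setminus\mathbb{D}$ would force a configuration into $\mathbb{D}\cap\mathbb{L}$, and neither capture result actually needs a length bound on these pieces.
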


To check if there is a walk from $p(\mathbf{m})$ to $q(\mathbf{n})$, an algorithm only has to guess a linear path scheme of size $(|Q|+\|T\|)^{O(1)}$ and check if the associated LPS system has any solution.

We are in a position to prove Theorem~\ref{cor:reachabilityforeffectively2dim3vass}.
\begin{proof}[Proof of Theorem~\ref{cor:reachabilityforeffectively2dim3vass}]
Let $G=(Q,T,p,q)$ be an effectively $2$-dimensional $3$-VASS and $p(\mathbf{m}),q(\mathbf{n})$ be two configurations.
By Theorem~\ref{theorem3vasslps}, $p(\mathbf{m})\xrightarrow{G}q(\mathbf{n})$ if and only if $p(\mathbf{m})\xrightarrow{\rho}q(\mathbf{n})$ for some linear path scheme $\rho$ with $|\rho|\leq (|Q|+\|T\|)^{O(1)}$.
By Proposition~\ref{LPSCh} the LPS system $\mathbf{n}=\mathscr{E}_{\rho}(\mathbf{m})$ has a solution $\mathbf{f}$ and by Corollary~\ref{min-solu} the size of $\mathbf{f}$ is $(|Q|+\|T\|)^{O(|Q|+\|T\|)}$.
Thus a walk $p(\mathbf{m})\xrightarrow{G}q(\mathbf{n})$ can be guessed in exponential space.
\end{proof}

\section{KLMST Constructions}\label{sectionklmstalgo}

We recall in this section the main constructions in the KLMST method.
Our account follows~\cite{leroux2019reachability} closely.
All the results stated in this section are from~\cite{leroux2019reachability}.
We omit the proofs.
Special attention is paid to the treatment to the effectively $2$-dimensional VASSes since that is where our algorithm improves upon the KLMST algorithm in the 3-dimensional case.
Throughout the section we fix a VASS $G=(Q,T,q_{in},q_{out})$.

\subsection{KLM sequence}\label{sec-KLM-sequence-4-8}

A $d$-dimensional {\em KLM sequence} is a sequence of connected VASSes.
It is of the form
\begin{equation}\label{2023-04-17}
        \xi=(\mathbf{u}_0G_0\mathbf{v}_0)\mathbf{a}_1(\mathbf{u}_1G_1
        \mathbf{v}_1)\mathbf{a}_2\cdots \mathbf{a}_n(\mathbf{u}_nG_n
        \mathbf{v}_n),
\end{equation}
where
\begin{itemize}
    \item $\mathbf{u}_i,\mathbf{v}_i\in \mathbb{N}_{\omega}^{d}$ for all $i\in[n]_0$,
    \item $G_i=(Q_i,T_i,p_i,q_i)$ is a $d$-VASS for all $i\in[n]_0$, and
    \item $\mathbf{a}_i\in \mathbb{N}^d$ is the displacement of a transition from $q_{i-1}$ to $p_{i}$ for all $i\in[n]$.
        We shall call $\mathbf{a}_1,\ldots,\mathbf{a}_n$ the {\em connecting edges}.
\end{itemize}
The component $\mathbf{u}_iG_i\mathbf{v}_i$ is denoted by $\xi_i$.
If $G_i$ is the trivial VASS that contains one state and no transition, then $(\mathbf{u}_i,G_i,\mathbf{v}_i)$ is vacuous.
For convenience the transition whose displacement is $\mathbf{a}_i$ is often referred to by $\mathbf{a}_i$.
The size of $\xi$ is $2(d+1)^{d+1}(n+\sum_{i=0}^n(\|\mathbf{x}_i\|_1+|G_i|+\|\mathbf{y}_i\|_1)+\sum_{i=1}^n\|\mathbf{a}_i\|_1$), denoted by $|\xi|$.
A run $\rho$ of $\xi$ is of the form $\pi_0\mathbf{a}_1\pi_1\cdots \mathbf{a}_n\pi_n$
where $\pi_i$ is a path from $p_i$ to $q_i$ inside $G_i$.
The run $\rho$ is a {\em witness} to $\xi$ if there exists a sequence of locations $\mathbf{m}_0,\mathbf{n}_0,\ldots,\mathbf{m}_n,\mathbf{n}_n\in\mathbb{N}^{d}$ such that
\begin{itemize}
    \item $\mathbf{m}_i\sqsubseteq \mathbf{u}_i,\mathbf{n}_i\sqsubseteq\mathbf{v}_i$ for all $i\in [n]_0$, and
    \item $p_0(\mathbf{m}_0)\xrightarrow{\pi_0}q_0(\mathbf{n}_0)\xrightarrow{\mathbf{a}_1}p_1(\mathbf{m}_1)\xrightarrow{\pi_1}\cdots \xrightarrow{\mathbf{a}_n}p_n(\mathbf{m}_n)\xrightarrow{\mathbf{\pi_n}}q_n(\mathbf{n}_n)$ is a walk.
\end{itemize}
Let $\textsf{W}_{\xi}$ be the set of the witnesses to $\xi$.
If $\mathbf{u}_0,\mathbf{v}_n\in\mathbb{N}^{d}$, a witness to $\xi$ is also called a walk from $\mathbf{u}_0$ to $\mathbf{v}_n$.

We see the triple $\mathbf{u}G\mathbf{v}$ as an instance of the {\em generalized reachability problem} $\mathbb{VASS}^d$.
\begin{definition}
A triple $\mathbf{u}G\mathbf{v}$ is in $\mathbb{VASS}^d$ if $\textsf{W}_{\xi}\ne\emptyset$.
\end{definition}

The characteristic system $\mathcal{E}_{\xi}$ for $\xi$ as given in~(\ref{2023-04-17}) is a linear Diophantine system that provides an algebraic characterization of $\textsf{W}_{\xi}$.
Formally $\mathcal{E}_{\xi}$ consists of the following equations:
\begin{eqnarray}
    \mathbf{y}_i &=& \mathbf{x}_i+\sum_{t=(p,\mathbf{a},q)\in T_i}\phi_i(t){\cdot} \mathbf{a}, \label{characteristicequation}\\
    \mathbf{x}_{j+1} &=& \mathbf{y}_j+\mathbf{a}_{j+1}, \label{characteristicequation2} \\
    \mathbf{1}_{q_i}-\mathbf{1}_{p_i} &=& \sum_{t=(p,\mathbf{a},q)\in T_i}\phi_i(t){\cdot}(\mathbf{1}_q-\mathbf{1}_p), \label{characteristickirhoffsystem}\\
    \mathbf{x}_i &\sqsubseteq& \mathbf{u}_i, \label{characteristicmodel-entry} \\
    \mathbf{y}_i &\sqsubseteq& \mathbf{v}_i, \label{characteristicmodel}
\end{eqnarray}
where $i\in[n]_0$ and $j\in[n-1]_0$.
The variables $\mathbf{x}_i$ respectively the variables $\mathbf{y}_i$ are introduced for the input locations respectively the output locations of $\xi_i$.
The variables $\phi_i$ are for the number of edges in $G_i$.
In~(\ref{characteristickirhoffsystem}) the notation $\mathbf{1}_{q}$ for example is an indicator vector whose $q$-th entry is $1$ and whose other entries are $0$.
The equality(\ref{characteristickirhoffsystem}) is called \emph{Euler Condition}, which guarantees the existence of a path in $G_i$ from $p_i$ to $q_i$ whose Parikh image is $\phi_i$.
In~(\ref{characteristicmodel-entry}) and~(\ref{characteristicmodel}) the order relation imposes constraints on the input location and the output location.
If say $\mathbf{u}_i(j)=9$, there is an equation $\mathbf{x}_i(j)=9$.
If $\mathbf{u}_i(j)=\omega$, then the constraint is vacuous for $\mathbf{x}_i(j)$.
We shall only be interested in {\em non-negative} integer solutions to $\mathcal{E}_{\xi}$.
Whenever we say ``solution'' we mean ``non-negative integer solution''.
It is clear that if $\mathcal{E}_{\xi}$ is not \emph{satisfiable}, meaning that $\mathcal{E}_{\xi}$ has no solution, then $\xi$ has no witness.
In the rest of the paper we only consider KLM sequences whose characteristic systems are satisfiable.
Our nondeterministic algorithm terminates whenever it comes across an unsatisfiable KLM sequence.
Satisfiability will not be mentioned most of the time.

Let $\mathbf{f}$ be a solution to $\mathcal{E}_{\xi}$.
We denote by $\mathbf{f}(\mathbf{x}_i)$ the vector assigned to $\mathbf{x}_i$ by the solution $\mathbf{f}$.
The notations $\mathbf{f}(\mathbf{y}_i)$ and $\mathbf{f}(\phi_i)$ are interpreted in the same fashion.
The size of $\mathbf{f}$ is defined by $\sum_{i=0}^n(\|\mathbf{f}(\mathbf{x}_i)\|_1+\|\mathbf{f}(\mathbf{y}_i)\|_1+\|\mathbf{f}(\phi_i)\|_1)$.
Now let $V_{\xi}=\bigcup_{i\in[n]_0}(\{\mathbf{x}_i(k)\ |\ k\in[d]\}\cup\{\mathbf{y}_i(k)\ |\ k\in[d]\}\cup \{\phi_i(t)\ |\ t\in T_i\})$, which is the set of the variables of the equation system $\mathcal{E}_{\xi}$.
For each $z\in V_{\xi}$ let $S_{z}$ be the set $\{\mathbf{f}(z)\mid \mathbf{f}\text{ is a solution to }\mathcal{E}_{\xi}\}$.
We say that $z$ is \emph{unbounded} if $S_{z}$ is infinite.

Suppose $\mathbf{f},\mathbf{f}'$ are solutions to $\mathcal{E}_{\xi}$.
We say that $\mathbf{f}'$ is {\em over} $\mathbf{f}$ if $\mathbf{f}\le\mathbf{f}'$ by the point-wise order.
If $\mathbf{f}'$ is over $\mathbf{f}$, then $\mathbf{f}'-\mathbf{f}$ is a solution to the homogeneous characteristic system $\mathcal{E}^0_{\xi}$ defined by the following equations:
\begin{eqnarray}
    \label{homocharacteristicequation}
    \mathbf{y}^0_i&=&\mathbf{x}^0_i+\sum_{t=(p,\mathbf{a},q)\in T_i}\phi^0_i(t){\cdot}\mathbf{a}, \\
    \label{homocharacteristicequation2}
    \mathbf{x}^0_{j+1}&=&\mathbf{y}^0_j, \\
    \label{homocharacteristickirhoffsystem}
    \mathbf{0}&=&\sum_{t=(p,\mathbf{a},q)\in T_i}\phi^0_i(t){\cdot}(\mathbf{1}_q-\mathbf{1}_p), \\
    \label{homocharacteristicmodel}
    \mathbf{x}^0_i(k)&=&0, \text{ whenever } \mathbf{m}_i[k]\neq \omega, \\
    \label{homocharacteristicmodel2}
    \mathbf{y}^0_i(k)&=&0, \text{ whenever } \mathbf{n}_i[k]\neq \omega,
\end{eqnarray}
where $i\in[n]_0$ and $j\in[n-1]_0$ and $k\in[d]$.
Let $V_{\xi}^0=\bigcup_{i\in[n]_0}(\{\mathbf{x}^0_i(k)\ |\ k\in[d]\}\cup\{\mathbf{y}^0_i(k)\ |\ k\in[d]\}\cup \{\phi^0_i(t)\ |\ t\in T_i\})$.
The size of a homogeneous solution is defined similarly.
The variables in $V_{\xi},V_{\xi}^0$ are in 1-1 correspondence.
Set
\begin{equation}\label{2023-07-21}
\mathbf{h}^0=\sum\mathcal{H}(\mathcal{E}^0_{\xi}).
\end{equation}
Clearly $\mathbf{h}^0$ is a solution to $\mathcal{E}^0_{\xi}$.
The following lemma~\cite{leroux2019reachability} is immediate from definition.
\begin{lemma}\label{2023-04-15}
$\mathbf{h}^0(z^0)>0 $ if and only if $z\in V_{\xi}$ is unbounded.
Moreover the sum of the values of the bounded variables is less than $|\xi|^{|\xi|-1}$.
\end{lemma}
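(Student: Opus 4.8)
The plan is to exploit the structure theorem for the solution set of $\mathcal{E}_{\xi}$ recalled before Corollary~\ref{eq-sol}: every solution of $\mathcal{E}_{\xi}$ has the form $\mathbf{f}_0+\mathbf{g}$, where $\mathbf{f}_0\in\mathbb{S}^{=\mathbf{r}}$ is one of finitely many minimal base solutions and $\mathbf{g}$ lies in the monoid generated by $\mathcal{H}(\mathcal{E}^0_{\xi})$. Using the stated $1$-$1$ correspondence between $V_{\xi}$ and $V_{\xi}^0$, and the fact that each Hilbert base vector is non-negative, the pivotal observation for the first claim is that $\mathbf{h}^0(z^0)=\sum_{\mathbf{g}\in\mathcal{H}(\mathcal{E}^0_{\xi})}\mathbf{g}(z^0)$ is positive if and only if some generator $\mathbf{g}$ satisfies $\mathbf{g}(z^0)>0$.

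For the forward direction I would suppose $\mathbf{h}^0(z^0)>0$, pick a generator $\mathbf{g}$ with $\mathbf{g}(z^0)>0$, and fix any solution $\mathbf{f}$ of $\mathcal{E}_{\xi}$. Then $\mathbf{f}+k\mathbf{g}$ is a solution for every $k\in\mathbb{N}$ and $(\mathbf{f}+k\mathbf{g})(z)=\mathbf{f}(z)+k\mathbf{g}(z^0)$ grows without bound, so $S_z$ is infinite and $z$ is unbounded. Conversely, if $\mathbf{h}^0(z^0)=0$ then, since all generators are non-negative, every generator vanishes at $z^0$; hence the homogeneous part contributes nothing to $z$ and $\mathbf{f}(z)=\mathbf{f}_0(z)$. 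As $\mathbb{S}^{=\mathbf{r}}$ is finite, $S_z$ is finite and $z$ is bounded. This settles the equivalence.

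For the second claim I would reuse this decomposition. Fix any solution $\mathbf{f}=\mathbf{f}_0+\mathbf{g}$. For every bounded variable $z$ we have just seen that each generator vanishes at $z^0$, so $\mathbf{g}(z^0)=0$ and $\mathbf{f}(z)=\mathbf{f}_0(z)$; consequently $\sum_{z\text{ bounded}}\mathbf{f}(z)=\sum_{z\text{ bounded}}\mathbf{f}_0(z)\le\|\mathbf{f}_0\|_1$. It then remains to bound the $1$-norm of the minimal base solution $\mathbf{f}_0\in\mathbb{S}^{=\mathbf{r}}$ via Corollary~\ref{eq-sol}, giving $\|\mathbf{f}_0\|_1\le(1+k\|A\|_{1}+\|\mathbf{r}\|_{1})^{r+1}$, where $A$, $\mathbf{r}$, $k$, $r$ are the matrix, constant vector, variable count and rank of $\mathcal{E}_{\xi}$.

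The main work, and the step I expect to be the genuine obstacle, is the closing numerical estimate showing this Pottier bound lies strictly below $|\xi|^{|\xi|-1}$. I would read off from equations~(\ref{characteristicequation})–(\ref{characteristicmodel}) that the number of variables $k$, the matrix norm $\|A\|_{1}$, the constant norm $\|\mathbf{r}\|_{1}$ and the rank $r$ are each controlled by $n$, $\sum_i|G_i|$, $\sum_i\|\mathbf{a}_i\|_1$ and the finite entries of the $\mathbf{u}_i,\mathbf{v}_i$. The inflation factor $2(d+1)^{d+1}$ built into the definition of $|\xi|$ is exactly what forces the two inequalities $1+k\|A\|_{1}+\|\mathbf{r}\|_{1}\le|\xi|$ and $r+1\le|\xi|-1$, whence $\|\mathbf{f}_0\|_1\le|\xi|^{|\xi|-1}$; sharpening this to the stated strict inequality is routine once those two bounds are in hand.
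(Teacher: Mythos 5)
Your structural argument for the equivalence is correct and is essentially the only available one: since adding any element of $\mathcal{H}(\mathcal{E}^0_{\xi})$ to a solution of $\mathcal{E}_{\xi}$ yields another solution, $\mathbf{h}^0(z^0)>0$ lets you pump $z$ starting from any fixed solution (which exists by the paper's standing assumption that $\mathcal{E}_{\xi}$ is satisfiable), and conversely $\mathbf{h}^0(z^0)=0$ forces every generator to vanish at $z^0$, so $z$ only takes the finitely many values realised by the minimal inhomogeneous solutions. The paper offers no proof of its own here --- it declares the lemma immediate from the definitions and cites Leroux and Schmitz --- and your argument is the one used in that reference, so on the first claim and on the reduction of the second claim to a bound on $\|\mathbf{f}_0\|_1$ there is nothing to object to.

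The genuine gap is in the closing numerical estimate, which you correctly identify as the crux but then close with inequalities that do not hold. The bound $1+k\|A\|_{1}+\|\mathbf{r}\|_{1}\le|\xi|$ is false in general: reading off $\mathcal{E}_{\xi}$ from~(\ref{characteristicequation})--(\ref{characteristicmodel}), both the number of variables $k$ and the norm $\|A\|_{1}$ are at least linear in the unscaled quantity $N=n+\sum_{i}(\|\mathbf{u}_i\|_1+|G_i|+\|\mathbf{v}_i\|_1)+\sum_i\|\mathbf{a}_i\|_1$, so the product $k\|A\|_{1}$ grows quadratically in $N$, and the constant factor $2(d+1)^{d+1}$ (equal to $512$ for $d=3$) in the definition of $|\xi|$ cannot absorb a quadratic blowup once $N$ exceeds that constant. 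The estimate is salvageable because the slack lives in the exponent rather than in the base: one instead shows $1+k\|A\|_{1}+\|\mathbf{r}\|_{1}\le|\xi|^2$ together with $r+1\le k+1\le\tfrac{1}{2}(|\xi|-1)$ --- and it is this second inequality, comparing a small multiple of $N$ against $2(d+1)^{d+1}N$, that the inflation factor is actually there to guarantee --- whence $(1+k\|A\|_{1}+\|\mathbf{r}\|_{1})^{r+1}\le\left(|\xi|^2\right)^{(|\xi|-1)/2}=|\xi|^{|\xi|-1}$, with strictness coming from the looseness of either bound. As written, your pair of inequalities would not survive scrutiny; with this substitution the proof goes through.
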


A remarkable measure on the KLM sequences was proposed by Leroux and Schmitz~\cite{leroux2019reachability}.
The {\em ranking function} $r$ maps a VASS $G$ onto a $(d{+}1)$-dimensional vector $(r_d,r_{d-1},\ldots,r_0)$, where $r_k$ denotes the number of edges $t$ in $G$ satisfying $dim\ V_{G}(t)=k$.
The rank $r(\xi)$ of the KLM sequence~(\ref{2023-04-17}) is defined by the summation $r(\xi)=\sum_{i\in[n]}r(G_i)$.
It has been shown that the constructions of the KLMST algorithm strictly decrease the rank of any KLM sequence~\cite{leroux2019reachability}, hence the termination of the KLMST algorithm.
In the rest of the section we are going to explain three constructions relevant to our algorithm for the $3$-VASS reachability problem.

\subsection{Standardization}\label{secdecomp}

The KLMST algorithm builds on the fact that a KLM sequence can be converted to a `good' one so that further treatment to the KLM sequence is smooth.
A KLM sequence $\xi$ as in~(\ref{2023-04-17}) is \emph{strongly connected} if for every $i\in[n]_0$ the graph $G_i$ in $\xi_i$ is strongly connected;
it is \emph{saturated} if for every $i\in[n]_0$ and every $j\in[d]$, the equality $\mathbf{u}_i(j)=\omega$, respectively the equality $\mathbf{v}_i(j)=\omega$, is valid if and only if $\mathbf{x}_i(j)$, respectively $\mathbf{y}_i(j)$, is unbounded.
\begin{definition}
A KLM sequence $\xi$ is \emph{standard} if it is \emph{strongly connected} and \emph{saturated}.
\end{definition}
If $\xi_i=\mathbf{u}_iG_i\mathbf{v}_i$ is not saturated, one can update $\mathbf{u}_i,\mathbf{v}_i$ using the solution $\mathbf{h}^0$ of Lemma~\ref{2023-04-15}.
If $G_i$ is not strongly connected, one can find out the strongly connected components of $G_i$  and linearize as it were the strongly connected components of $G_i$.
Notice that there may well be several ways to linearize the strongly connected components.
A nondeterministic algorithm has to guess such a linearization.
By the Euler condition a linearization of $\xi_i=\mathbf{u}_iG_i\mathbf{v}_i$ must be of the form
\begin{equation}\label{2023-07-24}
(\mathbf{u}_{i,1}G_{i,1}\mathbf{v}_{i,1}) \mathbf{a}_{i,2} (\mathbf{u}_{i,2}G_{i,2}\mathbf{v}_{i,2})\mathbf{a}_{i,3}\ldots \mathbf{a}_{i,i_n}(\mathbf{u}_{i,i_n}G_{i,i_n}\mathbf{v}_{i,i_n}),
\end{equation}
where $\mathbf{u}_{i,1}=\mathbf{u}_{i}$ and $\mathbf{u}_{i,i_n}=\mathbf{v}_{i}$.

Using these simple manipulations one can prove the following lemma~\cite{leroux2019reachability}.
\begin{lemma}\label{theroemstandard}
A set $\Xi$ of standard KLM sequences of smaller rank can be computed from a nonstandard $\xi$ in $\mathsf{exp}(|\xi|^{|\xi|})$ time such that $\mathsf{W}_{\xi}=\bigcup_{\xi'\in\Xi}\mathsf{W}_{\xi'}$ and $|\xi'|\le|\xi|^{|\xi|}$ for all $\xi'\in \Xi$.
\end{lemma}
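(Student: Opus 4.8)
The plan is to standardize $\xi$ componentwise in two successive phases: first make every $G_i$ strongly connected by linearizing its strongly connected components, and then saturate the entry and exit vectors by reading off the unbounded variables from the homogeneous solution $\mathbf{h}^0=\sum\mathcal{H}(\mathcal{E}^0_{\xi})$ of~(\ref{2023-07-21}). Both phases make nondeterministic choices, so the output is naturally a finite set $\Xi$ rather than a single sequence. The correctness identity $\mathsf{W}_{\xi}=\bigcup_{\xi'\in\Xi}\mathsf{W}_{\xi'}$ is then established in both directions: each phase splits any witness of $\xi$ into a witness of exactly one branch (the branch being determined, for linearization, by the order in which the run visits the components, and for saturation, by the actual values taken by the bounded variables), and conversely any branch witness reassembles into a witness of $\xi$.

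For the first phase I would, for each $\xi_i=\mathbf{u}_iG_i\mathbf{v}_i$ with $G_i$ not strongly connected, compute the strongly connected components of $G_i$ together with its condensation, which is acyclic. Any path from $p_i$ to $q_i$ visits the components in an order compatible with the condensation, traversing each inter-component edge it uses in the forward direction; the Euler condition~(\ref{characteristickirhoffsystem}) guarantees that precisely these linear traversals are realizable as Parikh images. I would guess one such order and rewrite $\xi_i$ into the linearized form~(\ref{2023-07-24}), promoting every traversed inter-component edge to a connecting edge $\mathbf{a}_{i,k}$, letting the endpoints inherit $\mathbf{u}_i$ and $\mathbf{v}_i$, and leaving the fresh intermediate vectors $\mathbf{u}_{i,k},\mathbf{v}_{i,k}$ unconstrained, to be fixed in the second phase. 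The lexicographic rank strictly drops here: a promoted inter-component edge $t$ lies on no cycle of $G_i$, so $\dim V_{G_i}(t)=0$ and $t$ contributes to $r_0$; deleting it from the VASS removes it from the rank count, while for every edge kept inside a component $G_{i,k}$ the cycles through it stay inside $G_{i,k}$, so $V_{G_{i,k}}(t)=V_{G_i}(t)$ and the higher coordinates $r_d,\ldots,r_1$ are untouched.

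For the second phase I would form $\mathcal{E}^0_{\xi}$ and invoke Lemma~\ref{2023-04-15}, by which $\mathbf{h}^0(z^0)>0$ marks exactly the unbounded variables in $V_{\xi}$. Saturation only ever needs to repair one direction: whenever $\mathbf{u}_i(j)=\omega$ (respectively $\mathbf{v}_i(j)=\omega$) while $\mathbf{x}_i(j)$ (respectively $\mathbf{y}_i(j)$) is bounded, I replace $\omega$ by a concrete admissible value and branch over the finitely many choices; the reverse mismatch cannot arise, since $\mathbf{x}_i\sqsubseteq\mathbf{u}_i$ already forces a finitely marked entry to be bounded. By Lemma~\ref{2023-04-15} every bounded value lies below $|\xi|^{|\xi|-1}$, so the branching is finite and the fixed entries have norm at most $|\xi|^{|\xi|-1}$. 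The size bound is then pure bookkeeping: each $G_{i,k}$ is a subgraph of some $G_i$, the number of components grows by at most $\sum_i|Q_i|\le|\xi|$, and the saturated entries contribute norm at most $|\xi|^{|\xi|-1}$, giving $|\xi'|\le|\xi|^{|\xi|}$; the number of linearizations and of value assignments is bounded by $|\xi|^{|\xi|^{O(1)}}$, and $\mathbf{h}^0$ is produced within the same budget via Pottier's bound (Lemma~\ref{pottier-lemma}), so the whole computation fits in $\mathsf{exp}(|\xi|^{|\xi|})$ time.

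The step I expect to demand the most care is the rank analysis that underlies the well-founded termination, rather than the constructions themselves. The strict decrease comes entirely from the linearization phase, through the elimination of the dimension-zero inter-component edges described above; the saturation phase rewrites only entry and exit vectors and leaves every graph $G_i$, and hence the whole rank vector, unchanged. One therefore has to set up the rank bookkeeping so that the lexicographic rank never increases and drops exactly when strong connectedness fails, and to confirm that the rank-neutral saturation can never undo this decrease. This is the property that the KLMST termination argument consumes, and it is the point where the two otherwise routine phases must be reconciled.
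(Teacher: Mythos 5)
Your two-phase construction (linearize the strongly connected components, then saturate the $\omega$-entries against the unbounded variables read off from $\mathbf{h}^0$) is exactly the route the paper takes; the paper itself only sketches this in Section~\ref{secdecomp} and defers the details to Leroux and Schmitz, and your reconstruction of the Euler-condition argument, the size bookkeeping via Lemma~\ref{2023-04-15} and Pottier's bound, and the observation that only the direction ``$\omega$ but bounded'' needs repair are all consistent with that sketch.

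The one point you raise but do not resolve is real, and it is the only substantive issue: as you yourself note, your construction decreases the rank only through the elimination of inter-component edges, so for a $\xi$ whose graphs are all strongly connected but which merely fails saturation, every $G_i$ --- and hence the rank vector --- is unchanged, contradicting the lemma's claim that every $\xi'\in\Xi$ has \emph{smaller} rank. This is not something a more careful rank bookkeeping can fix within your approach; it reflects the fact that the Cleaning Lemma in Leroux--Schmitz is stated with a non-strict inequality $r(\xi')\le r(\xi)$, the strict decreases being supplied by the decomposition and reduction steps (Lemmas~\ref{theorembounded} and~\ref{lemmapumpdecomp}). So either you should prove the weaker (and correct) statement with $r(\xi')\le r(\xi)$ and check that the termination argument for $3$-\textsc{Klmst} still goes through --- it does, since standardization is applied at most once between consecutive rank-decreasing steps and cannot loop --- or you would need an additional argument for why an unsaturated but strongly connected component forces a rank drop, which your construction does not provide.
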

In \cite{leroux2019reachability} the standard KLM sequences are called \emph{clean} KLM sequences, and Lemma~\ref{theroemstandard} is called Cleaning Lemma.

\subsection{Decomposition}\label{sectiondimenreduct}

A KLM sequence $\xi$ as in~(\ref{2023-04-17}) is said to be \emph{unbounded} if for every $i\in[n]_0$ and every edge $t$ of $G_i=(Q_i,T_i)$, the set $S_{\phi_i(t)}$ is unbounded; it is \emph{bounded} otherwise.
According to Lemma~\ref{2023-04-15} the unboundedness is completely determined by the solution $\mathbf{h}^0$ to the homogeneous characteristic system.
Even if $G_i$ is strongly connected, there may be some $t\in T_i$ such that $S_{\phi_i(t)}$ is finite.
One can construct the strongly connected components of the {\em unbounded} edges and linearize these components.
In this way the edge $t$ occurs as a connecting edge $\phi_i(t)$ times.
If for example $G_i$ is bounded, the component $\xi_i=\mathbf{u}_iG_i\mathbf{v}_i$ can be {\em decomposed} to a KLM sequence of the form~(\ref{2023-07-24}).
There can be many decompositions of $\xi_i$.
A nondeterministic algorithm has to guess one of them.
The next lemma is also from~\cite{leroux2019reachability}.
\begin{lemma}\label{theorembounded}
    A set $\Xi$ of  KLM sequences of smaller rank can be computed from a bounded standard $\xi$ in $\mathsf{exp}(|\xi|^{|\xi|})$ time such that $\mathsf{W}_{\xi}=\bigcup_{\xi'\in\Xi}\mathsf{W}_{\xi'}$ and $|\xi'|\le|\xi|^{|\xi|}$ for all $\xi'\in \Xi$.
\end{lemma}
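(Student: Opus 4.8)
The plan is to follow the standard KLMST decomposition, replacing each bounded component by a linearization of the strongly connected components of its unbounded edges, with the bounded edges promoted to connecting edges. First I would compute $\mathbf{h}^0=\sum\mathcal{H}(\mathcal{E}^0_{\xi})$ and, by Lemma~\ref{2023-04-15}, label an edge $t$ of $G_i$ as \emph{unbounded} when $\mathbf{h}^0(\phi^0_i(t))>0$ and \emph{bounded} otherwise; since $\xi$ is bounded, at least one edge is bounded. For each $i$ let $H_i$ be the subgraph of $G_i$ carrying exactly the unbounded edges, and compute its strongly connected components. The structural fact driving everything is that \emph{every unbounded edge lies on a cycle of unbounded edges}: any homogeneous solution $\phi^0_i$ is a circulation, so it splits into cycles, and every edge in its support is unbounded because its value contributes to $\mathbf{h}^0$. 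Hence each unbounded edge sits inside a strongly connected component of $H_i$, while every edge bridging two distinct components of $H_i$ must be bounded.

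In any witness a bounded edge $t$ is traversed $\phi_i(t)$ times, a number that Lemma~\ref{2023-04-15} bounds by $|\xi|^{|\xi|-1}$. Deleting the bounded transitions from a witnessing run cuts it into segments, each of which, by the structural fact, stays inside a single strongly connected component of $H_i$. I would then nondeterministically guess, for each $i$, the multiplicity of every bounded edge and the order in which the component-excursions and bounded edges interleave, producing a linearization of the shape~(\ref{2023-07-24}) whose blocks $G_{i,j}$ are strongly connected components of $H_i$ and whose connecting edges $\mathbf{a}_{i,j}$ are copies of the bounded edges, the intermediate $\mathbf{u}_{i,j},\mathbf{v}_{i,j}$ being inherited from $\mathbf{u}_i,\mathbf{v}_i$ and the guessed data. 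Collecting these sequences over all guesses yields $\Xi$. The equality $\mathsf{W}_{\xi}=\bigcup_{\xi'\in\Xi}\mathsf{W}_{\xi'}$ is then routine: a witness of $\xi$ matches exactly the $\xi'$ recording its bounded multiplicities and component ordering, and conversely any witness of a $\xi'$ reassembles into a witness of $\xi$ by reinserting the connecting edges inside $G_i$.

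The part I expect to be the real obstacle is the strict rank decrease $r(\xi')<r(\xi)$. The key monotonicity is that a cycle through an edge $t$ inside a block $G_{i,j}$ is still a cycle through $t$ in $G_i$, so $V_{G_{i,j}}(t)\subseteq V_{G_i}(t)$ and no edge ever raises its dimension; meanwhile every bounded edge becomes a connecting edge and so no longer contributes to the rank. Viewing the ranking vector as counts of edges by dimension, deleting one edge of dimension $k$ lowers the count at dimension $k$ while leaving every higher count untouched, and lowering an edge's dimension from $k$ to some $k'<k$ likewise decreases the count at $k$ without disturbing anything above $k$; each such move strictly decreases the vector in the lexicographic order that weights the top dimension most. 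Since $\xi$ is bounded at least one edge is deleted, whence $r(\xi')<r(\xi)$. For the quantitative claims, Lemma~\ref{2023-04-15} bounds the total number of bounded-edge occurrences by $|\xi|^{|\xi|-1}$, so each linearization has that many component-blocks of size at most $|\xi|$, yielding $|\xi'|\le|\xi|^{|\xi|}$; the component computation is polynomial, and enumerating all interleavings and multiplicities to assemble $\Xi$ costs $\mathsf{exp}(|\xi|^{|\xi|})$ time, matching the bound already obtained for standardization in Lemma~\ref{theroemstandard}.
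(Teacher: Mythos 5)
Your construction is the standard KLMST decomposition that the paper itself only sketches (it cites Leroux--Schmitz and omits the proof), and most of it is sound: the circulation argument showing that every unbounded edge lies on a cycle of unbounded edges, the segmentation of a witness at its boundedly many bounded-edge traversals, and the size and time bounds via Lemma~\ref{2023-04-15} all go through.

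The gap is in the rank decrease. Your accounting treats the passage from $G_i$ to the linearization as a sequence of edge deletions and dimension demotions, but it omits a third effect: an SCC of the unbounded subgraph $H_i$ can be \emph{duplicated}, once for each slot between consecutive connecting-edge occurrences, i.e.\ up to $|\xi|^{|\xi|-1}$ times. Since $r(\xi')=\sum_j r(G_{i,j})$ counts edges with multiplicity over all blocks, duplication can inflate the count at a given dimension, and the non-strict inclusion $V_{G_{i,j}}(t)\subseteq V_{G_i}(t)$ is not enough to rule out a net \emph{increase} at the top coordinate: if some SCC $C$ had $\dim V_C(t)=\dim V_{G_i}(t)$ and were copied several times, the lexicographic order could go up. What saves the argument is exactly the paper's Lemma~\ref{decomp-2023-05-11}: since $\xi$ is standard, each $G_i$ is strongly connected, so every edge of $G_i$ sits at the common dimension $\dim V_{G_i}$; because $G_i$ has a bounded edge, the unbounded subgraph $H_i$ is proper and spans a strictly smaller space, whence every edge of every block (no matter how often the block is repeated) has dimension strictly below $\dim V_{G_i}$. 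The coordinate $\dim V_{G_i}$ therefore drops from $|T_i|$ to $0$ while duplication only inflates strictly lower coordinates, and the lexicographic decrease follows. You need to prove (or invoke) this strict drop --- it is a genuine fact about Hilbert bases of the homogeneous system, not a formal consequence of $V_{G_{i,j}}(t)\subseteq V_{G_i}(t)$.
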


Lemma~\ref{theorembounded} implies that the number of VASSes in the sequence~(\ref{2023-07-24}) is bounded by $|\xi|^{|\xi|}$.
The VASSes in~(\ref{2023-07-24}) enjoy additional property.
We will come back to this in Section~\ref{sectionalmostnormal}.

\subsection{Reduction}\label{sectionreduction}

To make it easy for the following account, we consider runs in $\mathbb{N}^d_{\omega}$ by imposing the additional equality $\omega\pm n=\omega$ for every $n\in\mathbb{N}_{\omega}$.
For any $d$-dimensional $\xi_j=\mathbf{u}_jG_j\mathbf{v}_j$, if $p_j(\mathbf{u}_j)\xrightarrow{G_j}_{\mathbb{N}^d_{\omega}}p_j(\mathbf{u}'_j)$
and $q_j(\mathbf{v}'_j)\xrightarrow{G_j}_{\mathbb{N}^d_{\omega}}q_j(\mathbf{v}_j)$ for some $\mathbf{u}_j',\mathbf{v}_j'$ such that the following statements are valid, then $\mathbf{u}_jG_j\mathbf{v}_j$ is \emph{pumpable}; otherwise $\mathbf{u}_jG_j\mathbf{v}_j$ is \emph{nonpumpable}.
\begin{eqnarray}
&&\text{For each $k\in[d]$ with $\mathbf{u}_j(k)\ne\omega$, the strict inequality $\mathbf{u}_j'(k)> \mathbf{u}_j(k)$ holds.} \label{taiyuan-2023-07-18-a}\\
&&\text{For each $k\in[d]$ with $\mathbf{v}_j(k)\ne\omega$, the strict inequality $\mathbf{v}_j'(k)> \mathbf{v}_j(k)$ holds.} \label{taiyuan-2023-07-18-b}
\end{eqnarray}
If $\xi_j=\mathbf{u}_jG_j\mathbf{v}_j$ is pumpable for every $i$, $\xi$ is \emph{pumpable}; it is \emph{nonpumpable} otherwise.

We now show how to handle $\xi$ if it is not nonpumpable.
The idea is that if $\xi_j=\mathbf{u}_jG_j\mathbf{v}_j$ is nonpumpable, then there exists at least one dimension in which values in that dimension are bounded by $|\xi|$ throughout a walk inside $\xi_i$.
As pointed out in \cite{leroux2019reachability}, pumpability can be seen as a form of coverability, and the latter can be verified in $\textsf{exp}(|\xi|)$ time~\cite{rackoff1978covering}.
The following is from~\cite{leroux2019reachability}, which is essentially due to~\cite{rackoff1978covering}.
\begin{lemma}\label{lemmapump}
Suppose $V$ is a $d$-VASS, $\mathbf{u}_0\in \mathbb{N}_{\omega}^d$, and $c=|\{i|\mathbf{u}_0(i)\in\mathbb{N}\}|$.
If there is a run $p_0(\mathbf{u}_0)\xrightarrow{\mathbf{a}_1}_{\mathbb{N}_{\omega}^d}p_1(\mathbf{u}_1)\xrightarrow{\mathbf{a}_2}_{\mathbb{N}_{\omega}^d}\cdots\xrightarrow{\mathbf{a}_k}_{\mathbb{N}_{\omega}^d}p_k(\mathbf{u}_k)$
   such that for each $i\in [d]$ satisfying $\mathbf{u}_0(i)\in\mathbb{N}$ some $j\in[k]$ exists such that $\mathbf{u}_j(i)>C^{1+c^c}$, where $C$ is a natural number satisfying $C \geq |V|$.
   Then there exists a path $\pi$ such that $p_0(\mathbf{u}_0)\xrightarrow{\pi}p_0(\mathbf{u})$ satisfying $\mathbf{u}\geq (C\,{-}\,|V|){\cdot}\mathbf{1}$ and $|\pi|<C^{(c+1)^{c+1}}$.
\end{lemma}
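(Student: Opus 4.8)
The plan is to prove the statement by induction on $c$, the number of finite (non-$\omega$) coordinates of $\mathbf{u}_{0}$, following Rackoff's bounded-prefix technique. The base case $c=0$ is immediate: every coordinate of $\mathbf{u}_{0}$ equals $\omega$, so $\mathbf{u}_{0}\geq(C-|V|)\mathbf{1}$ already, and the empty path witnesses the claim, its length $0$ being below $C^{(0+1)^{(0+1)}}=C$. For the inductive step I would assume the statement for every VASS and every starting vector with fewer than $c$ finite coordinates, and deduce it for $c$.

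For the inductive step I would first compress a bounded prefix of the given run. Let $t^{*}$ be the earliest index at which some finite coordinate $i^{*}$ exceeds the threshold $C^{1+c^{c}}$; such an index exists by hypothesis. Up to time $t^{*}$ every finite coordinate stays in $\{0,\dots,C^{1+c^{c}}\}$, so the configurations visited (recorded as the control state together with the finite coordinates, the $\omega$ coordinates being fixed) range over a set of size at most $|V|\cdot(C^{1+c^{c}}+1)^{c}$. Deleting the loops between repeated configurations yields a short prefix $\pi_{1}$ that reaches the very configuration $p'(\mathbf{w})$ occurring at time $t^{*}$, with $\mathbf{w}(i^{*})>C^{1+c^{c}}$ and $|\pi_{1}|\leq|V|\cdot(C^{1+c^{c}}+1)^{c}$; deleting such loops preserves both nonnegativity and the endpoint, which is the key fact making the compression legitimate.

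Next I would reduce the dimension. Replacing the entry $\mathbf{w}(i^{*})$ by $\omega$ only relaxes the nonnegativity constraints, so the suffix of the original run from time $t^{*}$ onward is still a valid run from $p'(\mathbf{w}')$, where $\mathbf{w}'$ has only $c-1$ finite coordinates, and it still drives every remaining finite coordinate above $C^{1+c^{c}}\geq C^{1+(c-1)^{c-1}}$. Applying the induction hypothesis to $p'(\mathbf{w}')$ produces a continuation $\pi_{2}$ of length below $C^{c^{c}}$ after which all coordinates other than $i^{*}$ are at least $C-|V|$. Since $\pi_{2}$ can change one coordinate by at most $|\pi_{2}|\cdot|V|\leq C^{c^{c}}\cdot C=C^{1+c^{c}}$, and $\mathbf{w}(i^{*})$ exceeds that same quantity, the coordinate $i^{*}$ stays nonnegative along $\pi_{2}$, so treating it as $\omega$ was indeed legitimate; concatenating $\pi_{1}\pi_{2}$ then reaches a configuration dominating $(C-|V|)\mathbf{1}$ in every coordinate except possibly $i^{*}$.

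The final step is the bookkeeping, and this is where the main obstacle lies. The length composes to at most $|V|\cdot(C^{1+c^{c}}+1)^{c}+C^{c^{c}}$, which I expect to fall comfortably below the target $C^{(c+1)^{(c+1)}}$. The genuinely delicate point is not the length but the frozen coordinate: the estimate above guarantees only that $i^{*}$ ends \emph{nonnegative}, whereas the conclusion demands $i^{*}\geq C-|V|$ as well. Guaranteeing this forces the freezing threshold to exceed the budget times the maximal displacement \emph{plus} the target $C-|V|$, so the exponents $1+c^{c}$ in the hypothesis and $(c+1)^{(c+1)}$ in the conclusion must be calibrated so that this margin and the length bound hold simultaneously at every level of the induction; that calibration is precisely where the doubly-exponential shape is forced and where the argument is tight. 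Finally I would remark that, since pumpability concerns only runs that leave and return to the same control state, the compression can be organised so that the witness is a cycle on $p_{0}$, yielding the stated form $p_{0}(\mathbf{u}_{0})\xrightarrow{\pi}p_{0}(\mathbf{u})$.
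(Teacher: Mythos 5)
The paper itself contains no proof of this lemma: Section~4 explicitly states that all of its results are imported from Leroux and Schmitz, and Lemma~\ref{lemmapump} in particular is attributed to Rackoff, so there is no in-paper argument to compare against. Your proposal reconstructs the standard Rackoff induction (induct on the number of finite coordinates; loop-compress the prefix up to the first threshold crossing; freeze the large coordinate to $\omega$; apply the induction hypothesis to the suffix), which is indeed the argument behind the cited result. Your choice of $t^{*}$ as the \emph{first} time any finite coordinate crosses the threshold is the right one, since it guarantees that every remaining finite coordinate still witnesses the hypothesis within the suffix, and loop removal between repeated abstract configurations does preserve both nonnegativity and the endpoint.

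Two points are left open in your write-up. First, the margin problem you flag as "the genuinely delicate point" is real but actually closes with the stated exponents, and you should do the half-line computation instead of leaving it as a calibration remark: the induction hypothesis gives $|\pi_{2}|\le C^{c^{c}}-1$ and $\max\|T\|\le |V|\le C$, so the frozen coordinate ends at value at least $C^{1+c^{c}}+1-(C^{c^{c}}-1)\cdot C=C+1> C-|V|$, while the total length $|V|\cdot(C^{1+c^{c}}+1)^{c}+C^{c^{c}}$ is comfortably below $C^{(c+1)^{c+1}}$. Second, your closing remark that "the compression can be organised so that the witness is a cycle on $p_{0}$" is not a proof: the compressed prefix followed by the inductively obtained suffix ends at whatever state the suffix ends at, and the hypothesis run is not assumed to return to $p_{0}$. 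Read literally, the conclusion $p_{0}(\mathbf{u}_{0})\xrightarrow{\pi}p_{0}(\mathbf{u})$ is not derivable without an extra assumption such as strong connectivity of $V$ (which does hold wherever the lemma is applied in this paper, and lets one route back to $p_{0}$ at an additional cost of at most $|Q|\cdot\max\|T\|$ per coordinate, absorbed by the constants); either add that assumption or weaken the conclusion to an arbitrary final state.
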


Lemma~\ref{lemmapump} is extremely useful in that it allows one to carry out dimension reduction.
Suppose $\xi$ is nonpumpable.
By definition some $\xi_j=\mathbf{u}_jG_j\mathbf{v}_j$ is nonpumpable.
Without loss of generality, we may assume that $p_j(\mathbf{u}_j)$ is not pumpable in some $i$-th dimension in the sense that~(\ref{taiyuan-2023-07-18-a}) fails.
By Lemma~\ref{lemmapump} every run from $p_j(\mathbf{u}_j)$ to $q_j(\mathbf{v}_j)$ must fall in the region:
\[
\mathbb{B}_i \;\stackrel{\rm def}{=}\; \underset{i-1\ \mathrm{times}}{\underbrace{\mathbb{N}\times\ldots\times\mathbb{N}}} \times[0,B] \times \underset{d-i\ \mathrm{times}}{\underbrace{\mathbb{N}\times\ldots\times\mathbb{N}}}.
\]
By Lemma~\ref{lemmapump} we may set $B\defn (2|\xi|)^{1+d^d}$.
It now becomes clear how to {\em reduce} the dimension of a nonpumpable $\xi_j$.
Let $G_j^{-i}=\left(Q_j^{-i},T_j^{-i}\right)$, where $Q_j^{-i},T_j^{-i}$ are defined as follows:
\begin{eqnarray*}
Q_j^{-i} &=& \left\{(p,g) \mid p\in Q_j\ \mathrm{and}\ g\in[B]_{0}\right\}, \\
T_j^{-i} &=& \left\{(p,g)\stackrel{\mathbf{t}^{-i}}{\longrightarrow}(q,g\,{+\,}\mathbf{t}(i)) \mid
p\stackrel{\mathbf{t}}{\longrightarrow}q \in T_j,\ g,g\,{+\,}\mathbf{t}(i)\in[B]_{0}\right\}.
\end{eqnarray*}
In the above definition the notation $\mathbf{t}^{-i}$ stands for the vector obtained from $\mathbf{t}$ by removing the $i$-th entry, for example $(4,3,2,1)^{-2}=(4,2,1)$.
The construction of $\xi'=\mathbf{u}'_jG'_j\mathbf{v}'_j$ falls into one of the three categories:
\begin{itemize}
    \item If $\mathbf{u}_j(i),\mathbf{v}_j(i)\in\mathbb{N}$, then $\mathbf{u}'_j=\mathbf{u}_j,\mathbf{v}'_j=\mathbf{v}_j,G'_j=G^{-i}_j,p'_j=(p_j,\mathbf{u}_j(i)),q'_j=(q_j,\mathbf{v}_j(i))$.
    \item If $\mathbf{u}_j(i)\in\mathbb{N}$ and $\mathbf{v}_j(i)=\omega$, then $\mathbf{u}'_j=\mathbf{u}_j,G'_j=G^{-i}_j,p'_j=(p_j,\mathbf{u}_j(i)),q'_j=(q_j,r)$ for some $r\in[B]_0$, and $\mathbf{v}'_j$ differs from $\mathbf{v}_j$ only in that $\mathbf{v}'_j(i)=r$.
    \item If $\mathbf{v}_j(i)\in\mathbb{N}$ and $\mathbf{u}_j(i)=\omega$, then $\mathbf{v}'_j=\mathbf{v}_j,G'_j=G^{-i}_j,q'_j=(q_j,\mathbf{v}_j(i)),p'_j=(p_j,r)$ for some $r\in[B]_0$, and $\mathbf{u}'_j$ differs from $\mathbf{u}_j$ only in that $\mathbf{u}'_j(i)=r$.
\end{itemize}
In the latter two cases our reduction algorithm has to make a guess about $r$.
The nonpumpability of $\xi_j=\mathbf{u}_jG_j\mathbf{v}_j$ may also be caused by the failure of~(\ref{taiyuan-2023-07-18-b}).
This case can be treated like in the above case by reversing the transitions of $\xi_j=\mathbf{u}_jG_j\mathbf{v}_j$.

By analyzing the construction described in the above, one can prove the following lemma~\cite{leroux2019reachability}.
\begin{lemma}\label{lemmapumpdecomp}
Suppose $\xi$ is a $d$-dimensional standard nonpumpable KLM sequence.
A set $\Xi$ of KLM sequences of smaller rank can be computed from $\xi$ in $\mathsf{exp}(|\xi|^{d+d^d})$ time such that $\mathsf{W}_{\xi}=\bigcup_{\xi'\in\Xi}\mathsf{W}_{\xi'}$ and $|\xi'|\leq |\xi|^{d+d^d}$ for all $\xi'\in\Xi$.
\end{lemma}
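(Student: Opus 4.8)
The plan is to verify the three asserted properties---witness preservation, rank decrease, and the size and time bounds---for the reduction construction already described above. Since $\xi$ is nonpumpable, I would first fix a nonpumpable component $\xi_j=\mathbf{u}_jG_j\mathbf{v}_j$ and, without loss of generality, a dimension $i$ in which $\xi_j$ fails to be pumpable in the sense of~(\ref{taiyuan-2023-07-18-a}) (the symmetric failure of~(\ref{taiyuan-2023-07-18-b}) is handled by reversing the transitions of $\xi_j$). Then I would invoke Lemma~\ref{lemmapump} in its contrapositive form with $C\defn 2|\xi|\ge|G_j|$: if the $i$-th coordinate could exceed $B=(2|\xi|)^{1+d^d}$ along some run out of $p_j(\mathbf{u}_j)$, then every coordinate that is finite in $\mathbf{u}_j$ could be strictly increased, contradicting the failure of pumpability. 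Hence in \emph{every} witness run through $\xi_j$ the $i$-th coordinate stays inside $[B]_0$; this confinement to $\mathbb{B}_i$ is precisely what licenses internalizing dimension $i$ into the finite control and replacing $\xi_j$ by the $(d{-}1)$-dimensional component(s) $\xi'_j$ built from $G_j^{-i}$, with the guess $r\in[B]_0$ supplied in the two cases where $\mathbf{u}_j(i)$ or $\mathbf{v}_j(i)$ equals $\omega$. The set $\Xi$ collects the resulting sequences over all admissible guesses.

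For the identity $\mathsf{W}_{\xi}=\bigcup_{\xi'\in\Xi}\mathsf{W}_{\xi'}$, I would set up the bijection between a configuration $p(\mathbf{m})$ of $G_j$ with $\mathbf{m}(i)\in[B]_0$ and the configuration $(p,\mathbf{m}(i))(\mathbf{m}^{-i})$ of $G_j^{-i}$, and argue exactly as for the equivalence~(\ref{eqn3vassreduction}) in the proof of Lemma~\ref{theorembounded3vass}: a confined run of $\xi_j$ corresponds step by step to a run of $\xi'_j$ and conversely. For the inclusion $\subseteq$ the confinement guarantees that any witness run can be tracked in $G_j^{-i}$, the guessed $r$ being chosen as the (necessarily bounded) $i$-value of that witness at the entry or exit of $\xi_j$ whenever $\mathbf{u}_j(i)$ or $\mathbf{v}_j(i)$ is $\omega$; for $\supseteq$ one reinserts $\mathbf{m}(i)=g$ to lift a run of $\xi'_j$ back into $\mathbb{N}^d_\omega$. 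Because only the single component $\xi_j$ is altered, witnesses of the remaining components transfer verbatim, so the witness sets of the whole sequences agree.

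For the size and time bounds I would count $|Q_j^{-i}|=|Q_j|\,(B{+}1)$ and $\|T_j^{-i}\|\le\|T_j\|\,(B{+}1)$ with $B=(2|\xi|)^{1+d^d}$, whence $|\xi'|\le|\xi|^{d+d^d}$ after absorbing the polynomial and $(d{+}1)^{d+1}$ factors in the definition of $|\xi|$; there are at most $(B{+}1)^2$ guesses of $r$, so $\Xi$ and all its members are produced in $\mathsf{exp}(|\xi|^{d+d^d})$ time. The rank decrease I would obtain from the observation that a cycle of $G_j^{-i}$ lifts to a cycle of $G_j$ with zero net $i$-displacement, so the cycle space of $G_j^{-i}$ is contained in the hyperplane section $V_{G_j}\cap\{\mathbf{w}\mid\mathbf{w}(i)=0\}$, of dimension $\dim V_{G_j}-1$ as soon as the cycle space genuinely meets dimension $i$; thus the leading nonzero entry of the ranking vector $(r_d,\ldots,r_0)$ strictly drops, which is the rank-decreasing property of the reduction established in~\cite{leroux2019reachability}.

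The main obstacle is this last point. The confinement and the witness equivalence are essentially the bookkeeping already rehearsed for Lemma~\ref{theorembounded3vass}, and the quantitative bounds are routine. What requires the delicate accounting of Leroux and Schmitz is guaranteeing that eliminating the bounded dimension $i$ \emph{strictly} decreases the rank: one must ensure that $i$ is a dimension the component's cycle space genuinely touches, since a bounded dimension disjoint from $V_{G_j}$ would merely duplicate edges without lowering any $\dim V_{G_j}(t)$ and could even raise a rank level. Reconciling the choice of $i$ forced by the failure of pumpability with the requirement that $\dim V_{G_j}$ actually drops is exactly where the strong connectivity and saturation of the standard sequence, together with the precise form of nonpumpability, must be used, and I would lean on the ranking analysis of~\cite{leroux2019reachability} to close this step.
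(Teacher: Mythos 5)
Your proposal matches the paper's treatment: the paper gives no proof of this lemma beyond describing the $\mathbb{B}_i$-confinement via Lemma~\ref{lemmapump}, the state-encoding graph $G_j^{-i}$ and the guessing of $r\in[B]_0$, deferring the actual verification to~\cite{leroux2019reachability}, and your reconstruction of the confinement argument, the configuration bijection for witness preservation, and the size/time counts is exactly that construction carried out in the obvious way. The rank-decrease caveat you flag (a bounded coordinate orthogonal to the cycle space need not lower any $\dim V_{G_j}(t)$) is real and well spotted: it is precisely what the ``rigidity'' device of Leroux and Schmitz addresses, and this paper explicitly drops rigidity on the grounds that its $3$-\textsc{Klmst} algorithm never re-tests pumpability of a reduced component (a reduction always yields a $2$-dimensional component that is immediately replaced by a linear path scheme), so deferring that one step to~\cite{leroux2019reachability}, as you do, is consistent with what the paper itself does.
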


The pumpability introduced in this section is slightly different from the one in~\cite{leroux2019reachability}.
We have not introduced the notion of rigidity.
A $d$-dimensional VASS $G$ may be reduced to a $(d{-}1)$-dimensional VASS $G^{-i}$ by encoding the values in the $i$-th entry into states.
If we still want to see $G^{-i}$ as $d$-dimensional, we impose the rigidity condition on the $i$-dimension of $G$.
This is significant because future tests of pumpability of $G$ will ignore all rigid dimensions.
In our algorithm it is never necessary to check the pumpability property of a VASS obtained by a reduction.
So rigidity is not necessary in our setting.

\section{Proof of the Main Result}
\label{sectionalmostnormal}

Section~\ref{sectionlinearpathscheme} and Section~\ref{sectionklmstalgo} have recalled the techniques applicable to the $d$-VASSes in general and the $2$-VASSes in particular, and have generalized the latter to the effectively $2$-dimensional $3$-VASSes.
In this section we use those techniques to construct an algorithm for $3$-VASS.
The basic idea is to apply the general KLMST constructions to a $3$-dimensional KLM sequence, and then replace every effectively $2$-dimensional VASSes in the sequence by its LPS system.
The idea works because the components produced by the decompositions defined in Section~\ref{sectiondimenreduct} are effectively $2$-dimensional and that being a linear path scheme is a property independent of its input/output locations.
The claim about the decomposition is proved in~\cite{leroux2019reachability}.
For completeness we provide a proof outline for the claim.
In the following proof and the rest of the section we shall assume that $\xi$ is a $3$-dimensional KLM sequence of the form~(\ref{2023-04-17}).
\begin{lemma}\label{decomp-2023-05-11}
Let $G_j'$ be the subgraph of $G_j$ consisting of all the unbounded edges of $G_j$.
If $G_{j'}\ne G_{j}$, then $V_{G_j'}\ne V_{G_j}$.
\end{lemma}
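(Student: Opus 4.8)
The plan is to prove the contrapositive: assuming $V_{G_j'}=V_{G_j}$, I will show that every edge of $G_j$ is unbounded, so that $G_j'=G_j$. Since $G_j'\subseteq G_j$ gives $V_{G_j'}\subseteq V_{G_j}$ for free, this is the only implication that needs work. The engine of the argument is the circulation extracted from $\mathbf{h}^0$ of~(\ref{2023-07-21}). Write $\psi\in\mathbb{N}^{T_j}$ for the Parikh image with $\psi(t)=\mathbf{h}^0(\phi_j^0(t))$. By the Euler condition~(\ref{homocharacteristickirhoffsystem}) the vector $\psi$ satisfies the Kirchhoff flow condition, hence is a circulation, and by~(\ref{homocharacteristicequation}) its displacement satisfies $\Delta(\psi)=\mathbf{h}^0(\mathbf{y}_j^0)-\mathbf{h}^0(\mathbf{x}_j^0)$. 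Crucially, Lemma~\ref{2023-04-15} tells us that $\psi(t)>0$ precisely when $t$ is unbounded, so the support of $\psi$ is exactly the edge set of $G_j'$ and $\psi$ is strictly positive there.

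Now suppose towards a contradiction that some edge $t$ is bounded, that is $t\notin G_j'$. Since $\xi$ is standard we may assume $G_j$ is strongly connected, so $t$ lies on a cycle $\gamma$ of $G_j$. Its displacement $\Delta(\gamma)$ belongs to $V_{G_j}=V_{G_j'}$, hence can be written as a rational combination of displacements of cycles $\delta_1,\ldots,\delta_r$ of $G_j'$; clearing denominators yields integers $N>0$ and $b_1,\ldots,b_r$ with $N\,\Delta(\gamma)=\sum_k b_k\,\Delta(\delta_k)$. Consider the integer vector
\[
\chi \;\defn\; N\,\wp(\gamma) - \sum_{k} b_k\,\wp(\delta_k) + M\,\psi \;\in\; \mathbb{Z}^{T_j},
\]
where $M\in\mathbb{N}$ is chosen large. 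Being an integer combination of circulations, $\chi$ again satisfies the Euler condition~(\ref{homocharacteristickirhoffsystem}), and its displacement is $\Delta(\chi)=M\,\Delta(\psi)$, because the $N\,\Delta(\gamma)$ and $-\sum_k b_k\,\Delta(\delta_k)$ contributions cancel. On the edge $t$ we have $\chi(t)\ge N>0$, since neither the $\delta_k$ nor $\psi$ use the bounded edge $t$ while $\gamma$ traverses it.

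The main obstacle is to turn $\chi$ into a genuine, that is nonnegative and integral, component of a homogeneous solution without disturbing the remaining components of the sequence. Nonnegativity is where the hypothesis $V_{G_j'}=V_{G_j}$ does the real work: all the compensating cycles $\delta_k$ live inside $G_j'$, where $\psi$ is strictly positive, so for $M$ large enough the term $M\,\psi$ dominates the finitely many negative contributions and $\chi\ge\mathbf{0}$, while on the edges of $\gamma$ outside $G_j'$ (including $t$) the vector $\chi$ is already nonnegative. Because $\Delta(\chi)=M\,\Delta(\psi)$ matches the displacement of the $\phi_j^0$-component of $M\mathbf{h}^0$, I can replace that component by $\chi$ and leave every $\mathbf{x}_i^0,\mathbf{y}_i^0$ and every $\phi_i^0$ with $i\ne j$ unchanged; all of~(\ref{homocharacteristicequation})--(\ref{homocharacteristicmodel2}) remain satisfied, so the result is a legitimate homogeneous solution of $\mathcal{E}^0_{\xi}$ in which $\phi_j^0(t)>0$. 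By Lemma~\ref{2023-04-15} this forces $t$ to be unbounded, contradicting the choice of $t$. Hence no bounded edge exists, $G_j'=G_j$, and the contrapositive is established.
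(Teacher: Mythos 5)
Your proof is correct and follows essentially the same route as the paper's: use the hypothesis $V_{G_j'}=V_{G_j}$ to rewrite the displacement of a cycle through a bounded edge as an integer combination of cycles of $G_j'$, then add a large multiple of $\mathbf{h}^0$ (strictly positive exactly on the unbounded edges) to restore nonnegativity and obtain a homogeneous solution that is positive on the bounded edge, a contradiction. The only cosmetic difference is that the paper runs the argument once with a single cycle $\Phi_j$ covering every edge of $G_j$, whereas you argue edge-by-edge; the compensation mechanism and the appeal to Lemma~\ref{2023-04-15} are identical, and your write-up is if anything more careful about keeping Parikh images and displacements separate.
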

\begin{proof}
By our definition of~(\ref{2023-04-17}) the digraph $G_j$ is strongly connected.
Assume that $V_{G_j}=V_{G_j'}$.
Let $\Phi_j$ be the Parikh image of a cycle containing every edge of the graph $G_j$.
Then for some $k$ the Parikh image $k\Phi_j$ must be a linear combination $\sum_e\lambda_e\phi_e$ of a base of $V_{G_j'}$ with integer coefficients.
Let $\mathbf{h}^0$ be defined as in~(\ref{2023-07-21}).
Then $\mathbf{h}_j^0$ is the summation of all the solutions to the homogeneous equation system of $\xi_j$.
Let $k'$ be a large enough number such that $k'\mathbf{h}_j^0 -\Delta\left(\sum_e\lambda_e\phi_e\right)\ge\mathbf{0}$.
Now $k'\mathbf{h}_j^0=k\Phi_j+k'\mathbf{h}_j^0 -\Delta\left(\sum_e\lambda_e\phi_e\right)$.
So $k\Phi_j+k'\mathbf{h}_j^0-\sum_e\lambda_e\phi_e$ is a homogeneous solution, implying that every edge of $V_{G_j}$ is unbounded.
This is a contradiction.
So the assumption $V_{G_j}=V_{G_j'}$ must be invalid.
\end{proof}

\subsection{$3$-Normal KLM Sequences}
\label{subsectionwitnessofalmostnormalklm}

To provide a more informative characterization of $\xi$ that contains both $3$-dimensional VASSes and linear path schemes, we introduce the composite characteristic system $\mathscr{C}_{\xi,\Lambda}$ and its homogeneous version $\mathscr{C}^0_{\xi,\Lambda}$ indexed by an element $\Lambda\in\prod_{j\in J}\mathscr{L}_c(G_j)$, where $c$ is calculated from $\xi$ and $J\subseteq[n]_0$ is the set of the indexes $j$ such that $\mathbf{u}_jG_j\mathbf{v}_j$ is effectively $2$-dimensional.
Let the set of the remaining indexes be denoted by $\overline{J}=[n]_0\setminus J$.

\begin{definition}\label{CCS}
The {\em composite characteristic system} $\mathscr{C}_{\xi,\Lambda}$ for $\xi,\Lambda$ is defined as follows:
    \begin{eqnarray}
        \mathbf{y}_i&=&\mathscr{E}_{\Lambda_{i}}(\mathbf{x}_i), \text{ for }i\in J, \\
        \mathbf{y}_i&=&\mathbf{x}_i+\sum_{t=(p,\mathbf{a},q)}
        \phi_i(t){\cdot}\mathbf{a},\text{ for }i\in\overline{J}, \\
        \mathbf{1}_{q_i}-\mathbf{1}_{p_i}&=&\sum_{t=(p,\mathbf{a},q)\in T_i}
        \phi_i(t)(\mathbf{1}_{q}-\mathbf{1}_{p}),\text{ for }i\in \overline{J}, \\
        \mathbf{x}_{i+1}&=&\mathbf{y}_i+\mathbf{a}_{i+1},\text{ for }i\in[n]_0, \\
        \mathbf{x}_i &\sqsubseteq& \mathbf{u}_i,\text{ for }i\in[n]_0, \\
        \mathbf{y}_i &\sqsubseteq& \mathbf{v}_i,\text{ for }i\in[n]_0.
    \end{eqnarray}
\end{definition}
The {\em homogeneous composite characteristic system} $\mathscr{C}^0_{\xi,\Lambda}$ for $\xi,\Lambda$ is defined as follows:
    \begin{eqnarray}
        \mathbf{y}^0_i&=&\mathscr{E}^0_{\Lambda_{i}}(\mathbf{x}^0_i),\text{ for }i\in J \\
        \mathbf{y}^0_i&=&\mathbf{x}^0_i+\sum_{t=(p,\mathbf{a},q)}
        \phi^0_i(t){\cdot}\mathbf{a},\text{ for }i\in\overline{J}, \\
        \mathbf{0}&=&\sum_{t=(p,\mathbf{a},q)\in T_i}
        \phi^0_i(t)(\mathbf{1}_{q}-\mathbf{1}_{p}),\text{ for }i\in\overline{J}, \\
        \mathbf{x}^0_{i}[j]&=& 0,\text{ for }\mathbf{u}_i[j]\neq\omega,\text{ for }i\in[n]_0\text{ and }j\in[d], \\
        \mathbf{y}^0_{i}[j]&=& 0,\text{ for }\mathbf{v}_i[j]\neq\omega,\text{ for }i\in[n]_0\text{ and }j\in[d].
    \end{eqnarray}
For uniformity we keep the form of the variables of $\mathscr{C}_{\xi,\Lambda}$ as $V_{\xi}=\{(\mathbf{x}_i,\phi_i,\mathbf{y}_i)\}_{i\in[n]_0}$. If $\mathbf{u}_kG_k\mathbf{v}_k$ is effectively $2$-dimensional, $\phi_i$ are the variables in the LPS system $\mathbf{y}_i=\mathscr{E}_{\Lambda_{i}}(\mathbf{x}_i)$; otherwise $\phi_i$ is a Parikh image on $T_i$.
By definition $\mathscr{C}_{\xi,\emptyset}$ is the characteristic system defined in Section~\ref{sec-KLM-sequence-4-8}.
If $J=\{i_1,\ldots,i_k\} $ we also write $\Lambda$ as a sequence of the linear path schemes $\rho_{i_1}\ldots\rho_{i_k}$, where $\rho_{i_t}=\Lambda(i_t)$ for all $t\in[k]$.

For $j\in\overline{J}$, the component  $\xi_j=\mathbf{u}_jG_j\mathbf{v}_j$ is a $3$-dimensional VASS.
The definitions of being standard, unboundedness, and pumpability remain unchanged for $\xi_j$.
Recall that strong connectivity and pumpability are defined on the VASS $G_j$, and saturation and unboundedness are defined in terms of characteristic systems.
In the present situation we use the (homogeneous) composite characteristic system when defining the saturation and unboundedness properties.
Lemma~\ref{theroemstandard}, Lemma~\ref{theorembounded} and Lemma~\ref{lemmapumpdecomp} remain valid.

If $\xi_j=\mathbf{u}_jG_j\mathbf{v}_j$ is decomposed to $(\mathbf{u}_{j,1}G_{j,1}\mathbf{v}_{j,1})\ldots (\mathbf{u}_{j,n_j}G_{j,n_j}\mathbf{v}_{j,n_j})$, then by Lemma~\ref{decomp-2023-05-11} the VASS $G_{j,k}$ is effectively $2$-dimensional for all $k\in[n_j]$.
If $\xi_j=\mathbf{u}_jG_j\mathbf{v}_j$ is reduced to $\mathbf{u}_j'G_j'\mathbf{v}_j'$, then by construction $\mathbf{u}_j'G_j'\mathbf{v}_j'$ is $2$-dimensional.
If a $2$-VASS is seen as a special form of an effectively $2$-dimensional $3$-VASS, the following definition comes natural.

\begin{definition}\label{definitionalmostnormalklm}
The $3$-dimensional KLM sequence $\xi$ is \emph{$3$-normal} if, for all $k\in[n]_0$, either $\xi_k$ is standard, unbounded and pumpable or $\xi_k$ is effectively $2$-dimensional.
\end{definition}

The $3$-normal KLM sequences are good in the sense of the following theorem.
Its proof is routine.
\begin{theorem}\label{theoremwitnessalmostnormal}
If $\xi$ is $3$-normal, it has a witness bounded in size by $|\xi|^{O(|\xi|)}$.
\end{theorem}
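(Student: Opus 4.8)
The plan is to reduce the existence of a witness to the satisfiability of the composite characteristic system $\mathscr{C}_{\xi,\Lambda}$ of Definition~\ref{CCS}, then to bound the size of a solution, and finally to turn that solution into an actual run. First I would fix, for each effectively $2$-dimensional component $\xi_j$ with $j\in J$, a linear path scheme $\Lambda_j\in\mathscr{L}_c(G_j)$ of length $(|Q_j|+\|T_j\|)^{O(1)}\le|\xi|^{O(1)}$; by Theorem~\ref{theorem3vasslps} every walk inside $\xi_j$ is captured by such a scheme, and by Proposition~\ref{LPSCh} a solution of the subsystem $\mathbf{y}_j=\mathscr{E}_{\Lambda_j}(\mathbf{x}_j)$ is exactly a nonnegative walk realizing the prescribed endpoints. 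Hence, by Definition~\ref{definitionalmostnormalklm}, a witness for $\xi$ exists if and only if $\mathscr{C}_{\xi,\Lambda}$ is satisfiable for a suitable guess of $\Lambda$, and it suffices to extract from a solution a run of the desired size.

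Second I would bound a solution of the \emph{skeleton} part of $\mathscr{C}_{\xi,\Lambda}$ by $|\xi|^{O(|\xi|)}$. The skeleton consists of the displacement equations, the Euler conditions, the gluing equations $\mathbf{x}_{i+1}=\mathbf{y}_i+\mathbf{a}_{i+1}$, and the $\sqsubseteq$-constraints; there are only $O(|\xi|)$ of these (the Euler conditions total $\sum_{i\in\overline{J}}|Q_i|\le|\xi|$), so the system has effective rank $O(|\xi|)$ and coefficients $|\xi|^{O(1)}$. The crucial observation is that the $|\xi|^{O(1)}$ auxiliary variables $\mathbf{z}^{1}_{k,l},\mathbf{z}^{2}_{k,l},\mathbf{z}^{3}_{k,l}$ introduced by each LPS subsystem are slaved partial sums and do \emph{not} enter the skeleton, so they cannot inflate the rank. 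Corollary~\ref{eq-sol} then yields endpoints $\mathbf{x}_i,\mathbf{y}_i$ and edge/cycle counts of norm $(|\xi|^{O(1)})^{O(|\xi|)}=|\xi|^{O(|\xi|)}$, and the composite analogue of Lemma~\ref{2023-04-15} (which the paper records as remaining valid) keeps the bounded variables below $|\xi|^{O(|\xi|)}$ while letting the unbounded ones be a controlled multiple of $\mathbf{h}^0$.

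Third I would realize these endpoints as a genuine run $\pi_0\mathbf{a}_1\cdots\mathbf{a}_n\pi_n$ with admissible intermediate locations. For $j\in J$ the walk $\pi_j$ is obtained by solving the full LPS system $\mathbf{y}_j=\mathscr{E}_{\Lambda_j}(\mathbf{x}_j)$ at the already-fixed endpoints; by Corollary~\ref{min-solu} this solution, hence the walk, has size $(|Q_j|+\|T_j\|)^{O(|Q_j|+\|T_j\|)}=|\xi|^{O(|\xi|)}$, and Proposition~\ref{LPSCh} certifies that it stays in $\mathbb{N}^3$. For the standard, unbounded, pumpable components $\xi_j$ with $j\in\overline{J}$, strong connectivity together with the Euler condition guarantees a path with Parikh image $\mathbf{f}(\phi_j)$; to keep it in $\mathbb{N}^3$ I would invoke pumpability in the style of Lemma~\ref{lemmapump}, prepending and appending the pumping cycles at $p_j$ and $q_j$ to raise every finite (non-$\omega$) coordinate high enough that the run never drops below $\mathbf{0}$, while the $\omega$-coordinates stay free and can be set large. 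Along the connecting edges the characteristic equations enforce agreement on the finite coordinates and leave the $\omega$-coordinates adjustable, so the local runs glue into a witness. Each pumping cycle has length $|\xi|^{O(1)}$ and is iterated at most $|\xi|^{O(|\xi|)}$ times, so the total run length and all intermediate locations remain within $|\xi|^{O(|\xi|)}$.

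The main obstacle I expect lies in steps two and three taken together: one must hold the size at $|\xi|^{O(|\xi|)}$ rather than $|\xi|^{\mathrm{poly}(|\xi|)}$, which forces the point that nonnegativity is enforced \emph{constructively}---by pumping for the $3$-dimensional pieces and by the flatness of the LPS for the $2$-dimensional pieces---instead of by adding the $\mathbf{z}\ge\mathbf{0}$ inequalities to the linear system, since the latter would raise the effective rank to $|\xi|^{O(1)}$ and wreck the Pottier-type bound. For the effectively $2$-dimensional pieces nonnegativity is immediate from a solution of $\mathscr{E}_{\Lambda_j}$, but for the pumpable $3$-dimensional pieces it rests on the classical double-pumping argument, and the delicate bookkeeping is to check that the headroom supplied by pumpability dominates the dips of the Eulerian realization without exceeding the size budget.
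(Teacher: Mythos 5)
Your proposal is correct and follows essentially the same route as the paper's proof: fix the linear path schemes $\Lambda$, bound a minimal solution and a saturating homogeneous solution of the composite characteristic system $\mathscr{C}_{\xi,\Lambda}$ via Pottier's lemma, read off genuine walks for the $j\in J$ components directly from the LPS systems via Proposition~\ref{LPSCh}, and realize the $j\in\overline{J}$ components by the classical double-pumping construction (forward/backward pumping cycles plus an iterated Eulerian cycle). Your extra worry about the auxiliary $\mathbf{z}$-variables inflating the rank is harmless but unnecessary, since the total length of all linear path schemes is already accounted for in $|\xi|$, so the rank stays $O(|\xi|)$ either way.
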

\begin{proof}
Let $J=\{j_1,\ldots,j_k\}$ and $\Lambda=(\rho_{j_1},\ldots,\rho_{j_k})$
    be the tuple of linear path schemes.
    Consider the composite characteristic system $\mathscr{C}_{\xi,\Lambda}$.
    Let
    $\widehat{\mathbf{h}}=(\widehat{\mathbf{m}}_0,\widehat{\phi}_0,\widehat{\mathbf{n}}_0)\cdots(\widehat{\mathbf{m}}_n,
    \widehat{\phi}_n,\widehat{\mathbf{n}}_n)$ be a minimal solution to $\mathscr{C}_{\xi,\Lambda}$ and
    $\widehat{\mathbf{h}}^0=(\widehat{\mathbf{m}}^0_0,\widehat{\phi}^0_0,\widehat{\mathbf{n}}^0_0)\cdots(\widehat{\mathbf{m}}^0_n,
    \widehat{\phi}^0_n,\widehat{\mathbf{n}}^0_n)$ be a solution to the homogeneous system
    $\mathscr{C}^0_{\xi,\Lambda}$.
    By Lemma~\ref{2023-04-15} these systems render true the following statements.
    \begin{itemize}
       \item $|\widehat{\mathbf{h}}|\leq |\xi|^{O(|\xi|)}$ and $|\widehat{\mathbf{h}}_0|\leq |\xi|^{O(|\xi|)}$.
       \item For every $j\in\overline{J}$, the followings are valid.
       \begin{itemize}
       \item For every $t\in T_j$, $\widehat{\phi}^0_j(t)>0$.
       \item For every $g\in [3]$, $\mathbf{u}_j[g]=\omega$ implies $\widehat{\mathbf{m}}^0_j[g]>0$ and $\mathbf{v}_j[g]=\omega$
       implies $\widehat{\mathbf{n}}^0_j[g]>0$.
       \end{itemize}
    \end{itemize}
For every $j\in J$, the linear path scheme system $\mathcal{E}_{\rho_{j}}$ of $\rho_{j}$, formulated in Definition~\ref{LPSS}, is part of the composite characteristic system $\mathscr{C}_{\xi,\Lambda}$ for $\xi,\Lambda$, formulated in Definition~\ref{CCS}.
Let $\widetilde{\mathbf{h}}_j=(\widetilde{\mathbf{m}}_j,\widetilde{\phi}_j,\widetilde{\mathbf{n}}_j) \stackrel{\rm def}{=}(\widehat{\mathbf{m}}_j,\widehat{\phi}_j,\widehat{\mathbf{n}}_j)+r(\widehat{\mathbf{m}}^0_j,\widehat{\phi}^0_j,\widehat{\mathbf{n}}^0_j)$.
By definition $\widetilde{\mathbf{h}}_j$ is a solution to the linear path scheme system $\mathscr{E}_{\rho_j}$, that is $\widetilde{\mathbf{n}}_j=\mathscr{E}_{\rho_j}(\widetilde{\mathbf{m}}_j)$, where  $\rho_j=\alpha_{j,0}(\beta_{j,1})^*\alpha_{j,1}(\beta_{j,2})^*\cdots (\beta_{j,n_j})^*\alpha_{j,n_j}$.
The walk $\pi_j$ is of the following form
\begin{eqnarray}
p_j(\widetilde{\mathbf{m}}_j)\xrightarrow{\alpha_{j,0}}
\xrightarrow{(\beta_{j,1})^{\widetilde{\phi}_j(\beta_{j,1})}}\xrightarrow{\alpha_{j,1}}
\cdots\xrightarrow{(\beta_{j,n_j})^{\widetilde{\phi}_j(\beta_{j,n_j})}}\xrightarrow{\alpha_{j,n_j}}
q_j(\widetilde{\mathbf{n}}_j).
\end{eqnarray}
So it is guaranteed that every path defined by a solution to the composite characteristic system $\mathscr{C}_{\xi,\Lambda}$ for $\xi,\Lambda$ passes through $\rho_{j_1},\ldots,\rho_{j_k}$ with all locations in $\mathbb{N}^3$.

We only have to consider $\xi_k=\mathbf{u}_kG_k\mathbf{v}_k$ with $k\in\overline{J}$.
We hope to prove that there exists some $r\in\mathbb{N}$ that is not too large such that
\begin{eqnarray}\label{gigkj}
p_k(\widehat{\mathbf{m}}_k+r\widehat{\mathbf{m}}^0_k) &\xrightarrow{G_k}&
q_k(\widehat{\mathbf{n}}_k+r\widehat{\mathbf{n}}^0_k).
\end{eqnarray}
Since $\xi_k$ is pumpable, there is a circular walk $\psi_k$ from $p_k(\widehat{\mathbf{m}}_k)$ to some $p_k(\widehat{\mathbf{m}}')$ such that $\widehat{\mathbf{m}}'(i)>\widehat{\mathbf{m}}_k(i)$ for all $i\in[3]$ satisfying $\mathbf{u}_k(i)\ne\omega$.
Symmetrically there is a circular walk $\varphi_k$ from some $q_k(\widehat{\mathbf{n}}')$ to $q_k(\widehat{\mathbf{n}}_k)$ such that $\widehat{\mathbf{n}}'(i)>\widehat{\mathbf{n}}_k(i)$ for all $i\in[3]$ satisfying $\mathbf{v}_k(i)\ne\omega$.
Let $r_0$ be large enough such that $r_0\widehat{\phi}^0_k\ge\wp(\psi_k)+\wp(\varphi_k)$.
Let $\theta_k$ be the cycle defined by $r_0\widehat{\phi}^0_k-\wp(\psi_k)-\wp(\varphi_k)$, and let $\varpi_k$ be a path admitted by $\widehat{\phi}_k$.
What we have constructed is a path $\psi_k\varpi_k\theta_k\varphi_k$.
We can lift the unbounded entries of $\mathbf{u}_k,\mathbf{v}_k$ as large as necessary so that both $\psi_k$ and $\varphi_k$ are in the first octant.
But the path $\varpi_k$ and the cycle $\theta_k$ may go out of the first octant.
However we can repeat $\psi_k\theta_k\varphi_k$ for some $r_1$ times so that both $\varpi_k$ and $\theta_k$ stay completely in the first octant.
Let $r=r_0{\cdot}r_1$.
Then
\[
p_k(\widehat{\mathbf{m}}_k+r\widehat{\mathbf{m}}^0_k) \xrightarrow{(\psi_k)^r\varpi_k(\theta_k)^r(\varphi_k)^r}
q_k(\widehat{\mathbf{n}}_k+r\widehat{\mathbf{n}}^0_k).
\]
Let $\pi_k=(\psi_k)^r\varpi_k(\theta_k)^r(\varphi_k)^r$.
Since $(\theta_k)^r$ starts and ends in the first octant, we can choose $r_1$ so that $(\theta_k)^r$ stays completely in the first octant.
We have therefore proved~(\ref{gigkj}).
Clearly $|\pi_k|\leq |\widehat{\mathbf{h}}|+r|\widehat{\mathbf{h}}^0|$.

In summary there exists a walk $\pi=\pi_1\ldots\pi_n$ that satisfies
\begin{itemize}
\item $\Delta(\pi_j)=\sum_{t\in T_j}\widetilde{\phi}_j(t)$ for all $j\in[n]_0$, and
\item $p_j(\widetilde{\mathbf{m}}_j)\xrightarrow{\mathbf{a}_{j+1}}q_{j+1}(\widetilde{\mathbf{n}}_{j+1})$ for all $j\in[n\,{-}\,1]_0$.
\end{itemize}
By Lemma~\ref{pottier-lemma}, one may choose $r$ satisfying $r\leq |\xi|^{O(|\xi|)}$.
Thus $|\pi|<|\xi|^{O(|\xi|)}$.
\end{proof}

\subsection{The Algorithm $3$-\textsc{Klmst}}

We have shown that the $3$-normal KLM sequences have bounded witness.
In this section we propose a nondeterministic algorithm that can transform a $3$-dimensional KLM sequence $\xi$ to a $3$-normal KLM sequence such that $\xi$ has a witness if and only if a successful execution of the nondeterministic algorithm produces a $3$-normal KLM sequence.
The idea of the algorithm is simple.
Given a $3$-dimensional KLM sequence $\xi$ as in~(\ref{2023-04-17}), we apply the standardization, decomposition and reduction to $\xi$.
Whenever an effectively $2$-dimensional $\xi_k=\mathbf{u}_kG_k\mathbf{v}_k$ is constructed, the algorithm guesses a linear path scheme $\rho\in\mathscr{L}_c(G_k)$, where $c$ depends on $G_k$, and substitutes $\rho$ for $\xi_k$.
Let's continue to write $\xi_k$ for the $k$-th component after the replacement.
The key of our approach is that no further treatment to $\xi_k$ is then necessary.
The reason is that if $\mathbf{w}$ is a witness to $\xi$ then, for some choice of $\rho$, $\mathbf{w}$ is also a witness to the new KLM sequence.
So reachability is preserved in the sense of nondeterministic algorithm.

The formal definition of the algorithm, named $3$-\textsc{Klmst}, is given below, assuming that $\xi$ is the input of the form~(\ref{2023-04-17}).
\begin{enumerate}
\item If $\xi$ is not satisfiable, abort.
\item If $\xi$ is not standard, turn it nondeterministically to a standard KLM sequence $\xi'$.
\item For each $j'$, if $\xi_{j'}'$ is bounded, replace it nondeterministically by a decomposition of $\xi_{j'}'$.
\item For each $j'$, if $\xi_{j'}'$ is nonpumpable, replace it nondeterministically by a reduction of $\xi_{j'}'$.
\item Replace every effectively $2$-dimensional component $\xi_{j''}''$ of the new KLM sequence $\xi''$ nondeterministically by a linear path scheme.
\item If the new KLM sequence $\xi'''$ is $3$-normal, output $\xi'''$; otherwise apply $3$-\textsc{Klmst} to $\xi'''$.
\end{enumerate}
If the input $\xi$ has a witness $\mathbf{w}$, then by Lemma~\ref{theroemstandard}, Lemma~\ref{theorembounded} and Lemma~\ref{lemmapumpdecomp}, there exists a successful execution of $3$-\textsc{Klmst}$(\xi)$ that outputs a normal KLM sequence $\xi_{final}$ such that $\mathbf{w}$ is a witness of $\xi_{final}$.
Conversely suppose $\xi_{final}$ is the output of a successful execution of $3$-\textsc{Klmst}$(\xi)$.
By Theorem~\ref{theoremwitnessalmostnormal}, there is a witness $\overline{\mathbf{w}}$ to $\xi_{final}$.
By Lemma~\ref{theroemstandard}, Lemma~\ref{theorembounded} and Lemma~\ref{lemmapumpdecomp}, $\overline{\mathbf{w}}$ is also a witness to the input $\xi$.
Hence the correctness of $3$-\textsc{Klmst}.

The following lemma describes the complexity of $3$-\textsc{Klmst}.
\begin{lemma}\label{theoremcorrect3klmst}
Let $\xi=\mathbf{u}G\mathbf{v}$ be a $3$-dimenisonal KLM sequecne, and let $n$ be the number of transitions in $G$.
If $W_{\xi}\neq \emptyset$, then there exists a function $f(x)\stackrel{\rm def}{=}x^{x}$ such that if $\xi_{final}$ is the output of a successful execution of $3$-\textsc{Klmst}$(\xi)$, then $|\xi_{final}|\leq f^{2n}(|\xi|)$.
\end{lemma}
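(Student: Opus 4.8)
The plan is to separate the bound into two independent estimates: a linear bound on the recursion depth $D$ of $3$-\textsc{Klmst}, and a bound of $f^{2}$ on the size blow-up incurred at each recursion level, where $f(x)=x^{x}$. Composing the per-level bound over the $D$ levels gives $|\xi_{final}|\le f^{2D}(|\xi|)$, so it suffices to prove $D\le n$, which yields the stated bound $f^{2n}(|\xi|)$.

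For the depth bound I would use the leading coordinate of the ranking function $r$ as a progress measure. Let $r_3(\xi)$ denote the number of edges $t$ occurring in a VASS component $G_i$ of $\xi$ with $\dim V_{G_i}(t)=3$; this is the top entry of $r(\xi)$. First, $r_3(\xi)\le n$ for the input, since $G$ has $n$ transitions. Next I would check that $r_3$ never increases along an execution: standardization only restricts each edge to a strongly connected component, so each $V_{G_i}(t)$ can only shrink; decomposition of a bounded component produces, by Lemma~\ref{decomp-2023-05-11}, only effectively $2$-dimensional components, and these (together with the $2$-dimensional components produced by the reduction step) are immediately converted to linear path schemes in step~5 and so contribute nothing to $r_3$. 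The crucial point is that a recursive call is made only when the current sequence fails to be $3$-normal, which by Definition~\ref{definitionalmostnormalklm} means that some effectively $3$-dimensional component is not simultaneously standard, unbounded and pumpable; in the next round any such component is either re-saturated into a good component (in which case the round terminates) or is eliminated by a decomposition or a reduction, which collapses its cycle space and removes its $3$-dimensional edges from the $r_3$-count. Thus every non-terminating round strictly decreases $r_3$, and the number of rounds is at most $r_3(\xi)\le n$, giving $D\le n$.

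For the per-level size bound I would invoke the four constructions, observing that only two of them are genuinely super-polynomial. Standardization (Lemma~\ref{theroemstandard}) and decomposition (Lemma~\ref{theorembounded}) each output a sequence of size at most $s^{s}=f(s)$ from an input of size $s$. The reduction step (Lemma~\ref{lemmapumpdecomp}) has bound $|\xi|^{d+d^d}=|\xi|^{30}$, and the LPS replacement substitutes for each effectively $2$-dimensional component a linear path scheme of size $(|Q|+\|T\|)^{O(1)}$ by Theorem~\ref{theorem3vasslps}; both of these are merely polynomial and are absorbed. Hence one round of $3$-\textsc{Klmst} maps a sequence of size $s$ to one of size at most $f^{2}(s)$, and composing over the $\le n$ rounds yields $|\xi_{final}|\le f^{2n}(|\xi|)$.

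The main obstacle is the depth estimate, and in particular the verification that $r_3$ is the right monovariant. The heart of the matter, and the place where this analysis departs from the general KLMST complexity, is that replacing every newly created effectively $2$-dimensional component by a linear path scheme prevents it from ever being decomposed again, so the lower coordinates $r_2,r_1,r_0$ of the rank can no longer drive the recursion. I expect the delicate part to be confirming that the global character of the composite characteristic system does not sabotage this: decomposing or reducing one component changes the boundedness and pumpability status of the others, so a previously good effectively $3$-dimensional component may turn bad in a later round. One must check that whenever this happens the offending component is still effectively $3$-dimensional, with a stable graph that is only re-saturated rather than re-split, so that its eventual decomposition or reduction genuinely removes edges from $r_3$, and that no operation can resurrect a $3$-dimensional cycle space once it has been collapsed.
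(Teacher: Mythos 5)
Your proposal is correct and follows essentially the same route as the paper: both arguments rest on the observations that (i) after the first standardization the strongly connected $3$-dimensional components are never re-split, (ii) every decomposition or reduction permanently converts a component into effectively $2$-dimensional pieces that are immediately frozen as linear path schemes, so the number of expensive operations is bounded by $2n$, and (iii) each such operation blows up the size by at most $x^x$ while reductions and LPS replacements are only polynomial. Your packaging of the depth bound via the monovariant $r_3$ is a slightly more explicit formalization of the paper's informal count of standardizations and decompositions/reductions, but it is the same argument.
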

\begin{proof}
Let $\xi'=(\mathbf{u}_0G_0\mathbf{v}_0)\mathbf{a}_1(\mathbf{u}_1G_1
\mathbf{v}_1)\mathbf{a}_2\cdots \mathbf{a}_k(\mathbf{u}_kG_k
\mathbf{v}_k)$ be the KLM sequence obtained from $\xi$ by standardization.
Clearly $k\leq n-1$ and $G_i$ is strongly connected for all $i\in[k]_{0}$.
For each $i\in[k]_0$, if $\xi_i=\mathbf{u}_iG_i\mathbf{v}_i$ is bounded then the algorithm $3$-\textsc{Klmst} decomposes $\xi_i$ to some KLM sequence $(\mathbf{u}_{i,0}G_{i,0}\mathbf{v}_{i,0}) \mathbf{a}_{i,1}(\mathbf{u}_{i,1}G_{i,1}\mathbf{v}_{i,1})\mathbf{a}_{i,2} \cdots \mathbf{a}_{i,i_k}(\mathbf{u}_{i,i_k}G_{i,i_k}\mathbf{v}_{i,i_k})$.
By Lemma~\ref{decomp-2023-05-11}, all of $(\mathbf{u}_{i,0}G_{i,0}\mathbf{v}_{i,0}),(\mathbf{u}_{i,1}G_{i,1}\mathbf{v}_{i,1}), \cdots,(\mathbf{u}_{i,i_k}G_{i,i_k}\mathbf{v}_{i,i_k})$ are effectively $2$-dimensional.
For each $i\in[k]_0$, if $\xi_i=\mathbf{u}_iG_i\mathbf{v}_i$ is nonpumpable then the algorithm $3$-\textsc{Klmst} reduces $\xi_i$ to some $\mathbf{u}_i'G_i'\mathbf{v}_i'$.
By the definition of reduction, $\mathbf{u}_i'G_i'\mathbf{v}_i'$ is $2$-dimensional.
Since for all $i\in[k]_0$ the graph $G_i$ is strongly connected, further standardizations do not increase the number of the VASSes that have not been replaced by linear pathe schemes.
It follows that both the number of the standardizations and the number of the decompositions/reductions are bounded by $n$.
By Lemma~\ref{theroemstandard}, Lemma~\ref{theorembounded} and Lemma~\ref{lemmapumpdecomp}, every time a saturation operation or a decomposition/reduction operation is carried out, the size of the KLM sequence gets expanded by at most a ratio of $x^{x}$.
We conclude that $|\xi_{final}|\leq f^{2n}(|\xi|)$.
\end{proof}

The function $f^{2n}(x)$ is a tower function.
With this tower function in hand, we can design a simple-minded nondeterministic algorithm for $\mathbb{VASS}^3$ as follows: Upon receiving a $3$-dimensional KLM sequence $\xi$ of the form~(\ref{2023-04-17}), guess a $3$-dimensional KLM sequence $\xi'$ from $\mathbf{m}_0$ to $\mathbf{n}_0$ such that $\mathbf{m}_0\sqsubseteq\mathbf{u}_0$ and $\mathbf{n}_0\sqsubseteq\mathbf{v}_0$ and $|\xi'|<f^{2|\xi|}(|\xi|)$; accept if $\xi'$ is $3$-normal, reject otherwise.

We have proved Theorem~\ref{MAIN-THEOREM}.

\section{Conclusion}\label{sectionconclusion}

Our algorithm for the 3-VASS reachability problem applies the KLMST algorithm to convert an input KLM sequence nondeterministically to a KLM sequence so that every VASS in the output KLM sequence is either 3-normal or effectively $2$-dimensional.
The key property the algorithm relies on is that in the lower dimension, the length of a witness is bounded by a function on the size of graph and is independent of the first and the last locations.
The dimension reduction methodology ought to be instructive to the complexity theoretical study of the fixed dimension VASS reachability.

The best known lower bound for the $3$-VASS reachability problem is \textbf{PSPACE}-hard.
There is a huge gap between the currently known lower bound and upper bound.
It remains an open problem whether reachability in $3$-VASS is elementary.
But it seems unlikely that a more careful analysis of $3$-\textsc{Klmst} can obtain such a result.
Further research is necessary to narrow the gap.

\bibliography{vas}

\end{document}